\documentclass{article}
\usepackage{arxiv}

\usepackage{hyperref}
\usepackage{url}
\usepackage{algorithm}
\usepackage{algpseudocode}
\usepackage{graphicx}
\usepackage{amsmath,amsfonts,amsthm}
\usepackage{amssymb}
\usepackage{thmtools, thm-restate}
\usepackage{comment}
\usepackage{xcolor}
\usepackage{booktabs}
\usepackage{siunitx}

\newtheorem{lemma}{Lemma}

\newtheorem{remark}{Remark}
\newtheorem{corollary}{Corollary}
\newtheorem{theorem}{Theorem}
\newtheorem{proposition}{Proposition}
\newtheorem{definition}{Definition}
\newtheorem{assumption}{Assumption}

\usepackage[square]{natbib}
\usepackage[title]{appendix}
\usepackage{enumitem}

\newcommand{\x}{\mathbf{x}}

\usepackage{dsfont}
\newcommand{\indicator}[2]{\mathds{1}_{#1}\big({#2}\big)}

\newcommand{\Residue}{\mathcal{O}\Big(\frac{1}{\sqrt{N}}\Big)}

\newcommand{\empMu}[1]{\mathrm{Emp}_{\mu}(#1)}

\newcommand{\bfX}{\mathbf{X}}

\newcommand{\bfU}{\mathbf{U}}

\newcommand{\bfx}{\mathbf{x}}

\renewcommand{\t}{^{\mbox{\tiny\sf T}}}

\newcommand{\X}{\mathcal{X}}
\newcommand{\Y}{\mathcal{Y}}
\newcommand{\M}{\mathcal{M}}

\newcommand{\U}{\mathcal{U}}

\renewcommand{\P}{\mathcal{P}}
\renewcommand{\S}{\mathcal{S}}

\newcommand{\B}{\mathcal{B}}

\newcommand\norm[1]{\left\lVert#1\right\rVert}

\usepackage{pifont}
\newcommand{\abs}[1]{\left\lvert#1\right \rvert}

\newcommand{\expct}[2]{\mathbb{E}_{#2}\left[#1\right]}
\newcommand{\opt}{\mathrm{opt}}
\newcommand{\prop}{\mathrm{prop}}
\newcommand{\dtv}[1]{\mathrm{d}_{\mathrm{TV}} \big( #1 \big)}
    
\newcommand{\pt}[1]{\textcolor{red}{[Panos:] #1}}
\newcommand{\bhavs}[1]{\textcolor{blue}{[Bhavini:] #1}}
\newcommand{\sg}[1]{\textcolor{orange}{[Scott: #1]}}

\usepackage{fancyhdr}

\fancyhfoffset[L]{0cm} 
\fancyhfoffset[R]{0cm} 
\title{\LARGE \bf Robust Mean-Field Games with Risk Aversion and Bounded Rationality
}

\author{
	Bhavini Jeloka%
	\thanks{Bhavini Jeloka is a PhD student with the School of Aerospace Engineering, Georgia Institute of Technology, Atlanta, GA, USA. Email:
		{\tt\small bjeloka3@gatech.edu}}
	\qquad Yue Guan%
	\thanks{Yue Guan is a PhD student with the School of Aerospace Engineering, Georgia Institute of Technology, Atlanta, GA, USA. Email:
		{\tt\small yguan44@gatech.edu}}
 	\qquad Panagiotis Tsiotras%
 	\thanks{Panagiotis Tsiotras is the David \& Andrew Lewis Chair Professor with the School of Aerospace Engineering, Georgia Institute of Technology, Atlanta, GA, USA. Email: {\tt\small tsiotras@gatech.edu}}
}

\begin{document}
\maketitle

\setlength{\abovedisplayskip}{5pt}
\setlength{\belowdisplayskip}{5pt}

\begin{abstract}
Recent advances in mean-field game literature enable the reduction of large-scale multi-agent problems to tractable interactions between a representative agent and a population distribution. 
However, existing approaches typically assume a fixed initial population distribution and fully rational agents, limiting robustness under distributional uncertainty and cognitive constraints.
We address these limitations by introducing risk aversion with respect to the initial population distribution and by incorporating bounded rationality to model deviations from fully rational decision-making agents.
The combination of these two elements yields a new and more general equilibrium concept, which we term the mean-field risk-averse quantal response equilibrium (MF-RQE).
We establish existence results and prove convergence of fixed-point iteration and fictitious play to MF-RQE. 
Building on these insights, we develop a scalable reinforcement learning algorithm for scenarios with large state–action spaces. 
Numerical experiments demonstrate that MF-RQE policies achieve improved robustness relative to classical mean-field approaches that optimize expected cumulative rewards under a fixed initial distribution and are restricted to entropy-based regularizers.
\end{abstract}

\section{Introduction}
Multi-agent decision-making is central to many real-world applications, including traffic control, swarm systems, and multi-robot coordination~\citep{yu2023decentralized, yang2023partially, rizk2018decision, berman2009optimized}.
As the number of agents grows, decision-making complexity increases exponentially, leading to the well-known \emph{curse of dimensionality}, making scalability a fundamental challenge in multi-agent systems.
Mean-field theory offers a principled approach to address this challenge by approximating large-population interactions in the infinite-agent limit~\citep{huang2006large}.
This framework gives rise to mean-field games (MFGs) for non-cooperative settings~\citep{huang2006large, cui2021approximately, guan2022shaping}, mean-field control (MFC) for cooperative problems~\citep{saldi2023partially, sen2019mean}, and more recently mean-field team games (MFTGs) for mixed cooperative–competitive scenarios~\citep{guan2024zero, shao2024reinforcementlearningfinitespace, jeloka2025learninglargescalecompetitiveteam}.

In MFGs, a multi-agent interaction is approximated by a game between a representative agent and the population distribution (mean-field), yielding a mean-field equilibrium (MFE) that approximates Nash equilibria in large but finite populations.
The appeal of MFE lies in its decentralized structure: agents deploy identical open-loop policies with respect to the mean field, avoiding costly or unreliable real-time population feedback.
However, classical MFGs rely on strong assumptions, including a fixed initial population distribution, perfect rationality, and risk neutrality.
In practice, agents operate across varying initial distributions.
Due to the iterative nature of solving MFE, recomputing policies for each such condition is typically computationally infeasible.
These challenges introduce uncertainty in population-level statistics and naturally give rise to risk preferences over a \emph{set} of possible initial distributions--a behavior that is not captured by expected-reward maximization under a single initial condition.
Furthermore, perfect rationality is rarely realistic in practice, as agents exhibit errors, biases, and systematic deviations from optimal behavior~\citep{mckelvey1995quantal, prediction1}.
To address these limitations, both risk aversion and bounded rationality must be incorporated into the mean-field framework.

The recent work of~\citet{mazumdar2025tractable} introduced the risk-averse quantal response equilibrium (RQE) for finite-agent Markov games, but suffers from the curse of dimensionality.
To address the shortcomings associated with large population regimes, in this work we introduce a risk-averse MFG framework that explicitly accounts for uncertainty in initial population distributions and deviations from perfect rationality.
We formulate the problem as a convex risk-averse optimization problem and introduce a new solution concept, the \emph{mean-field risk-averse quantal response equilibrium} (MF-RQE).

Our main contributions are summarized as follows: 1) a novel MFG formulation incorporating both risk aversion over initial distributions and bounded rationality, leading to the MF-RQE; 2) fixed-point iteration and fictitious play algorithms to compute MF-RQE, including function-approximation variants for large state–action spaces; 3) existence and convergence guarantees; 4) extensive experiments demonstrating that MF-RQE policies exhibit substantially improved robustness compared to risk-neutral policies optimized for a single initial distribution.

\section{Related Work}
Risk aversion and bounded rationality have largely been studied in isolation—through risk-averse and quantal response equilibria, respectively—with growing adoption in MARL and learning theory; see~\cite{mazumdar2025tractable} and references therein for Markov games. 
We focus on related work that examines their integration with MFGs.

\textbf{Risk Analysis in MFGs.}
Risk-sensitive mean-field games have been studied in both continuous and discrete-time settings.
Continuous-time analyses~\citep{moon2014linear, tembine2013risk, tembine2015risk, perrin2020fictitious} typically rely on common-noise formulations and differ fundamentally from the finite-horizon, discrete-time framework considered here.
In discrete time, prior work~\citep{saldi2020approximate, cheng2023risk, bonnans2021discrete} often assumes Polish state–action spaces or linear feedback control, whereas our finite state–action formulation admits a standard MDP structure with explicit dynamics and does not require real-time mean-field feedback, enabling scalability with large state-action spaces.
%
%
While our approach is related to exponential utility-based risk-sensitive formulations~\citep{saldi2020approximate}, our framework—unlike theirs—accounts for uncertainty in the mean-field flow itself, induced by uncertainty in the initial distribution, yielding a \emph{log-sum-exp} aggregation over multiple initial distributions.
Finally, all existing risk-sensitive MFG frameworks predominantly rely on Nash equilibrium as the solution concept and do not account for deviations from perfect rationality.

\textbf{Bounded Rationality in MFGs.} 
In MFGs, bounded rationality
has primarily appeared through entropy regularization to compute approximate Nash equilibria~\citep{cui2021approximately, guan2022shaping}, and was explicitly interpreted as bounded rationality in~\citet{eich2025bounded}, where the authors analyzed a stochastically perturbed mean-field game and computed the resulting mean-field quantal response equilibria (QRE).
%
%
%
%
%
Under suitable assumptions, \citet{eich2025bounded} show that bounded rationality yields both more realistic agent behavior and computational advantages, as the resulting QRE can be computed via fixed-point iteration and, under specific noise assumptions, reduces to an entropy-regularized MFG.
Building on these insights, we model bounded rationality in MFGs using a \emph{general class of convex regularizers}, enabling richer behavioral representations beyond entropy regularization while retaining computational tractability.
Moreover, unlike \citet{eich2025bounded}, we explicitly incorporate \emph{risk aversion} into the boundedly rational mean-field framework, thereby capturing agents’ preferences for the initial population distribution.

\textbf{Impact of Initial Distributions.} 
Mean-field games under uncertainty in the initial distribution remain relatively underexplored, with limited theory and few design methodologies offering performance guarantees under risk aversion or modeling error.
\citet{jin2025initialerrortolerantdistributed} propose an initial-error-tolerant distributed mean-field control (IET-DMFC) framework, but it is limited to linear–quadratic models and does not incorporate risk preferences.
\citet{cui2023learning} empirically study sensitivity to initial conditions in the MFC regime using open-loop policies learned via Dec-POMFPPO, but do not provide theoretical robustness guarantees.

\section{Notation} 
We use uppercase letters to denote random variables (e.g., $X$ and $\M$) and lowercase letters to denote their realizations (e.g., $x$ and $\mu$).
We use $[N]$ to denote the discrete set of the natural numbers $\{1, 2, \ldots, N\}$. 
The indicator function is denoted as $\indicator{\cdot}{\cdot}$, such that $\indicator{a}{b} \! = \!1$ if $a\!=\!b$ and $0$ otherwise.
Bold letters denote vectors (e.g., $\mathbf{x}$).
We denote the space of probability measures over a finite set $E$ as $\P({E})$.
%
%
For a finite set $E$, the total variation between two probability measures $\mu, \mu' \in \P({E})$ is given by $\dtv{\mu, \mu'} = \frac{1}{2} \sum_{e\in {E}} \abs{\mu(e) - \mu'(e)} = \frac{1}{2} \norm{\mu - \mu'}_1$.


\section{Problem Formulation}
\subsection{Mean-Field Games}\label{subsec:finite-mfc}
Consider a discrete-time system with $N$ \textit{homogeneous} agents that operates over a finite horizon $T$. 
Let $X^{N}_{i,t} \in \X$ and $U^{N}_{i,t}\in \U$ denote the random variables representing the state and action taken by an agent $i \in [N]$ at time $t$. 
Here, $\X$ and $\U$ are the \textit{finite} individual state and action spaces for each agent, independent of $i$ and $t$.
The joint state and action of the population is denoted as $(\bfX^{N}_t, \bfU^{N}_t)$.
The operator $ \M_t^{N} =\empMu{\bfX^{N}_t}$ denotes the operation $\M^{N}_t(x) = \frac{1}{N} \sum_{i=1}^{N_1} \mathds{1}_x(X^{N}_{i,t})$ for all $x\in\X$ that relates joint states to the corresponding ED.
Notice that $\M^{N}_t(x)$ gives the fraction of agents at state~$x$ at time $t$.
%
%
Thus, the $\mathrm{Emp}$ operator removes agent index information and one \textit{cannot} tell the state of a specific agent~$i$ given only the ED.
We consider weakly-coupled dynamics, where the dynamics of each individual agent is coupled with other agents through the ED. 
For agent $i$, its stochastic transition is governed by the transition kernel $f_t(x_{i,t+1}^{N}\vert x_{i,t}^{N}, u_{i,t}^{N}, \mu^{N}_t)$ where $\mu^{N}_t = \empMu{\bfx_t^{N}}$.
Moreover, each agent receives at each time instant $t$ a reward determined by the reward function $r_t: \X\times \U\times \P(\X)  \to [-R_{\max}, R_{\max}]$.

\begin{assumption} 
    \label{assmpt:lipschitiz-model}
    For all $x\in \X, u\in\U$, $\mu, \mu' \in \P(\X)$, $\nu, \nu' \in \P(\Y)$ and all $t \in \{0,\ldots, T-1\}$,
    there exist constants $L_{f}, L_r> 0$ such that such that $\sum_{x' \in \X} \abs{f_t(x'|x,u,\mu) - f_t(x'|x,u,\mu')}  \leq L_{f}\dtv{\mu, \mu'}$ and $\abs{r_t(x,u,\mu) - r_t(x,u,\mu')}  \leq L_{r}\dtv{\mu, \mu'}$.
\end{assumption}

The above regularity assumptions are standard in the mean-field literature~\citep{cui2021approximately,huang2006large}, and ensure that shared identical policies scale well with population size. Without such an assumption, enforcing identical policies can lead to arbitrarily poor performance; see the counterexamples in~\citet{guan2024zero}.

For each agent $i\in [N]$ consider the following (time-varying) Markov policy $\pi_{i,t}: \U \times \X  \to [0,1]$, where $\pi_{i,t}(u|X^{N}_{i,t})$ is the probability that an agent $i$ selects action $u$ given its state $X_{i,t}^{N}$. 
Note that the policy is open-loop with respect to the ED of the agents.
An individual policy is defined as a time sequence $\pi_{i} = \{\pi_{i, t}\}_{t=0}^{T-1}$.
%
%
Let $\Pi_t$ and $\Pi$ to denote, respectively, the set of individual policies at time $t$ and across all $t$ available to each agent, i.e., $\pi_{i,t}\in \Pi_t$ and $\pi_{i} \in \Pi$ for all $i \in [N]$.
%
%
The performance of each agent $i$ is to maximize its expected cumulative reward given by the following finite-horizon objective function:

\begin{align}\label{eqn:performance-finite}
    J^{N}_i \big(\pi_1, \ldots, \pi_N\big) =
    \mathbb{E}
    \Big[\sum_{t=0}^{T-1} r_t(X^{N}_{i,t}, U^{N}_{i,t}, \M^{N}_t)\Big],
\end{align}
where $\M_t^{N} = \empMu{\bfX^{N}_t}$, and the expectation is with respect to the distribution of all system variables induced by the joint policy $(\pi_1, \ldots, \pi_N)$ and the initial state of each agents is drawn according to $x_{i, 0}\sim \mu_0$.
%

\begin{definition}
For $\epsilon \geq 0$, an $\epsilon$-Nash equilibrium is a tuple $(\pi^*_{1},\ldots, \pi^*_{N})$ such that, for all $i =1,\ldots, N$, $J^{i,N} (\pi_{i},\pi^*_{-i})\leq J^{i,N} (\pi^*_{i}, \pi^*_{-i}) +\epsilon, $ for all $\pi_{i}\in \Pi$.
When $\epsilon=0$, we have the standard Nash equilibrium.
\end{definition}

The optimization problem in~\eqref{eqn:performance-finite} is intractable for large populations, as the joint policy space $\times_{i=1}^{N} \Pi$ grows exponentially with the number of agents.
To address this scalability challenge, we consider the infinite-population limit and assume that agents deploy identical policies—a standard approximation in large-population systems that ensures tractability and robustness, albeit with some performance loss~\citep{shi2012survey, arabneydi2014team}.
We first introduce the class of identical policies.
\begin{definition} [Identical Policy]
    \label{def:identical-policy}
    The  joint policy $\pi = \{\pi_{1},\ldots, \pi_{N}\}$ is an identical policy, if $\pi_{i_1,t} = \pi_{i_2,t}$ for all $i_1,i_2 \in [N]$ and $t \in \{0, 1, \ldots, T-1\}$. 
\end{definition}

We slightly abuse notation and use $\pi$ to denote the identical joint policy, where all agents apply the same policy $\pi$.

\begin{assumption}\label{assmpt:identical-policy}
  All agents follow an identical policy $\pi$.
\end{assumption}
Consequently, we can model the behavior of the entire population as the distribution of the \textit{representative agent}, i.e., the \textit{mean-field.}
We define the mean-field (MF) as the state distribution of a typical agent in an infinite-population limit. 

\begin{definition}[Mean-Field]\label{def:mean-field}
    Given an identical policy $\pi \in \Pi$, the mean-field is defined as the sequence of vectors with initial condition $\mu_0$ that follow deterministic dynamics:
    \begin{equation}\label{eqn:mf-dynamics-local}
        \mu_{t+1} (x') \!= \!\sum_{x \in \X}  \Big[\sum_{u \in \U} f_t(x'|x, u, \mu_t)\pi_t(u|x) \Big] \mu_t(x).
    \end{equation}
\end{definition}
%

We refer to the time sequence $\mu= \{\mu_t\}_{t=0}^T \in \M \triangleq \prod_{t=0}^T \M_t$ induced by some fixed policy $\pi$ for a given initial distribution $\mu_0$ as the \textit{mean-field flow}.
At this infinite-population limit, the optimization problem 
\eqref{eqn:performance-finite} follows:
\begin{align}\label{eqn:performance-infinite}
    J_\mu \big(\pi\big) =
    \mathbb{E}
    \Big[\sum_{t=0}^{T-1} r_t(x_t, u_t, \mu_t) \Big],
\end{align}
where the expectation is with respect to the distribution of all system variables induced by the identical policy $\pi$ and the initial state follows $x_0\sim\mu_0$.
Thus, for a fixed MF flow $\mu$, we obtain an induced \textit{inhomogeneous} MDP with transitions $f_t(x, u, \mu_t)$ and rewards $r_t(x, u, \mu_t)$ with the objective~\eqref{eqn:performance-infinite}.
The standard Q-function associated with this induced MDP with fixed MF flow $\mu$ under the policy $\pi$ at time-step $t$ can be computed as:
\begin{equation}\label{standard-Q}
    Q^\pi_{\mu, t}(x, u) = r_t(x, u, \mu_t) + \sum_{x'\in\X}f_t(x'|x, u, \mu_t)V^\pi_{\mu, t+1}(x'),
\end{equation}
where $Q^\pi_{\mu, T-1}(x, u) \triangleq r_{T-1}(x, u, \mu_{T-1})$.
The value function of the induced MDP at time $t$ is given by $V^\pi_{\mu, t}(x) = \expct{\sum_{t'=t}^{T-1}r_{t'}(x_{t'}, u_{t'}, \mu_{t'}) ~|~ x_t =x} {\pi}$ or alternatively,
\begin{align}\label{standard-V}
   V^\pi_{\mu, t}(x) = \expct{Q^\pi_{\mu, t}(x, u)} {u\sim\pi_t}.
\end{align}

Consequently, the optimal policy computed in~\eqref{eqn:performance-infinite} depends on the MF flow $\mu$. 
We use the operator $\B_\opt :\M \to \Pi$ to denote the mapping from the fixed MF flow to an optimal policy of the induced MDP in~\eqref{eqn:performance-infinite}, i.e., $ \pi^{*} \in \B_\opt \left(\mu\right) \triangleq\arg\max_{\pi\in\Pi}J_\mu \big(\pi\big)$.
When all agents employ the policy $\pi$ of the representative agent, a new MF flow $\mu$ is induced and can be propagated via~\eqref{eqn:mf-dynamics-local} starting from $\mu_0$. 
We use the operator $\B_\prop : \Pi \to \M$ to denote this propagation, that is, $ \mu = \B_\prop \left(\pi \right)$.
Using these operators, we define the mean-field equilibrium (MFE).
\begin{definition}[MFE~\citep{huang2006large}]\label{def:mfe}
    The MFE is defined as a consistent pair $(\pi^*, \mu^*) \in \Pi \times \M$ such that $ \pi^* \in \B_\opt \big( \mu^* \big)$ and $\mu^* = \B_\prop(\pi^*)$.
\end{definition}
Under Assumption~\ref{assmpt:lipschitiz-model}, the existence of such consistent pair can be established through Kakutani's fixed point argument~\citep{cui2021approximately}. 
Furthermore, the optimal identical policy $\pi^*$ applied by all agents in the $N$-player game preserves $\epsilon$-optimality in the finite-population system~\citep{guan2022shaping}.



\subsection{Risk due to Uncertainty in Initial Distributions}\label{subsec:risk-aversion}

%

%
Once determined, the optimal policy $\pi^*$ 
is open-loop and operates independently of the realized mean-field information. 
Consequently, a policy optimized for a specific initial distribution may perform poorly when deployed under a different initial condition.
{To improve robustness, alternatively, one may na\"{\i}vely recompute policies for each initial condition.
However, this would require real-time access to MF information to know which initial condition has been realized, which is often costly or infeasible due to communication, sensing, or privacy constraints. 
Thus, we seek robustness in the class of policies that are open-loop with respect to the MF.}
These considerations highlight the need to explicitly account for uncertainty in the initial distribution as illustrated by the example below.

\textbf{Motivating Example 1:}
Consider a population-level report estimating the proportion of individuals infected by a contagious disease. 
Based on this estimate, individuals choose whether to quarantine or continue normal activities, incurring costs from treatment if infected or from lost income and social isolation if quarantined.
If the report incorrectly indicates zero initial infection due to biased or incomplete data, individuals rationally avoid quarantine, even though the true infection rate is non-zero.
Recomputing the consensus estimate may be prohibitively expensive in terms of operational effort and human resources. 
Consequently, individuals who choose not to quarantine are exposed to an unanticipated risk of infection.

This example highlights the critical importance of accounting for uncertainty in the initial population distribution.
A risk-averse decision-making approach, in which individuals hedge against potential inaccuracies in the reported initial distribution, would have resulted in more robust behavior and mitigated exposure to adverse and unforeseen outcomes.

Formally, consider a finite set $\mathbb{M}$ of initial 
distributions
\footnote{%
Restricting to a finite set of initial distributions simplifies both analysis and computation by reducing expectations to finite sums rather than integrals.
Extending the framework to continuous and compact sets $\mathbb{M}$ is left for future work.
}.
Let the underlying probability distribution over the set of initial mean-field distributions $\mathbb{M}$ be denoted by $\Gamma^*_\mathbb{M} \in \P({\mathbb{M}})$. 
%
%
For a given initial distribution $\mu_0 \in \mathbb{M}$, let the pair $(\mu^*, \pi^*)$ constitute an MFE, where $\mu^*$ denotes the optimal mean-field flow induced by the policy $\pi^*$ under the initial condition $\mu_0$.
%
%
%
%
If the system is initialized with a different $\bar \mu_0 \in \mathbb{M}$, there is no guarantee that the pair $(\bar{\mu}, \pi^*)$ constitutes an MFE, where $\bar{\mu}=\B_\prop(\pi^*)$.
Under the ``unexpected" or erroneous initial condition $\bar{\mu}_0$, following the old $\pi^*$ often leads to suboptimal performance.
%
%

%
A natural response is to optimize expected performance by averaging over initial conditions induced by $\Gamma^*_\mathbb{M}$ but such objectives fail to capture population-level risk preferences.
For example, in the epidemiological setting of Example~1, consider the disease-free distribution $\mu_0^a$ occurring with probability $0.99$ while $\mu_0^b$ representing a small but persistent infected population, occurs with probability $0.01$. 
Optimizing expected performance effectively ignores $\mu_0^b$, despite its potential for severe public-health consequences.
This highlights the inadequacy of expectation-based objectives in the presence of rare but high-impact events and motivates the incorporation of risk aversion.

\textbf{Risk Averse Design.} Introducing risk aversion entails a trade-off between maximizing average performance and achieving reliable outcomes under uncertainty.
In settings with sensor noise, incomplete information, or adversarial perturbations of the initial distribution, such robustness considerations are essential, and agents may prefer conservative policies over purely reward-maximizing ones. 
Accordingly, we seek policies that explicitly minimize risk with respect to uncertainty in the initial MF distribution.

There are several risk metrics that arise in game theory, especially from operations research and finance. 
Following~\cite{mazumdar2025tractable}, we focus on convex measures of risk whose formal definition is given below:
\begin{definition}[Convex Risk Measures]\label{def:convexmetric}
    Let $\mathcal{F}$ be the set of functions mapping from a space of outcomes $\Omega$ to $\mathbb{R}$. A convex measure of risk is a mapping $\rho:\mathcal{F}\rightarrow \mathbb{R}$ satisfying: $(i)$ \emph{Monotonicity:} If $F\le G$ almost surely, then $\rho(F)\ge \rho(G)$, $(ii)$ \emph{Translation Invariance:} If $m\in \mathbb{R}$ then $\rho(F+m)=\rho(F)-m$, and $(iii)$ \emph{Convexity:} For all $\lambda \in (0,1)$, $\rho(\lambda F+(1-\lambda)G)\le \lambda\rho(F)+(1-\lambda)\rho(G)$.
   
\end{definition}

As a result, we say that $F(\omega)$ is ``riskier" than $G(\omega)$ under $\rho$ when the associated evaluation of $F(\omega) \leq G(\omega)$.
Convex risk measures have well-posed, intuitive properties as seen in Definition~\ref{def:convexmetric}.
Furthermore, convex risk measures
admit a dual representation as described in the following theorem.

\begin{theorem}[Dual Representation Theorem~\citep{Risk_overview}]\label{thm:dual-representation}
Let $\mathcal{X}$ be the set of functions mapping from a finite set $\Omega$ to $\mathbb{R}$. Then a mapping $\rho:\mathcal{X}\rightarrow \mathbb{R}$ is a convex risk measure if and only if there exists a penalty function $D:\P(\Omega)\rightarrow (-\infty,\infty]$ such that:
$\rho(X)=\sup_{p\in \P(\Omega)} \mathbb{E}_p[-X]-D(p),$
where $\P(\Omega)$ is the set of all probability measures on $\Omega$. Furthermore, the function $D(p)$ can be taken to be convex, lower-semi-continuous, and satisfy $D(p)>-\rho(0)$ for all $p\in \P({\Omega})$. 

\end{theorem}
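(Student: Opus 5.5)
The plan is to prove the two directions separately, using the fact that $\Omega$ is finite, so $\mathcal{X}\cong\mathbb{R}^{|\Omega|}$ and $\P(\Omega)$ is the probability simplex. Everything then reduces to finite-dimensional convex duality (Fenchel--Moreau).

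\emph{Sufficiency.} Suppose $\rho(X)=\sup_{p}\mathbb{E}_p[-X]-D(p)$ with $D$ not identically $+\infty$. Then $\rho$ is real-valued, because $\mathbb{E}_p[-X]\le \max_\omega|X(\omega)|$ and $D$ will be shown bounded below by $-\rho(0)$. Each map $X\mapsto \mathbb{E}_p[-X]-D(p)$ is affine, nonincreasing in $X$, and satisfies $\mathbb{E}_p[-(X+m)]=\mathbb{E}_p[-X]-m$ since $p$ is a probability measure. Taking the supremum preserves convexity, monotonicity, and the shift by $-m$.

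\emph{Necessity.} Let $\rho$ be a convex risk measure. First, $\rho$ is a finite convex function on $\mathbb{R}^{|\Omega|}$, hence continuous. Define the convex conjugate $\rho^*(Y)=\sup_X \langle Y,X\rangle-\rho(X)$. By Fenchel--Moreau, $\rho=\rho^{**}$, so $\rho(X)=\sup_Y \langle Y,X\rangle-\rho^*(Y)$. Next, identify where $\rho^*$ is finite.
\begin{itemize}
\item \textbf{Sign.} If $Y(\omega_0)>0$ for some $\omega_0$, take $X=\lambda \mathds{1}_{\{\omega_0\}}$ with $\lambda\to\infty$. Monotonicity gives $\rho(X)\le\rho(0)$, so $\rho^*(Y)=\infty$. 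Hence $Y\le 0$ on the effective domain.
\item \textbf{Normalization.} Take $X=m\mathbf{1}$. Translation invariance gives $\langle Y,m\mathbf 1\rangle-\rho(0)+m = m(1+\sum_\omega Y(\omega))-\rho(0)$. This is unbounded in $m\in\mathbb{R}$ unless $\sum_\omega Y(\omega)=-1$.
\end{itemize}
Thus the effective domain consists of $Y=-p$ with $p\in\P(\Omega)$. Setting $D(p)=\rho^*(-p)$ yields $\rho(X)=\sup_p \mathbb{E}_p[-X]-D(p)$. As a conjugate, $D$ is automatically convex and lower semicontinuous. Taking $X=0$ in the definition of $\rho^*$ gives $D(p)\ge -\rho(0)$.

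The strict inequality $D(p)>-\rho(0)$ as written in the statement is the delicate point. The natural bound is $\ge$, and I would verify whether the cited source intends $\ge$. Strictness cannot hold for every $p$ in general: at a maximizer $p$ of the representation at $X=0$, we have $\rho(0)=-D(p)$. I therefore expect to prove the non-strict version and remark on this.

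The main technical step is the identification of the conjugate's domain, i.e., the two unboundedness arguments above. The rest is standard.
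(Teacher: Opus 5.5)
The paper gives no proof of this statement; it is quoted from the cited reference, so there is no in-paper argument to compare against. Your proof is correct and is the standard finite-dimensional argument. Fenchel--Moreau applies because a finite convex function on $\mathbb{R}^{|\Omega|}$ is continuous. Your two unboundedness tests (monotonicity forcing $Y\le 0$, translation invariance forcing $\sum_\omega Y(\omega)=-1$) correctly give $\mathrm{dom}\,\rho^*\subseteq -\P(\Omega)$. Hence $D(p)=\rho^*(-p)=\sup_X\{\mathbb{E}_p[-X]-\rho(X)\}$, which is the minimal penalty function. The textbook proof in F\"{o}llmer--Schied instead works with the acceptance set $\{X:\rho(X)\le 0\}$ and a separating hyperplane. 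That is the same duality unpacked, and it is the version that extends to infinite $\Omega$, where one must also allow finitely additive measures or impose a Fatou-type condition. Your route is shorter because continuity of $\rho$ comes for free here.

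Two small repairs are needed.

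\begin{enumerate}
\item In the sufficiency direction, the phrase ``$D$ will be shown bounded below by $-\rho(0)$'' is circular. That bound was derived for your particular conjugate $D$, not for an arbitrary penalty. What you actually use is that $\rho$ is real-valued by hypothesis. Evaluating the formula at $X=0$ then gives $\inf_p D(p)=-\rho(0)\in\mathbb{R}$. Without $\inf_p D(p)>-\infty$, the formula would give $\rho\equiv+\infty$.
\item Your objection to the strict inequality is right, and it can be sharpened. If $D$ is lower semicontinuous, it attains its infimum $-\rho(0)$ on the compact simplex, so some $p$ has $D(p)=-\rho(0)$. Thus ``convex, lower semicontinuous, and $D(p)>-\rho(0)$ for all $p$'' can never hold simultaneously, and the correct conclusion is $D(p)\ge-\rho(0)$. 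This compactness argument is also what guarantees the maximizer you invoke at $X=0$. The paper's own entropic case shows the failure concretely: with penalty $\frac{1}{\tau}\mathrm{KL}(\cdot\,\|\,\Gamma^*_\mathbb{M})$ one has $\rho(0)=0=D(\Gamma^*_\mathbb{M})$.
\end{enumerate}
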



%
%

%
%
Define the fixed set of mean-field flows $\S_\mathbb{M} = \{\mu^k\}_{k=1}^{|\mathbb{M}|}$ where $\mu^k$ corresponds to the mean-field flow with the initial distribution $\mu^k_0\in \mathbb{M}$.
Uncertainties arising due to the stochasticity in the selection of the initial condition from the fixed set $\mathbb{M}$ result in the following risk-averse objective function at a given time step $t$ for a given state $x$:
\begin{align}\label{mf-flow-risk-averse-V}
    c_t^\pi(x; \S_\mathbb{M}) = \rho_{\Gamma^*_\mathbb{M}}\big(V^\pi_{\mu, t}(x) \big),
\end{align}
where $V^\pi_{\mu, t}(x)$ is computed based on~\eqref{standard-V} for each $\mu_0 \in \mathbb{M}$ and $\rho_{\Gamma^*_{\mathbb{M}}}$ is the convex risk measure for $\Gamma^*_{\mathbb{M}}$, i.e., the uncertainty in the initial distribution.
Furthermore, the mean-field flow $\mu$ is treated as an exogenous signal and is propagated under the given policy $\pi$ using the $\B_{\prop}$ operator. 
%


%
\begin{remark}
{As emphasized earlier, we focus on decentralized policies that depend only on local state information and do not require real-time MF feedback.
Consequently, the risk-averse objective must be formulated and solved at each time step to ensure risk aversion over the entire horizon.
This is necessary because the realized initial MF distribution is unknown (due to the absence of MF feedback), requiring robustness to consistently be maintained over time.}
\end{remark}
Following Theorem~\ref{thm:dual-representation}, we can rewrite~\eqref{mf-flow-risk-averse-V} for the fixed set of mean-field flow $\S_\mathbb{M}$, with $\Omega = \mathbb{M}$ as
\begin{align}\label{eq:dual-rep}
     &c_t^\pi(x; \S_\mathbb{M}) \nonumber =  \rho_{\Gamma^*_\mathbb{M}}(V^\pi_{\mu, t}(x))\nonumber\\
     &\equiv \sup_{\hat{\Gamma}_\mathbb{M}\in\P(\mathbb{M})} -\expct{\pi_t\t(\cdot | x)Q^\pi_{\mu, t}(x, \cdot)}{\mu_0\sim\hat{\Gamma}_\mathbb{M}} -  \frac{1}{\tau}D(\hat{\Gamma}_\mathbb{M}, \Gamma^*_\mathbb{M}).
\end{align}
%
Agents now seek to compute a policy that minimizes the risk in~\eqref{eq:dual-rep}.
%
%
In this form, one can see that in a risk-averse setting, the population imagines that fictitious adversaries seek to
maximize their cost $c_t^\pi(x)$ at each time-step $t$ and state $x$ but are penalized from deviating too far from the realized initial distribution $\Gamma^*_\mathbb{M}$.
As $\tau\to\infty$ the population becomes increasingly risk-averse—and, in the extreme, the initial distribution selection  can be treated as entirely adversarial.
%
For simplicity, we use KL-divergence as the penalty function, 
since this choice admits the following closed form solution for the optimization problem in~\eqref{eq:dual-rep}:
\begin{align*}
    c_t^\pi(x; \S_\mathbb{M})  &=  \frac{1}{\tau}\log \Big(\sum_{k=1}^{|\mathbb{M}|}\Gamma^*_\mathbb{M}(\mu_0^k)\exp^{-\tau\pi_t\t(\cdot | x)Q^\pi_{\mu^k, t}(x, \cdot)}\Big).
\end{align*}
Note that this is an explicit function of $\pi_t(\cdot | x)$ for all states $x\in \X$ and time-steps $t\in\{0, \ldots, T-1\}$.
%

\subsection{Bounded Rationality in Mean-Field Games}

{Definition~\ref{def:mfe} assumes perfect rationality}, requiring agents to compute optimal policies with complete knowledge of the environment and other agents’ behavior.
In practice, such assumptions are often violated due to information, computational, or time constraints, motivating the use of \emph{bounded rationality} to model approximate decision-making.
Indeed, \citet{goeree2003risk} argue that human decision-making and strategic behavior are more accurately captured by a combination of risk aversion and bounded rationality than by either property in isolation. 

Broadly, two predominant approaches have emerged in the literature in terms of quanitfying bounded rationality.
One approach to bounded rationality introduces stochastic perturbations into agents’ decision processes, yielding quantal response policies that can equivalently be obtained by augmenting the objective with a strictly convex regularizer~\citep{mazumdar2025tractable, goeree2016quantal}.
A second approach is based on cognitive hierarchy models~\citep{shao2019multimedia}, which construct a hierarchy of policies where each level best responds to lower-level strategies.
We adopt the former approach, as it integrates naturally with our risk-averse formulation.
Formally, the combined objective in our case of initial distribution-dependent risk takes the form:
\begin{align}\label{eq: BR-RA-V}
    c_t^{\pi, \alpha}(x; \S_\mathbb{M}) = \rho_{\Gamma^*_\mathbb{M}}(V^\pi_{\mu, t}(x)) + \alpha\nu(\pi_t(\cdot | x)),
\end{align}
for all $x\in\X$ and $t\in\{0, \ldots, T-1\}$, where $\nu$ is a suitable (strict) convex regularizer with temperature parameter $\alpha>0$ capturing the degree of bounded rationality.

In summary, a \emph{Risk-Averse Quantal Response Mean-Field Game} (RQ-MFG) is specified by the game tuple: $\langle \X, \U, T, f, r,  \nu,\mathbb{M}, \Gamma^*_\mathbb{M},  D \rangle$.
%
The additional elements $(\mathbb{M}, \Gamma^*_\mathbb{M}, D)$ capture the risk aversion with respect to the initial MF distribution,
while the convex regularizer $\nu$  is associated with bounded rationality. 
%
Thus, by integrating both risk-aversion and bounded rationality, our formulation yields a more realistic modeling framework that helps bridge the gap between theoretical abstractions and observed behavior.

\section{MF-RQE}

Our focus shifts from reward maximization in~\eqref{eqn:performance-infinite} to minimizing the risk defined in~\eqref{eq: BR-RA-V}.
Define $\pi_{-t} = \{\pi_0, \!\ldots\!, \pi_{t-1}, \pi_{t+1}, \!\ldots\!, \pi_{T-1}\}$ as the set of policies for all times other than $t$.
For a fixed set of mean field flows $\S_\mathbb{M}$ originating from the initial distributions $\mu_0\in \mathbb{M}$, we seek to obtain a policy $\pi^{*}\in\Pi$ such that, for all $\pi\in\Pi$, $t\in\{0, 1, \ldots, T-1\}$ and $x\in\X$.
%
\begin{align}\label{eqn:BR-RA-opt}
    c_t^{\pi^{*}, \alpha}(x; \S_\mathbb{M}) &\leq c_t^{(\pi'_t, \pi^*_{-t}), ~\alpha}(x; \S_\mathbb{M})\nonumber\\
    \Rightarrow\pi^*_t(\cdot | x) &= \arg\min_{\pi_t\in\Pi_t} c_t^{(\pi_t, \pi^*_{-t}), ~\alpha}(x; \S_\mathbb{M}).
\end{align}
%

Define the {set operators $\B^{\textrm{RQE}}_\opt :2^\M \to \Pi$ to denote the mapping from the fixed set of mean-field flows $\S_\mathbb{M}\subseteq \M$ or equivalently, $ \S_\mathbb{M}\in 2^\M $, to optimal policies of~\eqref{eqn:BR-RA-opt}}.
In other words, $ \pi^{*} = \B^{\textrm{RQE}}_\opt \left(\S_\mathbb{M}\right)$.
%
When all agents employ a policy $ \pi$ of the representative agent, a new set of mean field flows are induced and can be propagated via~\eqref{eqn:mf-dynamics-local} starting from $\mu_0\in \mathbb{M}$. 
We use the operator $\B^{\textrm{RQE}}_\prop : \Pi \to 2^\M $ to denote this propagation. That is, $\S_\mathbb{M} \in \B^{\textrm{RQE}}_\prop \left(\pi \right)$, corresponding to all $\mu_0\in \mathbb{M}$.
%
Using these operators, we define the notion of the mean-field risk-averse quantal response equilibrium (MF-RQE).

\begin{definition}[MF-RQE]\label{definition:mf-rqe}
    For a fixed set of initial mean-field distributions $\mathbb{M}$, the MF-RQE is defined as the consistent pair $(\pi^{*}_{\textrm{RQE}}, \S^*_{\mathbb{M}})$ such that  {$\pi^*_{\textrm{RQE}} =  \B^{\textrm{RQE}}_\opt\big( \S^*_{\mathbb{M}} \big)$ and $\B^{\textrm{RQE}}_\prop(\pi^*_{\textrm{RQE}})=\S^*_{\mathbb{M}}$},
where  {$\S^*_{\mathbb{M}}$} is the set of mean-field flows corresponding to all the initial conditions $\mu_0\in \mathbb{M}$.
\end{definition}

\begin{proposition}\label{prop:existence-MF-RQE}
    Under Assumptions~\ref{assmpt:lipschitiz-model}-\ref{assmpt:identical-policy}, there exists an MF-RQE $(\pi^*_{\textrm{RQE}}, \S^*_{\mathbb{M}})$.
\end{proposition}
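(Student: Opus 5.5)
We establish existence by a fixed-point argument on the finite-dimensional, compact, convex set of tuples of mean-field flows, $\M^{|\mathbb{M}|}$. Each $\M_t$ is a simplex $\P(\X)$, so $\M^{|\mathbb{M}|}$ is a product of simplices. We consider the composite map $\Phi = \B^{\textrm{RQE}}_\prop \circ \B^{\textrm{RQE}}_\opt$, which sends a tuple $\S_\mathbb{M}=\{\mu^k\}_{k}$ (with $\mu^k_0$ prescribed) to the tuple of flows propagated from each $\mu_0^k\in\mathbb{M}$ under the policy $\B^{\textrm{RQE}}_\opt(\S_\mathbb{M})$. By Definition~\ref{definition:mf-rqe}, any fixed point of $\Phi$ together with its policy is an MF-RQE. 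Since strict convexity of $\nu$ will make the best response single-valued, we aim to apply Brouwer's theorem. Kakutani's theorem, as in the MFE existence argument of \citet{cui2021approximately}, remains a fallback if single-valuedness fails.

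\textbf{Step 1 (well-posedness of $\B^{\textrm{RQE}}_\opt$).} For a fixed $\S_\mathbb{M}$, we solve~\eqref{eqn:BR-RA-opt} by backward induction in $t$. At $t=T-1$, $Q_{\mu^k,T-1}$ does not depend on $\pi$. At earlier $t$, $Q^\pi_{\mu^k,t}(x,\cdot)$ depends only on $\pi_{t+1:T-1}$, which have already been fixed; here the flows are treated as exogenous, as stated after~\eqref{mf-flow-risk-averse-V}. Hence $c_t^{\pi,\alpha}(x;\S_\mathbb{M})$, viewed as a function of $p=\pi_t(\cdot|x)\in\P(\U)$, equals
\[
\tfrac{1}{\tau}\log\Big(\textstyle\sum_k \Gamma^*_\mathbb{M}(\mu^k_0)\, e^{-\tau p\t Q^k}\Big)+\alpha\nu(p).
\]
The first term is a log-sum-exp of affine functions of $p$, hence convex. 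Adding the strictly convex $\alpha\nu$ yields a strictly convex continuous function on a compact convex set. It therefore has a unique minimizer, so $\B^{\textrm{RQE}}_\opt$ is a well-defined single-valued map and the argmin in~\eqref{eqn:BR-RA-opt} is consistent at every $(t,x)$.

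\textbf{Step 2 (continuity of $\B^{\textrm{RQE}}_\opt$), the main obstacle.} We proceed by backward induction. By Assumption~\ref{assmpt:lipschitiz-model}, $r_t$ and $f_t$ are continuous in $\mu_t$. Suppose $\pi_{t+1:T-1}$ and hence $V_{\mu^k,t+1}$ depend continuously on $\S_\mathbb{M}$. Then $Q^k_t$ is continuous in $\S_\mathbb{M}$, and so the objective above is jointly continuous in $(p,\S_\mathbb{M})$. Berge's maximum theorem, combined with uniqueness of the minimizer, gives continuity of $\pi_t(\cdot|x)$ in $\S_\mathbb{M}$. Continuity of $V_{\mu^k,t}=p\t Q^k_t$ then follows, which closes the induction. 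The delicate points are that the horizon is finite (so the induction terminates) and that we need to assume $\nu$ is continuous on the closed simplex. If $\nu$ is only lower semicontinuous, as with entropy-type regularizers, we would instead argue continuity directly using strict convexity and compactness: any limit point of the minimizers must minimize the limit objective, and since that minimizer is unique, the whole sequence converges to it.

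\textbf{Step 3 (continuity of propagation, and conclusion).} For each fixed $\mu_0^k$, the map $\pi\mapsto\mu^k$ defined by~\eqref{eqn:mf-dynamics-local} is a finite composition of products and sums of continuous functions, since $f_t$ is continuous in $\mu_t$. Thus $\B^{\textrm{RQE}}_\prop$ is continuous, and $\Phi$ is a continuous self-map of the nonempty compact convex set $\{\S_\mathbb{M}\in\M^{|\mathbb{M}|}:\mu^k_0\text{ fixed}\}$. Brouwer's theorem then provides a fixed point $\S^*_\mathbb{M}$. Setting $\pi^*_{\textrm{RQE}}=\B^{\textrm{RQE}}_\opt(\S^*_\mathbb{M})$ gives the desired consistent pair, with Assumption~\ref{assmpt:identical-policy} justifying the single representative policy.
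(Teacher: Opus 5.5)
Your proposal is correct and follows essentially the same route as the paper: the best response is single-valued because $\nu$ is strictly convex and each stage depends only on future policies (the paper's Lemma~\ref{lmm:convexity}), continuity is propagated backward in time from Assumption~\ref{assmpt:lipschitiz-model}, and a topological fixed-point theorem is applied on a product of simplices. The differences are cosmetic. You iterate $\B^{\textrm{RQE}}_\prop\circ\B^{\textrm{RQE}}_\opt$ on flow tuples and use Brouwer with Berge's maximum theorem, whereas the paper iterates $\B^{\textrm{RQE}}_\opt\circ\B^{\textrm{RQE}}_\prop$ on $\Pi$ and uses Kakutani with a hand-made closed-graph argument, which reduces to Brouwer anyway because the paper also shows single-valuedness.
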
 

Having motivated and developed the risk-averse quantal response mean-field game (RQ-MFG), we now establish its connection to the corresponding finite-population game comprising $N$ agents. 
In this finite-population setting, risk aversion is defined with respect to uncertainty in the initial state distribution, consistent with the MF formulation, i.e., the joint state $\bfX_{t=0}$ is sampled from $\mu_0$, where $\mu_0\sim\mathbb{M}$.
For each agent $i \in [N]$, the interaction is modeled as a risk-averse Markov game at each time step $t \in {0,\ldots,T-1}$ and state $x \in \mathcal{X}$.
Given a joint policy $\pi = (\pi_1, \ldots, \pi_N)$, the stage cost for agent $i$ is defined as
\begin{align*}
c_t^{\alpha, i, N}(\pi_1,\!\ldots\!,\pi_N; x)
\!=\!\rho_{\Gamma^*_\mathbb{M}}\left( V^\pi_{i,t}(x) \right)
\!+\! \alpha \nu\big(\pi_{i,t}(\cdot|x)\big),
\end{align*}

where $V^\pi_{i,t}(x)$ denotes the value function associated with the finite-population objective~\eqref{eqn:performance-finite} for agent $i$ under the joint policy $\pi$, and where $\rho_{\Gamma^*_\mathbb{M}}$ and $\nu$
%
%
are consistent with~\eqref{eq: BR-RA-V}, together defining a risk-averse and boundedly rational objective at each time step.
The solution concept is defined below.
%


\begin{definition}

For $\epsilon \geq 0$, an $\epsilon$-RQE is a tuple $(\pi^*_1,\ldots, \pi^*_{N})$ such that, for all $x\in\X$, all $t \in \{0,\ldots, T-1\}$, and all agents $i\in[N]$ and for all $\pi_{i, t}\in \Pi_t$, {it follows that 
$c_t^{\alpha, i, N}\big(\pi^*_{i}, \pi^*_{-i}; ~x\big) -\epsilon\leq c_t^{\alpha, i, N}\big((\pi_{i, t}, \pi^*_{i, -t}), \pi^*_{-i}; ~x\big)$.}
For $\epsilon=0$, we have the standard risk-averse quantal-response equilibrium (RQE)~\citep{mazumdar2025tractable}.
\end{definition}

The following theorem shows that solving the limiting RQ-MFG yields an approximate solution to the finite-population RQ-game when the number of agents $N$ is large, thereby  establishing a connection between the infinite-population RQ-MFG and the corresponding finite-population game.
Consequently, this result overcomes the \emph{curse of dimensionality} limitation identified in~\cite{mazumdar2025tractable}.

\begin{theorem}\label{thm:MF-RQE-finite-population}
    Under Assumptions~\ref{assmpt:lipschitiz-model}-\ref{assmpt:identical-policy}, if $(\pi^*_{\textrm{RQE}}, \S^*_{\mathbb{M}})$ is an MF-RQE, then the policy $(\pi^*_{\textrm{RQE}}, \pi^*_{\textrm{RQE}}, \ldots, \pi^*_{\textrm{RQE}})$ is an $\epsilon$-RQE in the finite $N$-agent game.
\end{theorem}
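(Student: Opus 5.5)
The plan is the standard mean-field approximation argument, carried through the risk functional. I want to show that, for every agent $i$, time $t$, state $x$ and unilateral deviation $\pi_{i,t}\in\Pi_t$, the finite-population stage cost $c_t^{\alpha,i,N}$ is uniformly close to its mean-field counterpart $c_t^{\pi,\alpha}(x;\S_\mathbb{M})$. The error should be $\mathcal{O}(1/\sqrt{N})$, up to constants depending on $T$, $L_f$, $L_r$, $R_{\max}$, $|\X|$, $|\U|$. Given this, the MF-RQE optimality condition~\eqref{eqn:BR-RA-opt} transfers to the $N$-agent game with $\epsilon = 2\cdot\mathcal{O}(1/\sqrt{N})$. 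One error is paid at the equilibrium policy and one at the deviated policy. The regularizer $\alpha\nu(\pi_{i,t}(\cdot|x))$ is identical in both costs, so it cancels and needs no approximation.

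\textbf{Step 1: mean-field approximation for each fixed initial distribution.} Fix $\mu_0^k\in\mathbb{M}$ and let all agents other than $i$ play $\pi^*_{\textrm{RQE}}$. Agent $i$ plays either $\pi^*_{\textrm{RQE}}$ or $(\pi_{i,t},\pi^*_{\textrm{RQE},-t})$. I will show by induction on $t$ that $\mathbb{E}[\dtv{\M^N_t,\mu^k_t}]\le C_t/\sqrt{N}$, where $\mu^k$ is the mean-field flow generated by $\pi^*_{\textrm{RQE}}$ from $\mu_0^k$.

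The base case follows from i.i.d.\ sampling of the initial states: $\mathbb{E}\|\M^N_0-\mu_0^k\|_1\le \sqrt{|\X|/N}$. For the inductive step, I split $\M^N_{t+1}-\mu^k_{t+1}$ into three pieces. The first is a martingale-type sampling term, bounded by $\sqrt{|\X|/N}$ via conditional independence of the transitions. The second is the propagation of the previous error through~\eqref{eqn:mf-dynamics-local}, which uses the Lipschitz constant $L_f$ of Assumption~\ref{assmpt:lipschitiz-model}. The third is the contribution of the single deviating agent, which is at most $\mathcal{O}(1/N)$.

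Then, running the backward recursion~\eqref{standard-Q}--\eqref{standard-V} together with $L_r$, $L_f$ and $|V|\le TR_{\max}$, I get
\[
\bigl|V^{\pi}_{i,t}(x)-V^{\pi}_{\mu^k,t}(x)\bigr|\le \frac{K}{\sqrt{N}}
\]
for agent $i$'s value conditioned on $\mu_0^k$. This holds uniformly over $x$, $t$, $k$, and over deviations of agent $i$ alone, since the deviator does not change the limiting flow.

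\textbf{Step 2: Lipschitz continuity of the risk measure.} This step uses the properties in Definition~\ref{def:convexmetric}. If $|F(\omega)-G(\omega)|\le\delta$ for all $\omega\in\mathbb{M}$, then $F\ge G-\delta$, so monotonicity and translation invariance give $\rho(F)\le\rho(G)+\delta$, and symmetrically $\rho(G)\le\rho(F)+\delta$. Hence $\rho$ is 1-Lipschitz in the sup norm. The KL closed form (log-sum-exp) also makes this explicit. Since $\mathbb{M}$ is finite, the uniform bound from Step 1 applies componentwise, so $|\rho_{\Gamma^*_\mathbb{M}}(V_{i,t}(x))-\rho_{\Gamma^*_\mathbb{M}}(V_{\mu,t}(x))|\le K/\sqrt{N}$.

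\textbf{Step 3: chain the inequalities.} For any deviation $\pi_{i,t}$,
\[
c^{\alpha,i,N}_t(\pi^*;x)\le c^{\pi^*,\alpha}_t(x;\S^*_\mathbb{M})+\tfrac{K}{\sqrt N}\le c^{(\pi_{i,t},\pi^*_{-t}),\alpha}_t(x;\S^*_\mathbb{M})+\tfrac{K}{\sqrt N}\le c^{\alpha,i,N}_t((\pi_{i,t},\pi^*_{i,-t}),\pi^*_{-i};x)+\tfrac{2K}{\sqrt N}.
\]
The middle inequality is~\eqref{eqn:BR-RA-opt} with $\S^*_\mathbb{M}$ held fixed, which is legitimate because a single agent's deviation does not alter the mean-field flow. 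This gives $\epsilon=2K/\sqrt{N}\to0$.

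\textbf{Main obstacle.} I expect the main obstacle to be Step 1, specifically controlling the $N$-agent value of a deviating agent conditioned on the state $x$ at time $t$. The empirical distribution is then correlated with agent $i$'s own trajectory, and the conditioning event may have small probability. To handle this, I would bound the flow error in expectation conditioned on $X_{i,t}=x$. I would remove agent $i$'s influence by comparing with the empirical measure of the other $N-1$ agents, whose deviation from $\mathrm{Emp}$ costs only $\mathcal{O}(1/N)$ in total variation. Those $N-1$ agents are independent of agent $i$ given the initial draw, and this yields the uniform constant $K$.
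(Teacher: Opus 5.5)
Your overall architecture is the same as the paper's: approximate the empirical flow for each $\mu_0^k$ with a single deviator, transfer through the risk measure, and chain against the MF-RQE optimality condition with $\S^*_\mathbb{M}$ held fixed. Steps 2 and 3 are sound. Your derivation that $\rho$ is $1$-Lipschitz in the sup norm from monotonicity and translation invariance is cleaner than the paper's appeal to continuity of the log-sum-exp (Corollary~\ref{cor:continuity-cost}).

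\textbf{The gap} is at the point you yourself flagged, and your proposed fix does not work. The other $N-1$ agents are \emph{not} independent of agent $i$ given the initial draw. Under the weakly coupled kernel $f_t(\cdot\mid x,u,\M^N_t)$, the others see agent $i$ through $\M^N_t$, and agent $i$ sees the others through that same empirical distribution. Removing the first dependence costs only $O(L_f/N)$ per step. The second cannot be removed: $X_{i,t}$ is a function of the others' empirical history plus agent $i$'s private noise. So conditioning on $\{X_{i,t}=x\}$ reweights the others' trajectories by the likelihood ratio $\Pr(X_{i,t}=x\mid\text{history})/\Pr(X_{i,t}=x)$. Lipschitz continuity of $f_t$ controls the numerator only additively, so this ratio is unbounded when $\Pr(X_{i,t}=x)$ is small. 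For instance, if $x$ is reachable only when the empirical distribution is $\delta$-far from $\mu^k$, conditioning selects a large-deviation event on which the empirical flow stays order-$\delta$ away from $\mu^k$. Hence no constant $K$ uniform in $x$ follows, and the per-state $\epsilon$-RQE inequality is not obtained this way.

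The paper sidesteps this through how the finite-population value at $(t,x)$ is set up in its proof. Agent $i$ is \emph{placed} at $x$ at time $t$, while the others' joint state $\mathbf{x}^k_{-i,t}$ keeps its unconditioned law under $\pi$ from $\mu^k_0$. This is an intervention rather than a conditioning. It moves the empirical distribution by at most $1/N$ (the bound \eqref{eq:deviation} inside Lemma~\ref{lmm:finite-deviation}). Lemma~\ref{lmm:pre-finite-dev} then propagates that perturbation forward at cost $O(1/\sqrt{N})$, uniformly in $x$.

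If you adopt this definition of $V^\pi_{i,t}(x)$, your Step 1 goes through: the induction (sampling term, $L_f$-propagation, $O(1/N)$ deviator term) and the backward recursion for the values only need pointwise Lipschitz bounds, with no conditioning on rare events. The rest of your argument then closes the proof.
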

\section{Learning MF-RQE}
Several algorithms have been proposed to compute Nash equilibria in MFGs, including fixed-point iteration (FPI), fictitious play (FP), and online mirror descent (OMD)~\citep{cui2021approximately, perolat2021scalingmeanfieldgames, lauriere2022scalable}. 
These methods typically rely on restrictive assumptions such as monotonicity, mean-field–independent dynamics, risk neutrality, or specific entropy-based regularization, which apply to a restricted class of mean-field dynamic models and do not consider population-level risk preference.
In contrast, our framework combines bounded rationality—implicitly inducing convex regularization—with explicit risk modeling via convex risk measures, enabling richer dynamics and robustness to uncertainty in the initial mean-field distribution. 
The resulting solution concept, the MF-RQE, arises naturally from this formulation. 
%
In this section, we extend FPI and FP to compute MF-RQE under known dynamics and rewards. 
For large-scale or partially unknown environments, we propose a scalable actor–critic-based RL approach that learns MF-RQE policies from data samples.

\textbf{Fixed Point Iteration.} We compute the MF-RQE for a prescribed degree of bounded rationality $\alpha$ and risk aversion $\tau$ as defined in~\eqref{eq: BR-RA-V}. 
To this end, we employ a fixed-point iteration (FPI) scheme that computes the MF-RQE policy $\pi^{\star}_{\textrm{RQE}}$ as a fixed point of the composite operator $\B^{\textrm{RQE}}_\opt\circ \B^{\textrm{RQE}}_\prop$, 
given a set of initial MF distributions $\mathbb{M}$ and their associated probabilities $\Gamma^*_{\mathbb{M}}$.
The algorithm is initialized with an arbitrary policy, typically uniform action selection, i.e., $\pi_t(u|x) \!=\! 1 / |\mathcal{U}|$.
At each iteration, the operator $\mathcal{B}^{\textrm{RQE}}_\opt \circ \mathcal{B}^{\textrm{RQE}}_\prop$ is applied to update the policy, and this process is repeated until convergence to a fixed point is achieved.
For a fixed MF flow, the operator $\mathcal{B}^{\textrm{RQE}}_{\mathrm{opt}}(\cdot)$ can be computed efficiently via dynamic programming. 
%


We impose the convexity assumption on the regularization function $\nu$ to facilitate the convergence analysis.

\begin{assumption}\label{assmpt: strongly-convex-nu}
$\nu(\pi)$ is $m$-strongly convex.
\end{assumption}

The standard entropy regularization satisfies Assumption~\ref{assmpt: strongly-convex-nu} with $m = 1$.
Although strong convexity is required for our theoretical guarantees, empirical evidence suggests that the proposed algorithms continue to perform well under weaker conditions, such as strict convexity of $\nu$.

\begin{theorem}\label{thm:FPI-convergence}
    Under Assumptions~\ref{assmpt:lipschitiz-model}-\ref{assmpt: strongly-convex-nu}, FPI converges to an MF-RQE for sufficiently large $\alpha$.
\end{theorem}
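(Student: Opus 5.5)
We show that the composite map $\Phi \triangleq \B^{\textrm{RQE}}_\opt\circ\B^{\textrm{RQE}}_\prop$ is a contraction on $\Pi$ when $\alpha$ is large, and then apply Banach's fixed point theorem. We equip $\Pi$ with the metric $d(\pi,\pi')=\max_{t}\max_{x}\norm{\pi_t(\cdot|x)-\pi'_t(\cdot|x)}_1$. Since $\Pi$ is a finite product of simplices, it is complete under this metric. A fixed point $\pi^*=\Phi(\pi^*)$ together with $\S^*_\mathbb{M}=\B^{\textrm{RQE}}_\prop(\pi^*)$ is, by Definition~\ref{definition:mf-rqe}, exactly an MF-RQE. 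Banach's theorem then gives convergence of the FPI iterates from any initialization, including the uniform one, at a geometric rate.

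\textbf{Step 1: Lipschitz continuity of propagation.} Take two policies $\pi,\pi'$ and a fixed $\mu_0^k\in\mathbb{M}$, and write $\mu^k,\mu'^k$ for the induced flows. We bound $\dtv{\mu^k_{t+1},\mu'^k_{t+1}}$ from the recursion~\eqref{eqn:mf-dynamics-local}, adding and subtracting terms and using Assumption~\ref{assmpt:lipschitiz-model}. This gives
\begin{equation*}
\dtv{\mu^k_{t+1},\mu'^k_{t+1}}\le \bigl(1+\tfrac{L_f}{2}\bigr)\dtv{\mu^k_t,\mu'^k_t}+\tfrac12 d(\pi,\pi').
\end{equation*}
Unrolling from $\dtv{\mu^k_0,\mu'^k_0}=0$ yields $\max_{t,k}\dtv{\mu^k_t,\mu'^k_t}\le K_\mu\, d(\pi,\pi')$, where $K_\mu$ depends only on $T$ and $L_f$. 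The bound is uniform over the finitely many $k$.

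\textbf{Step 2: Lipschitz continuity of the regularized best response.} This is the main obstacle. By~\eqref{eqn:BR-RA-opt}, $\B^{\textrm{RQE}}_\opt$ is computed by backward induction. At time $t$ and state $x$, the policy $\pi_t(\cdot|x)$ minimizes
\begin{equation*}
g(p)=\tfrac1\tau\log\sum_k\Gamma^*_\mathbb{M}(\mu_0^k)\,e^{-\tau p\t Q_{\mu^k,t}(x,\cdot)}+\alpha\nu(p)
\end{equation*}
over the simplex, with $Q_{\mu^k,t}$ built from the already-computed future policy. The first term is the log-sum-exp of affine functions of $p$, so it is convex. Hence $g$ is $\alpha m$-strongly convex by Assumption~\ref{assmpt: strongly-convex-nu}, and the minimizer is unique, so $\Phi$ is well defined. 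The first term's gradient is a convex combination of the vectors $-Q_{\mu^k,t}(x,\cdot)$, and this gradient is Lipschitz in the collection $\{Q_{\mu^k,t}\}_k$, with a constant depending on $\tau$ and $T R_{\max}$. The standard perturbation argument for strongly convex problems is to add the two variational inequalities. It gives
\begin{equation*}
\norm{p^*-p'^*}\le \frac{C_\tau}{\alpha m}\max_k\norm{Q_{\mu^k,t}-Q_{\mu'^k,t}}_\infty.
\end{equation*}
Here the norm on $p$ must be matched to the norm in which $\nu$ is strongly convex, and converting to $\ell_1$ introduces a factor of $\sqrt{|\U|}$, which is absorbed into the constants. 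The delicate part is that $Q_{\mu^k,t}$ depends on the future best-response policies $\pi^*_{t+1:T-1}$ as well as on the flow. We therefore do backward induction on $t$. Let $\delta_t$ be the policy gap at time $t$ and $\Delta_\mu$ the flow gap from Step 1. From~\eqref{standard-Q} and Assumption~\ref{assmpt:lipschitiz-model}, the $Q$-gap at $t$ is at most $c_1\Delta_\mu+c_2\sum_{s>t}\delta_s$. Combining with the perturbation bound gives $\delta_t\le\frac{C}{\alpha m}\bigl(\Delta_\mu+\sum_{s>t}\delta_s\bigr)$. For $\alpha\ge C$, a discrete Gronwall argument yields $\max_t\delta_t\le\frac{C'}{\alpha m}\Delta_\mu$, where $C'$ depends only on $T,R_{\max},L_f,L_r,\tau,|\U|$.

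\textbf{Step 3: Conclusion.} Combining the two steps, $d(\Phi(\pi),\Phi(\pi'))\le \frac{C'K_\mu}{\alpha m}\,d(\pi,\pi')$. This is a contraction once $\alpha>C'K_\mu/m$, which is the ``sufficiently large $\alpha$'' in the statement. Banach's theorem then gives a unique fixed point and convergence of the FPI. As a consistency check, this reproduces the existence guaranteed by Proposition~\ref{prop:existence-MF-RQE} and adds uniqueness in this regime.
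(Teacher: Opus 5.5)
Your proposal is correct and follows essentially the same route as the paper. Both arguments combine a Lipschitz bound for $\B^{\textrm{RQE}}_\prop$ with a $1/(\alpha m)$-Lipschitz bound for $\B^{\textrm{RQE}}_\opt$, obtained from strong convexity plus backward induction over $t$, and then apply Banach's fixed point theorem to $\Phi=\B^{\textrm{RQE}}_\opt\circ\B^{\textrm{RQE}}_\prop$. Two of your steps are, if anything, more careful than the paper's: your variational-inequality perturbation argument over the simplex, where the paper sets $\nabla_\pi f=0$, and your accounting for the $Q$-dependence of the softmax weights in the log-sum-exp gradient, where the paper treats the weights $\beta_k$ as fixed.
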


\textbf{Fictitious Play.} An alternative method for computing Nash equilibria is the fictitious play algorithm.
It utilizes averaging of policies or distributions to achieve convergence~\citep{cardaliaguet2015learningmeanfieldgames}.
\citet{perrin2020fictitious} establish convergence of fictitious play under monotonicity and MF–independent transition dynamics. 
\citet{eich2025bounded} later proposed Generalized Fictitious Play, a discrete-time extension and empirically demonstrate convergence to a regularized equilibrium in the presence of MF-dependent transition dynamics.
Motivated by these results, we extend fictitious play to the RQ-MFG setting and introduce \emph{Risk-Averse Quantal Fictitious Play} (RQ-Fictitious Play), which computes an MF-RQE via policy averaging. 
%
%

\begin{theorem}\label{thm:Fictitious-Play-convergence}
    Under Assumptions~\ref{assmpt:lipschitiz-model}-\ref{assmpt: strongly-convex-nu}, RQ-Fictitious Play converges to an MF-RQE for sufficiently large $\alpha$.
\end{theorem}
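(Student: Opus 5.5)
We prove Theorem~\ref{thm:Fictitious-Play-convergence} by showing that the averaged policy sequence generated by RQ-Fictitious Play is driven by the same composite map whose contraction underlies Theorem~\ref{thm:FPI-convergence}. Concretely, we take RQ-Fictitious Play to be the iteration
\begin{align*}
\bar\pi^{k+1} = \big(1-\beta_k\big)\bar\pi^{k} + \beta_k\, \B^{\textrm{RQE}}_\opt\big(\B^{\textrm{RQE}}_\prop(\bar\pi^k)\big),
\end{align*}
with $\beta_k = 1/(k+1)$, applied pointwise in $(t,x)$ as a convex combination on each simplex $\P(\U)$. We work with the metric $d(\pi,\pi') = \max_{t}\max_{x}\norm{\pi_t(\cdot|x)-\pi'_t(\cdot|x)}_1$.

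\textbf{Step 1 (Lipschitz continuity of propagation).} Using Assumption~\ref{assmpt:lipschitiz-model} and the mean-field dynamics~\eqref{eqn:mf-dynamics-local}, we show by induction on $t$ that for each $\mu_0\in\mathbb{M}$,
\begin{align*}
\dtv{\mu_t,\mu'_t} \le C_t\, d(\pi,\pi'),
\end{align*}
with $C_t$ depending only on $L_f$ and $T$. This bound is uniform over the finite set $\mathbb{M}$, so $\B^{\textrm{RQE}}_\prop$ is Lipschitz in a max-over-flows metric.

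\textbf{Step 2 (Lipschitz continuity of the best response).} For a fixed set of flows, the backward dynamic program solves at each $(t,x)$ the problem $\min_{p\in\P(\U)} g(p) + \alpha\nu(p)$, where $g$ is the log-sum-exp of the Q-values. The function $g$ is smooth, with gradient entries bounded by $T R_{\max}$ because the weights are a softmax. By Assumption~\ref{assmpt: strongly-convex-nu} the objective is $\alpha m$-strongly convex, so the minimizer is unique and satisfies
\begin{align*}
\norm{p^*-p'^*} \le \frac{1}{\alpha m}\,\sup\norm{\nabla g - \nabla g'}.
\end{align*}
The perturbation in $\nabla g$ has two sources. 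The first is the change in the Q-functions caused by changes in the flows and in later-time policies; this is controlled via $L_r$, $L_f$, and backward induction. The second is the change in the softmax weights, whose Lipschitz constant involves $\tau$. Combining these, $\B^{\textrm{RQE}}_\opt\circ\B^{\textrm{RQE}}_\prop$ has Lipschitz constant $K/\alpha$, with $K$ independent of $\alpha$. This constant is strictly less than $1$ for $\alpha > K$, and this is exactly the contraction established for Theorem~\ref{thm:FPI-convergence}, which we reuse.

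\textbf{Step 3 (averaging preserves convergence).} Let $\Phi$ denote the contraction with constant $L<1$ and fixed point $\pi^*$. We bound
\begin{align*}
d(\bar\pi^{k+1},\pi^*) \le (1-\beta_k)\, d(\bar\pi^k,\pi^*) + \beta_k\, d(\Phi(\bar\pi^k),\Phi(\pi^*)) \le \big(1-\beta_k(1-L)\big)\, d(\bar\pi^k,\pi^*),
\end{align*}
which uses convexity of the norm. Since $\sum_k \beta_k = \infty$, the product $\prod_k\big(1-\beta_k(1-L)\big)$ tends to $0$, so $\bar\pi^k\to\pi^*$. By continuity, the associated flows $\B^{\textrm{RQE}}_\prop(\bar\pi^k)$ converge to $\S^*_{\mathbb{M}}$, and the limit pair is an MF-RQE by Definition~\ref{definition:mf-rqe}.

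\textbf{Variant and main obstacle.} If the algorithm instead averages the flows, as in classical fictitious play, we treat the iteration on $\M^{|\mathbb{M}|}$. There we would use the composite $\B^{\textrm{RQE}}_\prop\circ\B^{\textrm{RQE}}_\opt$ together with convexity of $\M_t$, and run the same argument. The main obstacle is Step 2. The best response at time $t$ depends on future policies through $Q^\pi_{\mu,t}$, so the Lipschitz estimate must be carried backward in time without the constant blowing up in a way that depends on $\alpha$. We first handle this by the triangular structure of backward induction: for $\alpha$ large, each stage contributes a factor $O(1/\alpha)$ times the errors at later stages plus the flow error, and the resulting constants can be absorbed by choosing $\alpha$ sufficiently large.
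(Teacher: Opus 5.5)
Your argument is correct, and its backbone is the same as the paper's: everything reduces to the contraction $\Phi=\B^{\textrm{RQE}}_\opt\circ\B^{\textrm{RQE}}_\prop$ with constant $\kappa=CK_\prop/(\alpha m)<1$ from Theorem~\ref{thm:FPI-convergence}. Your Steps~1--2 re-derive Lemmas~\ref{lmm:prop-lipschitz} and~\ref{lmm:opt-lipschitz}. You differ in the averaging step. The paper lifts the algorithm to a coupled system $Y_1(k+1)=\Phi(Y_2(k))$, $Y_2(k+1)=\beta Y_2(k)+(1-\beta)Y_1(k)$ with the max-norm on $(Y_1,Y_2)$. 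It then shows non-expansion over one step and strict decrease over two. You instead eliminate $\pi$ and treat $\bar\pi$ as a Krasnoselskii--Mann iteration, using the fixed-point identity $\pi^*=(1-\beta_k)\pi^*+\beta_k\Phi(\pi^*)$ and convexity of the norm. Your route is shorter and gives an explicit rate. It also follows the pseudocode's sequential update exactly, where $\pi$ is recomputed from the freshly averaged $\bar\pi$; the paper's coupled system updates $Y_1,Y_2$ simultaneously, which introduces a one-step lag. The paper's strict two-step decrease yields convergence only after one extracts the uniform factor $\beta+(1-\beta)\kappa$ implicit in its inequalities. One correction: the RQ-Fictitious Play pseudocode uses a \emph{constant} weight $1-\beta$ on the new best response, not $\beta_k=1/(k+1)$. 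Your estimate covers this case verbatim, since it needs only $\beta_k\in(0,1]$ with $\sum_k\beta_k=\infty$. With the constant weight it gives geometric convergence at rate $\beta+(1-\beta)L$, rather than the $O(k^{-(1-L)})$ decay you get with $1/(k+1)$. State this explicitly so that the theorem you prove is about the paper's algorithm. In Step~2 you are more careful than the paper, which holds the softmax weights fixed when differencing gradients. Derive the $1/(\alpha m)$ bound from the first-order variational inequalities on $\P(\U)$, not from a vanishing gradient, since the minimizer is constrained to the simplex. Also note that $K$ is $\alpha$-independent because $Q^\pi_{\mu,t}$ in~\eqref{standard-Q} excludes the regularizer and so stays bounded by $TR_{\max}$.
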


\textbf{Deep Reinforcement Learning.} 
{
We incorporate deep reinforcement learning to extend fixed-point iteration to model-free settings.
Both RQ-FPI and RQ-Fictitious Play require full knowledge of the RQ-MFG game tuple and rely on tabulated value functions, making them impractical when the game model is unknown or too complex to reconstruct from empirical data.
We therefore propose Deep Risk-Averse Quantal Fixed-Point Iteration (D-RQ-FPI), a model-free, function-approximation–based alternative.
}


%
The proposed algorithm maintains $|\mathbb{M}|$ critics $Q(\cdot;\theta_k)$, each corresponding to an initial distribution $\mu_0^k\in\mathbb{M}$, and a time-dependent policy $\pi(t,x; \zeta) := \pi_\zeta(t,x)$. 
Mean-field flows are computed analytically or via stochastic simulation~\citep{cui2021approximately}, and the goal is to learn a best-response policy to the induced flows.
Trajectories are collected by sampling $\mu_0^k \sim \Gamma^*_{\mathbb{M}}$ and rolling out the policy under the corresponding induced mean-field flow.
Each critic is updated using a temporal-difference loss, $\mathcal{L}(\theta_k) = \mathbb{E}_{x,u,x',r}[(Q^{\pi_\zeta}(x, u; \theta_k) - y_k)^2]$ where $y_k = r_k + (1 - d), Q^{\pi_\zeta}(x', u'; \theta'_k)$, $\theta_k'$ denotes target network parameters and $d$ denotes the termination signal.
%
%
Given the updated critics, the actor is optimized to solve the risk-averse quantal response objective in~\eqref{eq: BR-RA-V} for each state–time pair $(x,t)$, thereby computing an approximate best response to the current set of MF flows. 
The updated policy induces a new collection of MF flows, and the procedure is repeated until convergence.



\section{Numerical Experiments}

In this section, we evaluate the proposed algorithms on a collection of Risk-Averse Quantal Mean-Field Games (RQ-MFGs). 
Experiments are conducted on benchmark environments from \texttt{MFGLib}~\citep{guo2023mfglib}, which includes the epidemiological SIS game introduced in Section~\ref{subsec:risk-aversion}, as well as a newly proposed one-dimensional congestion game.

Following~\cite{eich2025bounded}, convergence to an MF-RQE is assessed using an exploitability-like metric,
\begin{align}\label{eq:BR-RA-explt}
    \Delta c(\pi)
    \triangleq
    \max_{t \in \mathcal{T}}
    \left\|
    \pi_t
    -
    \B^{\textrm{RQE}}_{\opt}
    \circ
    \B^{\textrm{RQE}}_{\prop}(\pi)_t
    \right\|,
\end{align}
which serves as our primary measure of efficacy of the proposed algorithms.
Smaller values of $\Delta c(\pi)$ indicate reduced sensitivity to adverse realizations of the initial mean-field distribution and the associated underlying probability distribution $\Gamma^*_\mathbb{M}$, and thus a lower degree of exploitability under distributional uncertainty.
Conceptually, this metric assesses whether, for a given policy and its induced set of MF flows, there exists an alternative policy that is \emph{less risky} and \emph{more robust} with respect to uncertainty in those flows, where risk is quantified through the objective in~\eqref{eqn:BR-RA-opt}.

To highlight the benefits of the MF-RQE framework, we compare MF-RQE policies against: (i) entropy-regularized Nash equilibrium (NE) policies~\citep{cui2021approximately} computed under a single initial distribution and evaluated under the risk-averse objective over $\mathbb{M}$, and (ii) a risk-neutral policy {$\pi^{*}_{\textrm{avg}}$ that maximizes expected cumulative reward averaged over initial distributions.}
%
These baselines are evaluated under uncertainty in the realized initial distribution, thereby highlighting the limitations of {risk-neutral policies} when deployed in settings with distributional uncertainty.

Our results show that incorporating risk aversion with respect to the initial mean-field distribution—together with bounded rationality via general convex regularizers—yields policies that are robust to distributional uncertainty and non-exploitable in the sense of~\eqref{eq:BR-RA-explt}. 
In contrast, risk-neutral policies exhibit higher exploitability when the realized initial distribution deviates from the nominal one. 
Policy performance is further evaluated via average returns over 10,000 episodes, aggregated across five random seeds under the optimal mean-field flows $\mathcal{S}^*_{\mathbb{M}}$, illustrating the robustness–return trade-off.
Furthermore, RQ-Fictitious Play recovers the same policies as RQ-FPI across all tested environments.

In the deep RL setting, finite-sample stochasticity prevents exploitability from converging exactly to zero. 
Accordingly, following~\cite{lauriere2022scalable}, we additionally measure convergence via the distance $d_\S(\S^{\textrm{RQ-FPI}}_\mathbb{M}, \S^{\textrm{D-RQ-FPI}}_\mathbb{M})$%
\footnote{
$d_\S(\S^1_\mathbb{M}, \S^2_\mathbb{M}) \!\triangleq\! \max_{k,t}\dtv{\mu^{1,k}_t, \mu^{2,k}_t}$, where $\mu^{i,k}\!\in\! \S^i_\mathbb{M}$.
}
between empirical and analytical mean-field flows obtained via RQ-FPI,
which provides a complementary notion of convergence under sampling noise and function approximation.

\textbf{SIS Game.} The SIS game environment in \texttt{MFGLib} builds on the discussion in Section~\ref{subsec:risk-aversion}. 
%
%
%
In order to reformulate this into an RQ-MFG, we set $\mathbb{M} \!=\! \{\mu_0^1, \mu_0^2, \mu_0^3, \mu_0^4\}$ where $\mu_0^1 \!=\! [1, 0], \mu_0^2 \!=\! [0.5, 0.5], \mu_0^3 \!=\! [0, 1] , \mu_0^4 \!=\! [0.8, 0.2]$ and the probability of these initial distributions as $\Gamma^*_\mathbb{M}(\mu_0^1) = \Gamma^*_\mathbb{M}(\mu_0^4) = 0.3$ and $\Gamma^*_\mathbb{M}(\mu_0^2) = \Gamma^*_\mathbb{M}(\mu_0^3) = 0.2$.

Table~\ref{table:fpi-results} reports the exploitability metrics in~\eqref{eq:BR-RA-explt} and expected returns of the MF-RQE policy computed using the RQ-FPI algorithm with KL-penalty and entropy-regularization, policies optimized with respect to the expected initial distribution  and $\pi^{*}_{\textrm{avg}}$. 
%
The optimal policy computed using the RQ-Fictitious Play coincides with the solutions obtained from RQ-FPI.
The results demonstrate that the algorithm converges to an MF-RQE policy that is non-exploitable; together with the corresponding mean-field flows, it constitutes an MF-RQE pair $(\pi^{*}_{\textrm{RQE}}, \mathcal{S}^{*}_{\mathbb{M}})$.
Although the policy $\pi^{*}_{\mu_0^1}$ achieves higher expected returns, it is exploitable under uncertainty in the initial mean-field distribution.
In particular, there exists an alternative policy that achieves a lower risk-averse cost~\eqref{eq: BR-RA-V} when evaluated against $\pi^{*}_{\mu_0^1}$ over the set of initial distributions $\mathbb{M}$ and the induced mean-field flows $\B^{\textrm{RQE}}_\prop(\pi^{*}_{\mu_0^1})$. 
%
%
This also highlights the fundamental trade-off between risk-averse policy design and pure expected cumulative reward maximization.
Notably, despite this trade-off, the empirical differences in expected returns across policies remain relatively small.

The results from D-RQ-FPI are presented in Figure~\ref{fig:d-rq-fpi}.
The empirical exploitability decreases over training and falls below $10^{-2}$, while $d_\S(\S^{\textrm{RQ-FPI}}_\mathbb{M}, \S^{\textrm{D-RQ-FPI}}_\mathbb{M}) \approx 10^{-3}$.
These values are consistent in magnitude with those reported in standard MFGs~\citep{laurière2022scalabledeepreinforcementlearning}.

\textbf {Congestion Game.} %
The game consists of 4 states aligned in a 1D grid  and the actions $\U = \{\texttt{LEFT, RIGHT, STAY}\}$ for \textit{deterministically} moving in the respective directions.
The running reward is given by $r_t(x, u) = -2\mu(x) -0.1\mathds{1}_{u\neq \texttt{STAY}}$, discouraging agents concentrating at a single state.
%
%
The four possible initial distributions are $ \mu_0^1 \!=\! [0.25, 0.25, 0.25, 0.25], \mu_0^2 \!=\! [1, 0, 0, 0], \mu_0^3 \!=\! [0.1, 0.5, 0.2, 0.2], \mu_0^4 \!=\! [0, 0.6, 0.4, 0]$, each with probability, 0.4, 0.1, 0.3, and 0.2, respectively. 

%

\begin{table}[ht]

\centering
\caption{Computing $\Delta c(\pi)$ and Empirical Returns}
\footnotesize{
\setlength{\tabcolsep}{3.5pt}
\begin{tabular}{c cc cc}
\toprule
& \multicolumn{2}{c}{\textbf{SIS}} & \multicolumn{2}{c}{\textbf{Congestion}} \\
\cmidrule(lr){2-3} \cmidrule(lr){4-5}
Policy 
& $\Delta c(\pi)$ & Expected Returns
& $\Delta c(\pi)$ & Expected Returns \\
\midrule
$\pi^{*}_{\mu^1_0}$ & 0.090 & $\mathbf{-22.003 \pm 0.061}$ & 0.022 & $-3.295 \pm 0.008$ \\
$\pi^{*}_{\mu^2_0}$ & 0.013 & $-22.189 \pm 0.057$ & 0.145 & $\mathbf{-3.290 \pm 0.008}$ \\
$\pi^{*}_{\mu^3_0}$ & 0.056 & $-22.201 \pm 0.059$ & 0.017 & $-3.295 \pm 0.008$ \\
$\pi^{*}_{\mu^4_0}$ & 0.049 & $-22.171 \pm 0.060$ & 0.036 & $-3.295 \pm 0.008$ \\
$\pi^{*}_{\textrm{avg}}$ & 0.032 & $-22.135 \pm 0.058$ & 0.004 & $-3.293 \pm 0.008$ \\
$\pi^{*}_{\textrm{RQE}}$ & \textbf{0.00} & $-22.165 \pm 0.059$ & \textbf{0.00} & $-3.294 \pm 0.008$ \\
\bottomrule
\end{tabular}
}
\label{table:fpi-results}

\end{table}

The results of the RQ-FPI algorithm are tabulated in Table~\ref{table:fpi-results}.
Consistent with the previous results, the MF-RQE policy computed using the RQ-Fictitious Play coincide with the solutions from RQ-FPI and remain non-exploitable under uncertainty over initial MF distributions.
%
This robustness, however, comes at the cost of a modest reduction in expected returns compared to the other policies. 
The results of D-RQ-FPI shown in Figure~\ref{fig:d-rq-fpi} indicate that the empirical exploitability decreases to the order of $10^{-4}$ while $d_\S(\S^{\textrm{RQ-FPI}}_\mathbb{M}, \S^{\textrm{D-RQ-FPI}}_\mathbb{M}) < 10^{-3}$.
This demonstrates that D-RQ-FPI achieves performance comparable to the analytical solution, despite the presence of sampling noises.

\textbf{Other Environments.} 
We further compute MF-RQE policies for several additional environments provided in \texttt{MFGLib} (Figure~\ref{fig:convergence}).
The empirical trends observed in the SIS and congestion games persist across these environments, underscoring the importance of explicitly accounting for risk aversion in mean-field games.
%
%
%
We extend our analysis beyond entropy regularization by considering the log-barrier regularizer $\nu(\pi(\cdot | x)) \!=\! -\sum_{u\in\U}\log(\pi(u | x))$.
%
%
%
%
Figure~\ref{fig:convergence}(c) illustrates the corresponding convergence behavior of RQ-FPI.
%
Overall, these results are consistent with those observed under entropy regularization, further demonstrating that the MF-RQE formulation naturally extends beyond entropy-based regularizers.
In contrast, ablation studies conducted without convex regularization (Figure~\ref{fig:convergence}(d)) highlight the critical role of bounded rationality: without it, neither tractability nor convergence to an MF-RQE can be guaranteed.
{
Details of problem setups and hyperparameters are presented in Appendix~\ref{appdx:experiments}.}

\begin{figure}[t!]
    \centering
    \includegraphics[width=\textwidth]{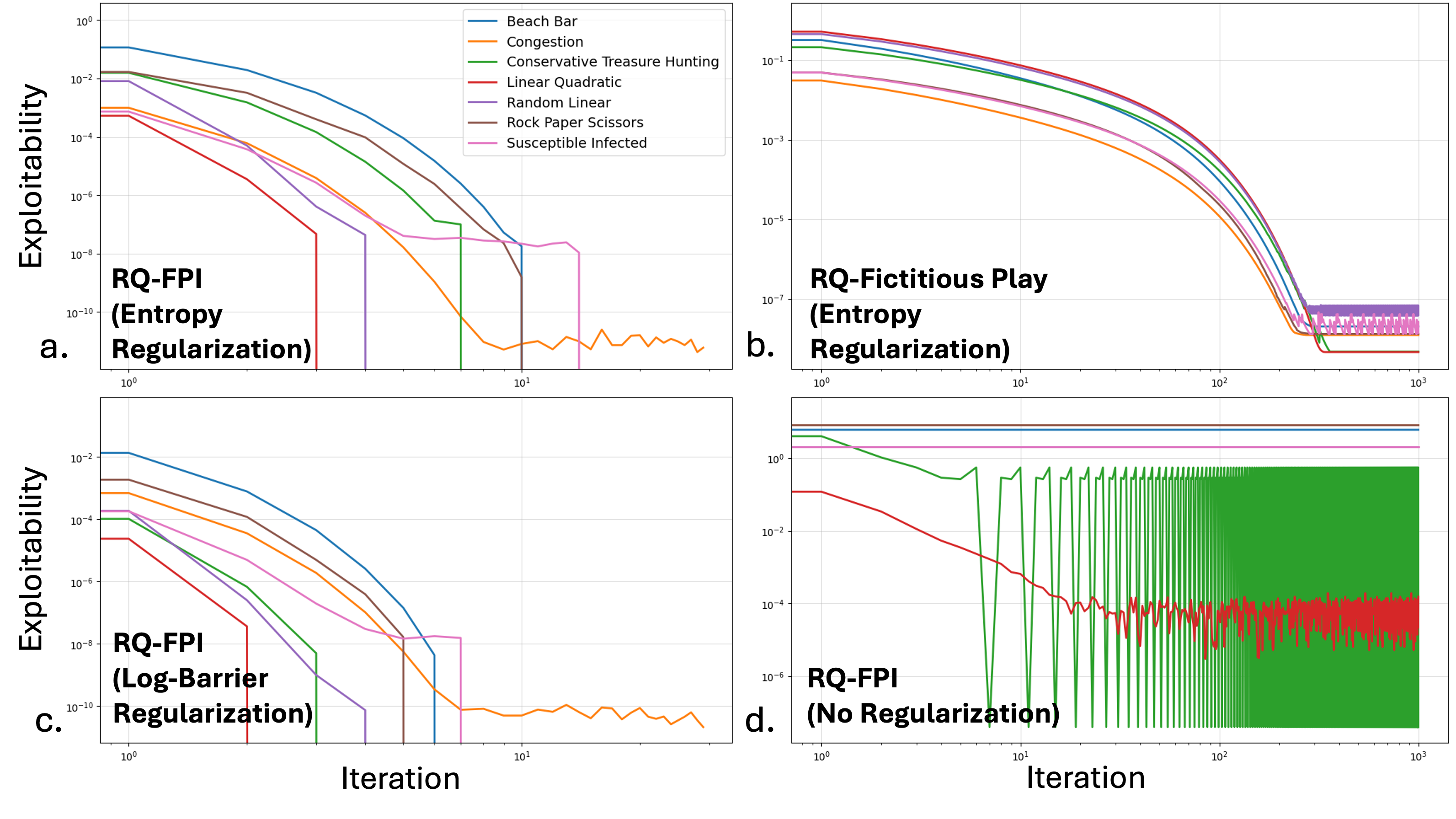}
   
    \caption{Performance of RQ-FPI and RQ-Fictitious Play across different environments and regularizers.}
    \label{fig:convergence}
    
\end{figure}

\begin{figure}[t!]
    \centering
    \includegraphics[width=\textwidth]{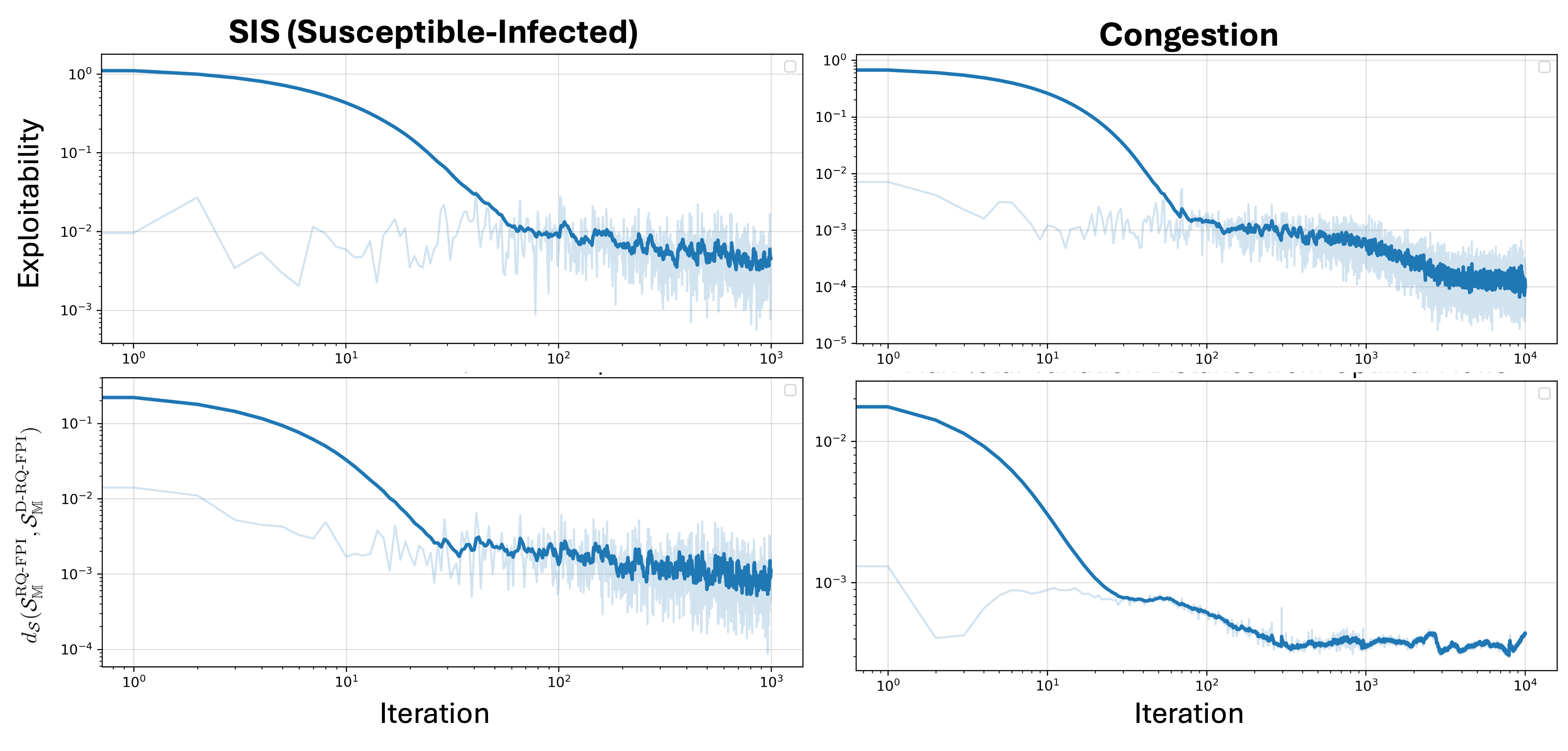}
    
    \caption{Convergence results of D-RQ-FPI with entropy regularization in the SIS and the Congestion scenarios.}
    \label{fig:d-rq-fpi}
    
\end{figure}

\section{Conclusions and Future Work}

In this work, we introduced a class of mean-field games that incorporates risk aversion with respect to a set of initial mean-field distributions while relaxing the assumption of perfect rationality. 
This leads to a tractable solution concept—the MF-RQE—that more accurately reflects large-scale, decentralized decision-making.
We established existence results, related the infinite-population equilibrium to its finite-population counterpart, and developed efficient algorithms for computing MF-RQE under both model-based and sample-based settings. 
Numerical experiments show that MF-RQE policies are robust to initial distribution uncertainty, remain non-exploitable, and achieve comparable performance despite the trade-offs induced by risk aversion.
An important direction for future work is extending this framework to heterogeneous and team-based mean-field settings, where risk aversion may arise with respect to adversarial populations, further strengthening the link between mean-field theory and observed decision-making in large-scale systems.


\bibliographystyle{apalike}
\bibliography{references}

\newpage
\appendix
\setcounter{equation}{0}
\renewcommand{\theequation}{\thesection.\arabic{equation}}
\renewcommand{\thefigure}{\thesection.\arabic{figure}} 
\setcounter{figure}{0} 
\setcounter{table}{0}
\renewcommand{\thetable}{\thesection.\arabic{table}}

\section{Algorithm Pseudocodes}

\begin{algorithm}[ht]\label{alg:rq-fpi-c}
\caption{Risk-Averse Quantal Fixed-Point Iteration (\textbf{RQ-FPI})}
\begin{algorithmic}

    \State \textbf{Input:} RQ-MFG $ \langle \X, \U, T, f, r,  \nu,\mathbb{M}, \Gamma^*_\mathbb{M},  D \rangle$, $\alpha$, $\tau$
    \State \textbf{Initialization:} Policy $\pi^{j=0}$, 

    \For{$j = 0, 1,\ldots$} 
    \State Compute MF flows $\S^j_\mathbb{M}$ for $\mu_0\in \mathbb{M}$ generated by $\pi^{j}$
        \For{$t = T-1, \ldots, 0$} 
            \State Set $Q^{\pi^{j+1}}_{\mu, t}(x, u)$ according to~\eqref{standard-Q} for all $x\in\X$, $u\in\U$ and flows $\mu \in \S^j_\mathbb{M}$
            \For{$x\in\X$}
                \State Compute $\pi^{j+1}_t(\cdot | x)$ from~\eqref{eqn:BR-RA-opt}
                \State Set $V^{\pi^{j+1}}_{\mu, t}(x) = \sum_{u\in\U}\pi^{{j}+1}_t(u|x)Q^{\pi^{j+1}}_{\mu, t}(x, u)$ for all flows $\mu \in \S^j_\mathbb{M}$
            \EndFor
        \EndFor
    \EndFor  
    \State \textbf{Return: $\pi$}
\end{algorithmic}
\end{algorithm}

\begin{algorithm}[ht]\label{alg:rq-fictitious-play-c}
\caption{Risk-Averse Quantal Fictitious Play (\textbf{RQ-Fictitious Play})}
\begin{algorithmic}

    \State \textbf{Input:} RQ-MFG $ \langle \X, \U, T, f, r,  \nu,\mathbb{M}, \Gamma^*_\mathbb{M},  D \rangle$, $\alpha$, $\tau$, $\beta$
    \State \textbf{Initialization:} Policy $\pi$, $\bar\pi = 0$

    \For{$j = 0, 1,\ldots$} 
    \State $\bar\pi = \beta\bar\pi  + (1-\beta)\pi$
    \State Normalize $\bar\pi$
    \State $\pi \leftarrow \B^{\textrm{RQE}}_\opt\circ\B^{\textrm{RQE}}_\prop(\bar\pi) $
    \EndFor  
    \State \textbf{Return: $\bar\pi $}
\end{algorithmic}
\end{algorithm}
 
\section{Experiments}\label{appdx:experiments}

\subsection{Computation of \texorpdfstring{$\pi^{*}_{\textrm{avg}}$}{pi*avg}}
We compute $\pi^{*}_{\textrm{avg}}$ using:
\begin{align}\label{eq:pi-avg-opt}
     \pi^{*}_{\textrm{avg}} &=\arg\max_{\pi\in\Pi}\sum_{k=1}^{|\mathbb{M}|}\Gamma^*_\mathbb{M}(\mu^k_0)J_{\mu^k} \big(\pi\big) - \alpha\nu(\pi),
\end{align}
where $\mu_0^k \in \mathbb{M}$ and $J_{\mu^k}(\pi)$ is defined in~\eqref{eqn:performance-infinite}. 
Due to the linearity of the expectation operator in~\eqref{eq:pi-avg-opt}, the corresponding dynamic programming equations admit a particularly simple structure. 
Specifically, the state–action value function satisfies
\begin{align}\label{avg-Q}
    Q^\pi_{\mu, t}(x, u) &= \sum_{k=1}^{|\mathbb{M}|}\Gamma^*_\mathbb{M}(\mu^k_0) \Big(r_t(x, u, \mu^k_t) + \sum_{x'\in\X}f_t(x'|x, u, \mu^k_t)V^\pi_{\mu^k, t+1}(x')\Big) \nonumber\\&= \sum_{k=1}^{|\mathbb{M}|}\Gamma^*_\mathbb{M}(\mu^k_0)Q^\pi_{\mu^k, t}(x, u) ,
\end{align}
where $Q^\pi_{\mu, T-1}(x, u) \triangleq \sum_{k=1}^{|\mathbb{M}|}\Gamma^*_\mathbb{M}(\mu^k_0) (r_{T-1}(x, u, \mu_{T-1}))$ and $Q^\pi_{\mu^k, t}(x, u)$ and $ V^\pi_{\mu^k, t+1}(x')\Big)$ follow from~\eqref{standard-Q} and~\eqref{standard-V} respectively. 
The corresponding value function inherits the same linear structure:
\begin{align*}
   V^\pi_{\mu, t}(x) &= \expct{Q^\pi_{\mu, t}(x, u)} {u\sim\pi_t}\\& = \expct{\sum_{k=1}^{|\mathbb{M}|}\Gamma^*_\mathbb{M}(\mu^k_0)Q^\pi_{\mu^k, t}(x, u) } {u\sim\pi_t} \\&= \sum_{k=1}^{|\mathbb{M}|}\Gamma^*_\mathbb{M}(\mu^k_0)\expct{Q^\pi_{\mu^k, t}(x, u) } {u\sim\pi_t} \\&= \sum_{k=1}^{|\mathbb{M}|}\Gamma^*_\mathbb{M}(\mu^k_0)V^\pi_{\mu^k, t}(x).
\end{align*}

\subsection{Experimental Details}

\begin{description}[leftmargin=1.2cm, style=nextline]
    \item[\textbf{Beach Bar}]
    \[
    \mathbb{M} =
    \left\{
    \begin{aligned}
        &[0.3, 0.2, 0.1, 0.4],\;
        [0.3, 0.3, 0.1, 0.3],\;
        [0.2, 0.2, 0.1, 0.5],\\
        &[0.25, 0.1, 0.25, 0.4],\;
        [0, 0, 0, 1]
    \end{aligned}
    \right\},
    \qquad
    \Gamma^*_{\mathbb{M}} = [0.25, 0.25, 0.2, 0.2, 0.1].
    \]

    \item[\textbf{Conservative Treasure Hunting}]
    \[
    \mathbb{M} =
    \left\{
    \begin{aligned}
        &[0.40, 0.35, 0.25],\;
        [0.41, 0.34, 0.25],\;
        [0.39, 0.36, 0.25],\\
        &[0.41, 0.35, 0.24],\;
        [0.39, 0.35, 0.26],\;
        [0.40, 0.36, 0.24],\\
        &[0.40, 0.34, 0.26],\;
        [0.405, 0.355, 0.24],\;
        [0.39, 0.355, 0.255],\\
        &[0.405, 0.34, 0.255]
    \end{aligned}
    \right\},
    \quad
    \Gamma^*_{\mathbb{M}}(\mu_0) = 0.1,\;\forall \mu_0 \in \mathbb{M}.
    \]

    \item[\textbf{Linear Quadratic}]
    \[
    \mathbb{M} =
    \left\{
    \begin{aligned}
        &[0.09,\ldots,0.10],\;
        [0.19,0.09,\ldots,0.00],\\
        &[0.09,0.18,0.09,\ldots,0.01,0.09]
    \end{aligned}
    \right\},
    \qquad
    \Gamma^*_{\mathbb{M}} = [0.5, 0.3, 0.2].
    \]

    \item[\textbf{Random Linear}]
    \[
    \mathbb{M} =
    \left\{
    \begin{aligned}
        &[0.2,0.2,0.2,0.2,0.2],\;
        [0.8,0.05,0.05,0.05,0.05],\\
        &[0.05,0.8,0.05,0.05,0.05],\;
        [0.5,0.5,0,0,0],\\
        &[0.4,0.3,0.3,0,0],\;
        [0.1,0.2,0.3,0.2,0.2],\\
        &[0.4,0.25,0.15,0.1,0.1],\;
        [0,0.6,0.2,0.1,0.1]
    \end{aligned}
    \right\},
    \quad
    \Gamma^*_{\mathbb{M}} =
    [0.12,0.08,0.15,0.10,0.20,0.05,0.18,0.12].
    \]
    
    \item[\textbf{Rock--Paper--Scissors}]
    \[
    \mathbb{M} =
    \left\{
    \begin{aligned}
    &[1,0,0,0],\;
    [0.9,0.1,0,0],\;
    [0.8,0.1,0.1,0],\\
    &[0.7,0.1,0.1,0.1],\;
    [0.4,0.3,0.2,0.1]
    \end{aligned}
    \right\},
    \quad
    \Gamma^*_{\mathbb{M}} = [0.4,0.25,0.15,0.1,0.1].
    \]

\end{description}

\begin{table}[ht]
\centering
\caption{Tuning Parameters used for RQ-FPI with Entropy Regularization}
\begin{tabular}{lcccc} 
 \toprule
 \textbf{Game}  & Horizon $T$ & $\alpha$ & $\tau$ & Number of fixed point iterations\\ 
 \midrule
SIS   & 50 & 5 & 0.2 & 25 \\ 
 Congestion   & 5 & 15 & 0.0667 & 30 \\ 
 Beach Bar   & 2 & 2 & 0.5 & 15 \\ 
 Conservative Treasure Hunting   & 5 & 1.5 & 0.6667 & 10 \\ 
 Linear Quadratic   & 3 & 1 & 1 & 10 \\ 
 Random Linear   & 3 & 28 & 0.0357 & 10 \\ 
  Rock-Paper-Scissors  & 7 & 10 & 0.1 & 15 \\ 
 \bottomrule
\end{tabular}
\label{table:fpi-params}
\end{table}

\begin{table}[ht]
\centering
\caption{Tuning Parameters used for RQ-FPI with Log-Barrier Regularization}
\begin{tabular}{lcccc} 
 \toprule
 Game  & Horizon $T$ & $\alpha$ & $\tau$ & Number of fixed point iterations\\ 
 \midrule
SIS   & 50 & 5 & 0.2 & 25 \\ 
 Congestion   & 5 & 6 & 0.1667 & 30 \\ 
 Beach Bar   & 2 & 2 & 0.5 & 15 \\ 
 Conservative Treasure Hunting   & 5 &  4& 0.25 & 10 \\ 
  Linear Quadratic   & 3 &  1& 1 & 10 \\ 
 Random Linear   & 3 & 40 & 0.025 & 10 \\ 
 Rock-Paper-Scissors  & 7 & 10 & 0.1 & 15 \\ 
 \bottomrule
\end{tabular}
\label{table:fpi-params-barrier}
\end{table}

\subsection{Additional Results}
Table~\ref{table:fpi-params} reports the hyperparameters used for RQ-FPI across all environments, and Figure~\ref{fig:convergence} from the main text illustrates the convergence behaviors of RQ-FPI and RQ-Fictitious Play under entropy regularization for each of these environments.
Tables~\ref{table:beach-bar-fpi-Results}--\ref{table:rps-fpi-Results} summarize the corresponding results, which are identical for RQ-FPI and RQ–Fictitious Play, from which we observe that the empirical trends identified in the SIS and congestion games persist across these settings.

Notably, with the exception of the Conservative Treasure Hunting and Rock–Paper–Scissors environments, the policy $\pi^*_{\textrm{avg}}$ is exploitable under uncertainty in the initial mean-field distribution, highlighting the necessity of explicitly modeling distributional risk.
In the Rock–Paper–Scissors environment, the optimal policy is invariant to the choice of initial mean-field distribution; consequently, the MF-RQE policy $\pi^{*}_{\textrm{RQE}}$ coincides with the standard entropy-regularized Nash equilibrium policy~\citep{cui2021approximately}, and all exploitability values are identically zero.

As mentioned in the main text, we extended our analysis beyond entropy regularization by considering the log-barrier regularizer, defined as $\nu(\pi(\cdot | x)) = -\sum_{u\in\U}\log(\pi(u | x))$.
Since the framework of \citet{cui2021approximately} is specifically tailored to entropy-regularized policies, a direct comparison with entropy-regularized Nash equilibrium policies under a single initial mean-field distribution $\mu_0 \in \mathbb{M}$ is not applicable in this setting.
Consequently, we restrict our comparison to the risk-neutral policy $\pi^{*}_{\textrm{avg}}$, which is computed using the same log-barrier regularization.
The resulting exploitability and empirical return statistics are summarized in Table~\ref{table:fpi-barrier-Results} in Appendix~\ref{appdx:experiments}.

\begin{table}[ht]
\centering
\caption{Computing $\Delta c(\pi)$ and Empirical Returns for Beach Bar}
\begin{tabular}{ccc} 
 \toprule
 Policy  & $\Delta c(\pi)$ & Expected Returns\\ 
 \midrule
$\pi^{*}_{\mu^1_0}$   & 0.057 & $0.063 \pm 0.017$ \\ 
 
 $\pi^{*}_{\mu^2_0}$   &  0.152 & $0.059 \pm 0.017$\\ 

 $\pi^{*}_{\mu^3_0}$   & 0.104 & $0.060 \pm 0.016$ \\

 $\pi^{*}_{\mu^4_0}$   & 0.081 & $0.059 \pm 0.016$\\

 $\pi^{*}_{\mu^5_0}$   & 4.770 & $-0.245 \pm 0.013$\\

 $\pi^{*}_{\textrm{avg}}$   & 1.4511 & $-0.052 \pm 0.015$\\

 $\pi^{*}_{\textrm{RQE}}$   & $\mathbf{0.00}$ & $\mathbf{0.065 \pm 0.016}$\\
 \bottomrule
\end{tabular}
\label{table:beach-bar-fpi-Results}
\end{table}

\begin{table}[ht]
\centering
\caption{Computing $\Delta c(\pi)$ and Empirical Returns for Conservative Treasure Hunting}
\begin{tabular}{ccc} 
 \toprule
 Policy  & $\Delta c(\pi)$ & Expected Returns\\ 
 \midrule
$\pi^{*}_{\mu^1_0},\pi^{*}_{\mu^2_0},\ldots,\pi^{*}_{\mu^{10}_0}$   & 0.037 & $0.759 \pm 0.002$ \\ 
 $\pi^{*}_{\textrm{RQE}} = \pi^{*}_{\textrm{avg}}$   & $\mathbf{0.00}$ & $\mathbf{0.763 \pm 0.001}$\\
 \bottomrule
\end{tabular}
\label{table:conservative-treasure-hunting-fpi-Results}
\end{table}

  \begin{table}[ht]
\centering
\caption{Computing $\Delta c(\pi)$ and Empirical Returns for Linear Quadratic}
\begin{tabular}{ccc} 
 \toprule
 Policy  & $\Delta c(\pi)$ & Expected Returns\\ 
 \midrule
$\pi^{*}_{\mu^1_0}$   & 0.334 & $-0.928 \pm 0.004$ \\ 
 $\pi^{*}_{\mu^2_0}$   &  0.317 & $  -0.929 \pm 0.004$\\ 

 $\pi^{*}_{\mu^3_0}$   & 0.307 & $ -0.929 \pm 0.004$ \\

 $\pi^{*}_{\textrm{avg}}$   & 0.066 & $-0.929 \pm 0.004$\\
 $\pi^{*}_{\textrm{RQE}}$   & $\mathbf{0.00}$ & $\mathbf{-0.928 \pm 0.004}$\\
 \bottomrule
\end{tabular}
\label{table:linear-quadratic-fpi-Results}
\end{table}

  \begin{table}[ht]
\centering
\caption{Computing $\Delta c(\pi)$ and Empirical Returns for Random Linear}
\begin{tabular}{ccc} 
 \toprule
 Policy  & $\Delta c(\pi)$ & Expected Returns\\ 
 \midrule
$\pi^{*}_{\mu^1_0}$   & 0.098 & $-0.601 \pm 0.127$ \\ 
 $\pi^{*}_{\mu^2_0}$   &  0.051 & $ -0.630 \pm 0.123$\\ 

 $\pi^{*}_{\mu^3_0}$   & 0.105 & $ -0.602\pm0.110$ \\

  $\pi^{*}_{\mu^4_0}$   & 0.070 & $-0.623 \pm0.113$ \\

  $\pi^{*}_{\mu^5_0}$   & 0.059 & $-0.599 \pm 0.116$ \\

  $\pi^{*}_{\mu^6_0}$   & 0.118 & $-0.591 \pm 0.143$ \\

  $\pi^{*}_{\mu^7_0}$   & 0.049 & $-0.608 \pm 0.124$ \\

  $\pi^{*}_{\mu^8_0}$   & 0.055 & $\mathbf{-0.588 \pm 0.116}$ \\

 $\pi^{*}_{\textrm{avg}}$   & 0.001 & $-0.613 \pm 0.131$\\

 $\pi^{*}_{\textrm{RQE}}$   & $\mathbf{0.00}$ & $-0.612 \pm 0.129$\\
 \bottomrule
\end{tabular}
\label{table:random-linear-fpi-Results}
\end{table}

\begin{table}[ht]
\centering
\caption{Computing $\Delta c(\pi)$ and Empirical Returns for Rock-Paper-Scissors}
\begin{tabular}{ccc} 
 \toprule
 Policy  & $\Delta c(\pi)$ & Expected Returns\\ 
 \midrule
$\pi^{*}_{\mu^1_0}, \pi^{*}_{\mu^2_0}, \ldots, \pi^{*}_{\mu^5_0}$   & $5.2e-7 \approx 0.00$ & $4.035 \pm 0.002$ \\ 
 $\pi^{*}_{\textrm{RQE}}= \pi^{*}_{\textrm{avg}}$   & $\mathbf{0.00}$ & $\mathbf{4.035 \pm 0.002}$\\
 \bottomrule
\end{tabular}
\label{table:rps-fpi-Results}
\end{table}

\begin{table}[t]
\centering
\caption{Exploitability $\Delta c(\pi)$ and empirical returns under RQ-FPI with log-barrier regularization.}
\label{table:fpi-barrier-Results}
\begin{tabular}{lcccc}
\toprule
\multicolumn{1}{c}{\textbf{Game}} &
\multicolumn{2}{c}{\textbf{$\pi^{*}_{\textrm{avg}}$}} &
\multicolumn{2}{c}{\textbf{$\pi^{*}_{\textrm{RQE}}$}} \\
\cmidrule(lr){2-3} \cmidrule(lr){4-5}
 & $\Delta c(\pi)$ & Return
 & $\Delta c(\pi)$ & Return \\
\midrule
SIS & 0.016 & $-22.218 \pm 0.062$ & \textbf{0.00} & $-22.232 \pm 0.062$ \\
Congestion & 0.003 & $-3.295 \pm 0.008$ & \textbf{0.00} & $-3.294 \pm 0.008$ \\
Beach Bar & 0.497 & $-0.037 \pm 0.016$ & \textbf{0.00} & $0.010 \pm 0.017$ \\
Conservative Treasure Hunting & 0.00 & $0.614 \pm 0.003$ & \textbf{0.00} & $0.614 \pm 0.003$ \\
Linear Quadratic & 0.013 & $-0.950 \pm 0.004$ & \textbf{0.00} & $-0.950 \pm 0.004$ \\
Random Linear & 0.00 & $-4.318 \pm 0.173$ & \textbf{0.00} & $-4.318 \pm 0.173$ \\
Rock--Paper--Scissors & 0.00 & $4.037 \pm 0.002$ & \textbf{0.00} & $4.037 \pm 0.002$ \\
\bottomrule
\end{tabular}
\end{table}

\section{Proof of Proposition~\ref{prop:existence-MF-RQE}}

\begin{lemma}\label{lmm:convexity}
Consider a fixed set of mean-field flows $\S_\mathbb{M}$.
Let $\pi_a, \pi_b \in \B^{\textrm{RQE}}_\opt(\S_\mathbb{M})$.
If there exists a $t<T$ such that $Q^{\pi^a}_{\mu^k, t}(x, u) = Q^{\pi^b}_{\mu^k, t}(x, u) \triangleq  Q_{k, t}(x, u)$ for all states $x\in\X$, $u\in\U$ and mean-field flows $\mu^k\in\S_\mathbb{M}$, then the following are true:
\begin{enumerate}[label=\roman*)]
    \item $\pi^a_t(\cdot |x) = \pi^b_t(\cdot|x)$
    \item $Q^{\pi^a}_{\mu^k, t-1}(x, u) = Q^{\pi^b}_{\mu^k, t-1}(x, u)$, 
\end{enumerate}
for all states $x\in\X$, $u\in\U$ and mean-field flows $\mu^k\in\S_\mathbb{M}$.
\end{lemma}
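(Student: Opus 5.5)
Both parts follow from the stagewise structure of the problem: for a fixed set of flows $\S_\mathbb{M}$, the only place $\pi_t(\cdot|x)$ enters the time-$t$ objective is through the vector $\pi_t(\cdot|x)$ paired with the Q-values $Q^{\pi}_{\mu^k,t}(x,\cdot)$. The flows $\mu^k$ are held fixed as exogenous signals, so neither the rewards $r_t(x,u,\mu^k_t)$ nor the transitions $f_t(\cdot|x,u,\mu^k_t)$ depend on the policy.

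For part (i), fix $x$ and use the KL closed form of~\eqref{eq:dual-rep}. With the common Q-values $Q_{k,t}(x,\cdot)$, the time-$t$ objective in~\eqref{eqn:BR-RA-opt} for both $\pi^a$ and $\pi^b$ is the same function of $p=\pi_t(\cdot|x)\in\P(\U)$:
\begin{align*}
g(p)=\frac{1}{\tau}\log\Big(\sum_{k}\Gamma^*_\mathbb{M}(\mu_0^k)\exp\big(-\tau\, p\t Q_{k,t}(x,\cdot)\big)\Big)+\alpha\nu(p).
\end{align*}
Each exponent $-\tau p\t Q_{k,t}(x,\cdot)$ is affine in $p$. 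Log-sum-exp with nonnegative weights is convex and monotone, so its composition with affine maps is convex in $p$. Adding $\alpha\nu$, which is strictly convex (strongly convex under Assumption~\ref{assmpt: strongly-convex-nu}, or at least strictly convex as required in~\eqref{eq: BR-RA-V}), makes $g$ strictly convex on the compact convex set $\P(\U)$. It therefore has a unique minimizer. Since both $\pi^a_t(\cdot|x)$ and $\pi^b_t(\cdot|x)$ minimize $g$ by membership in $\B^{\textrm{RQE}}_\opt(\S_\mathbb{M})$, they coincide.

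For part (ii), take (i) for every $x$, so that $\pi^a_t=\pi^b_t$ on all of $\X$. Then by~\eqref{standard-V},
\[
V^{\pi^a}_{\mu^k,t}(x)=\sum_u\pi^a_t(u|x)Q_{k,t}(x,u)=V^{\pi^b}_{\mu^k,t}(x)
\]
for every $k$. Substituting into~\eqref{standard-Q} at time $t-1$ gives
\[
Q^{\pi}_{\mu^k,t-1}(x,u)=r_{t-1}(x,u,\mu^k_{t-1})+\sum_{x'}f_{t-1}(x'|x,u,\mu^k_{t-1})V^{\pi}_{\mu^k,t}(x').
\]
The first two ingredients do not depend on $\pi$, because the flows are fixed, and the value term is common to $\pi^a$ and $\pi^b$. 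So the Q-values agree at $t-1$ (the claim is vacuous if $t=0$).

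The main obstacle is justifying that the time-$t$ objective depends on $\pi$ only through $\pi_t(\cdot|x)$ and the given $Q_{k,t}$. This needs the flows $\S_\mathbb{M}$ to be treated as exogenous, which is how $\B^{\textrm{RQE}}_\opt$ is defined, and it needs $Q^\pi_{\mu,t}$ to depend only on $\pi_{t+1:T-1}$, which the hypothesis already supplies. Once that is settled, the convexity of the composite log-sum-exp term is the only analytic point to check. The natural follow-up is backward induction from $t=T-1$, where $Q_{k,T-1}=r_{T-1}$ is policy-independent; this yields uniqueness of $\B^{\textrm{RQE}}_\opt(\S_\mathbb{M})$.
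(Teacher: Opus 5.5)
Your proposal is correct and follows the paper's proof essentially step for step. You show that, once the $Q_{k,t}$ are common, both stagewise objectives reduce to the same function: a log-sum-exp of affine maps in $\pi_t(\cdot|x)$ plus $\alpha\nu$. Its strict convexity forces $\pi^a_t(\cdot|x)=\pi^b_t(\cdot|x)$, and you then carry the equality through~\eqref{standard-V} and~\eqref{standard-Q} to time $t-1$, exactly as the paper does.
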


\begin{proof}
Under operator $\B^{\textrm{RQE}}_\opt$ it follows from~\eqref{eqn:BR-RA-opt} at each state $x\in \X$ and time step $t$ that,

\begin{align*}
\pi^a_t(\cdot | x) = \arg\min_{\pi'_t\in\Pi_t} c_t^{(\pi'_t, \pi^a_{-t}), ~\alpha}(x; \S_\mathbb{M})\quad\textrm{and}\quad
\pi^b_t(\cdot | x) = \arg\min_{\pi'_t\in\Pi_t} c_t^{(\pi'_t, \pi^b_{-t}), ~\alpha}(x; \S_\mathbb{M}),
\end{align*}
where,
\begin{align}\label{eq:complete-optimization}
    c_t^{(\pi'_t, \pi_{-t}), \alpha}(x; \S_\mathbb{M}) = \frac{1}{\tau}\log \Big(\sum_{k=1}^{|\mathbb{M}|}\Gamma^*_\mathbb{M}(\mu_0^k)\exp^{-\tau\pi'{_t}\t(\cdot | x)Q^{\pi}_{\mu^k, t}(x, \cdot)}\Big) + \alpha\nu(\pi'_t(\cdot | x)).
\end{align}

Note that the term $\pi_{-t}$ in~\eqref{eq:complete-optimization} enters $c_t^{(\pi'_t, \pi_{-t}), \alpha}(x; \S_\mathbb{M}) $ only through $Q^{\pi}_{\mu^k, t}(x, \cdot)$.
Since we have  $Q^{\pi^a}_{\mu^k, t}(x, u) = Q^{\pi^b}_{\mu^k, t}(x, u) \triangleq  Q_{k, t}(x, u)$ for all states $x\in\X$, $u\in\U$ and mean-field flows $\mu^k\in\S_\mathbb{M}$, it follows that,
\begin{align*}
    c_t^{(\pi'_t, \pi^a_{-t}), ~\alpha}(x; \S_\mathbb{M}) = c_t^{(\pi'_t, \pi^b_{-t}), ~\alpha}(x; \S_\mathbb{M}) = \frac{1}{\tau}\underbrace{\log \Big(\sum_{k=1}^{|\mathbb{M}|}\Gamma^*_\mathbb{M}(\mu_0^k)\exp^{-\tau\pi'{_t}\t(\cdot | x)Q_{k, t}(x, \cdot)}\Big)}_{\textrm{Convex in $\pi'_t$}} + \underbrace{\alpha\nu(\pi'_t(\cdot | x))}_{\textrm{Strictly Convex in $\pi'_t$}}.
\end{align*}
This implies that the objective function is strictly convex (sum of convex and strictly convex function) and must have a unique solution.
It follows that $\pi^a_t(\cdot |x) = \pi^b_t(\cdot|x)$ for all $x\in\X$.

Now, using \eqref{standard-Q}, for all states $x\in\X$, $u\in\U$ and mean-field flows $\mu^k\in\S_\mathbb{M}$, we have that
\begin{align}
    Q^{\pi^a}_{\mu^k, t-1}(x, u) &= r_{t-1}(x, u, \mu^k_{t-1}) + \sum_{x'\in\X}f_{t-1}(x'|x, u, \mu^k_{t-1})V^{\pi^a}_{\mu^k, t}(x')\\
    &= r_{t-1}(x, u, \mu^k_{t-1}) + \sum_{x'\in\X}f_{t-1}(x'|x, u, \mu^k_{t-1})\sum_{u\in\U}\pi^a_{t}(\cdot | x)Q^{\pi^a}_{\mu^k, t}(x', u)\\
    &= r_{t-1}(x, u, \mu^k_{t-1}) + \sum_{x'\in\X}f_{t-1}(x'|x, u, \mu^k_{t-1})\sum_{u\in\U}\pi^b_{t}(\cdot | x)Q^{\pi^b}_{\mu^k, t}(x', u)\\
    &= Q^{\pi^b}_{\mu^k, t-1}(x, u).
\end{align}
\end{proof}

\begin{lemma}\label{lmm:continuity-Q}
    The Q-function $Q^\pi_{\mu, t}(x, u)$ defined in~\eqref{standard-Q} is continuous in both $\mu$ and $\pi$ for all states $x\in\X$, $u\in\U$ and $t=0,\ldots, T-1$, where $d_\Pi(\pi, \pi')\triangleq \max_{x\in\X, t<T}\dtv{\pi_t(\cdot |x), \pi'_t(\cdot |x)}$ and $d_\M(\mu, \mu')\triangleq\max_{t<T}\dtv{\mu_t, \mu'_t}$.
\end{lemma}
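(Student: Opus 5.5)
We will prove the claim by backward induction on $t$, from $t=T-1$ down to $t=0$. The inductive hypothesis at time $t$ is that, for every fixed $(x,u)$, the map $(\mu,\pi)\mapsto Q^\pi_{\mu,t}(x,u)$ is Lipschitz, hence continuous, with respect to the metric $d_\M+d_\Pi$ on $\M\times\Pi$. Since $\X$ and $\U$ are finite, continuity for each $(x,u)$ is the same as continuity of the finite vector of $Q$-values. We will also keep track of the uniform bound $\abs{Q^\pi_{\mu,t}(x,u)}\le (T-t)R_{\max}$. This bound follows directly from~\eqref{standard-Q} and~\eqref{standard-V}, because rewards lie in $[-R_{\max},R_{\max}]$ and $V_{\mu,t+1}^\pi$ is an average of $Q$-values.

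\textbf{Base case ($t=T-1$).} Here $Q^\pi_{\mu,T-1}(x,u)=r_{T-1}(x,u,\mu_{T-1})$, which does not depend on $\pi$. By Assumption~\ref{assmpt:lipschitiz-model},
\[
\abs{Q^\pi_{\mu,T-1}(x,u)-Q^{\pi'}_{\mu',T-1}(x,u)}\le L_r\,\dtv{\mu_{T-1},\mu'_{T-1}}\le L_r\, d_\M(\mu,\mu').
\]

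\textbf{Inductive step.} Assume the claim holds at time $t+1$ with Lipschitz constant $K_{t+1}$. We first treat the value function. Using~\eqref{standard-V}, write
\begin{align*}
V^\pi_{\mu,t+1}(x)-V^{\pi'}_{\mu',t+1}(x)
={}&\sum_u \big(\pi_{t+1}(u|x)-\pi'_{t+1}(u|x)\big)\,Q^\pi_{\mu,t+1}(x,u)\\
&+\sum_u \pi'_{t+1}(u|x)\big(Q^\pi_{\mu,t+1}(x,u)-Q^{\pi'}_{\mu',t+1}(x,u)\big).
\end{align*}
The first sum is bounded by $2(T-t-1)R_{\max}\,d_\Pi(\pi,\pi')$, using the total variation definition together with the uniform $Q$ bound. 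The second sum is bounded by $K_{t+1}\big(d_\M(\mu,\mu')+d_\Pi(\pi,\pi')\big)$ by the inductive hypothesis.

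Next we treat $Q^\pi_{\mu,t}$ via~\eqref{standard-Q}. The difference of the reward terms is at most $L_r\, d_\M(\mu,\mu')$. The difference of the transition terms splits, by adding and subtracting $\sum_{x'} f_t(x'|x,u,\mu'_t)V^\pi_{\mu,t+1}(x')$, into two parts:
\begin{itemize}
\item $\sum_{x'}\big(f_t(x'|x,u,\mu_t)-f_t(x'|x,u,\mu'_t)\big)V^\pi_{\mu,t+1}(x')$, bounded by $L_f(T-t-1)R_{\max}\, d_\M(\mu,\mu')$ using Assumption~\ref{assmpt:lipschitiz-model};
\item $\sum_{x'} f_t(x'|x,u,\mu'_t)\big(V^\pi_{\mu,t+1}(x')-V^{\pi'}_{\mu',t+1}(x')\big)$, bounded by the $V$ estimate above because $f_t(\cdot\,|\,x,u,\mu'_t)$ is a probability vector.
\end{itemize}
Collecting terms gives a constant $K_t$ depending only on $L_r,L_f,R_{\max},T$, which closes the induction.

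The main obstacle is bookkeeping, not a genuine difficulty. Two points need care. First, the metrics in the statement take the maximum over $t<T$, and this covers every time index the recursion uses: $\mu_t$ for $t\le T-1$ and $\pi_{t+1}$ for $t+1\le T-1$. Second, the uniform bound on $Q$ must be established before or alongside the induction, since the cross terms need a bounded multiplier. Continuity in the product topology then follows immediately from the joint Lipschitz estimate.
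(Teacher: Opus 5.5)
Your proof is correct and uses the same mechanism as the paper's. You run a backward induction from $t=T-1$ and split $V^\pi_{\mu,t+1}-V^{\pi'}_{\mu',t+1}$ into a policy-difference term, controlled by the uniform bound on $Q$, and a term controlled by the inductive hypothesis.

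The difference is in how the two arguments are handled. The paper treats them separately: it dispatches continuity in $\mu$ with a one-line appeal to Assumption~\ref{assmpt:lipschitiz-model}, and runs the induction only in $\pi$ at a fixed flow, so the reward terms cancel. You instead run one induction jointly in $(\mu,\pi)$ and track the $L_r$ and $L_f$ contributions explicitly.

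What this buys you is an explicit joint Lipschitz bound in $d_\M+d_\Pi$. That is the form actually needed in Corollary~\ref{cor:continuity-cost} and in the closed-graph step of Proposition~\ref{prop:existence-MF-RQE}, where the policies and the flows $\B^{\textrm{RQE}}_\prop(\pi_n)$ move simultaneously. Getting joint continuity from the paper's separate estimates requires a further observation the paper does not state: each Lipschitz constant is uniform in the other argument (which is true).

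Your constant $2(T-t-1)R_{\max}$ for the policy term is also the correct one, since $\sum_u\abs{\pi_{t+1}(u|x)-\pi'_{t+1}(u|x)}=2\,\dtv{\pi_{t+1}(\cdot|x),\pi'_{t+1}(\cdot|x)}$. The paper's corresponding step omits this factor of $2$, which is harmless for the conclusion.
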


\begin{proof}
    Continuity in $\mu$ follows directly from Assumption~\ref{assmpt:lipschitiz-model} and the fact that products and sums of Lipschitz and bounded functions are again Lipschitz and bounded~\citep{cui2021approximately}.
    Consider now a given mean-field flow $\mu$ and two policies $\pi^a, \pi^b\in\Pi$.
    We prove continuity with respect to the policies via induction.
    At $t=T-1$, for all $x\in\X$ and $u\in\U$,  $Q^{\pi^a}_{\mu, T-1}(x, u) = Q^{\pi^b}_{\mu, T-1}(x, u) = r_{T-1}(x, u, \mu_{T-1})$ and continuity follows trivially.
    Assume that at time $t+1$, $\vert Q^{\pi^a}_{\mu, t+1}(x, u) - Q^{\pi^b}_{\mu, t+1}(x, u)\vert \leq C_{t+1} d_\Pi(\pi^a, \pi^b)$, where $d_\Pi(\pi^a, \pi^b)\triangleq \max_{x\in\X, ~t < T}\dtv{\pi_t^a(\cdot|x),\pi_t^b(\cdot|x)}$ and $C_{t+1} >0$.
    Furthermore, under finite horizons and bounded rewards, we can uniformly bound $|Q^\pi_{\mu, t}(x, u)| \leq TB_r \triangleq B_Q$, where $B_r$ is the uniform bound for the bounded rewards. 
    Then, at time $t$, it follows that:
    
    \begin{align*}
        \vert Q^{\pi^a}_{\mu, t}(x, u) - Q^{\pi^b}_{\mu, t}(x, u)\vert &= | \sum_{x'\in\X}f_t(x'|x, u, \mu_t)\big(V^{\pi^a}_{\mu, t+1}(x') - V^{\pi^b}_{\mu, t+1}(x')\big)\vert\\
        &= | \sum_{x'\in\X}f_t(x'|x, u, \mu_t)\sum_{u'\in \U}\big(\pi^a_{t+1}(u'|x')Q^{\pi^a}_{\mu, t+1}(x') - \pi^b_{t+1}(u'|x')Q^{\pi^b}_{\mu, t+1}(x')\big)\vert\\
        &\leq  \sum_{x'\in\X}\sum_{u'\in \U}f_t(x'|x, u, \mu_t)\Big(\big\vert \pi^a_{t+1}(u'|x')-\pi^b_{t+1}(u'|x')||Q^{\pi^a}_{\mu, t+1}(x')|  \\&+\pi^b_{t+1}(u'|x')|Q^{\pi^a}_{\mu, t+1}(x')-Q^{\pi^b}_{\mu, t+1}(x')|\Big)\\
        &\leq \sum_{x'\in\X}f_t(x'|x, u, \mu_t)|Q^{\pi^a}_{\mu, t+1}(x')| \sum_{u'\in \U}\big\vert\pi^a_{t+1}(u'|x')-\pi^b_{t+1}(u'|x')| + C_{t+1}d_\Pi(\pi^a, \pi^b)\\
        &\leq \sum_{x'\in\X}f_t(x'|x, u, \mu_t)|Q^{\pi^a}_{\mu, t+1}(x')| d_\Pi(\pi^a, \pi^b) + C_{t+1}d_\Pi(\pi^a, \pi^b)\\
        &\leq (B_Q + C_{t+1}) d_\Pi(\pi^a, \pi^b) \triangleq C_td_\Pi(\pi^a, \pi^b).
    \end{align*}
By defining $C=\max_{t=0,\ldots, T-1}C_t$, we further obtain a uniform bound.
\end{proof}

\begin{corollary}\label{cor:continuity-cost}
   The regularized risk-averse cost~\eqref{eq: BR-RA-V} (expanded in~\eqref{eq:complete-optimization}) is continuous in $\pi'_t, \pi_t$ and $\S_{\mathbb{M}}$, where $d_\S(\S^1_\mathbb{M}, \S^2_\mathbb{M}) \triangleq \max_{k \in\{1, \ldots, |\mathbb{M}|\}, t< T}\dtv{\mu^{1,k}_t, \mu^{2,k}_t}$ where $\mu^{1,k}\in \S^1_\mathbb{M}$ and $\mu^{2,k}\in \S^2_\mathbb{M}$.
\end{corollary}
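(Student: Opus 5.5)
The regularized risk-averse cost in~\eqref{eq:complete-optimization} is a composition of simple maps. I will prove its continuity by checking each ingredient and then using that compositions, sums and products of continuous functions on the relevant (compact, finite-dimensional) domains are continuous. The objective has the form
\begin{align*}
G(\pi'_t,\pi,\S_\mathbb{M}) = \frac{1}{\tau}\log\Big(\sum_{k=1}^{|\mathbb{M}|}\Gamma^*_\mathbb{M}(\mu_0^k)\exp\big(-\tau\,\pi'^{\mathsf T}_t(\cdot|x)Q^{\pi}_{\mu^k,t}(x,\cdot)\big)\Big) + \alpha\nu(\pi'_t(\cdot|x)).
\end{align*}
The dependence on $\pi$ (through $\pi_{-t}$, as noted in the proof of Lemma~\ref{lmm:convexity}) and on $\S_\mathbb{M}$ enters only through the finitely many numbers $Q^{\pi}_{\mu^k,t}(x,u)$, for $k\le|\mathbb{M}|$ and $u\in\U$.

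\textbf{Step 1.} By Lemma~\ref{lmm:continuity-Q}, each $Q^{\pi}_{\mu^k,t}(x,u)$ is continuous in $\pi$ under $d_\Pi$ and in $\mu^k$ under $d_\M$. I will upgrade this to joint continuity using the triangle inequality
\begin{align*}
|Q^{\pi}_{\mu}-Q^{\pi'}_{\mu'}|\le|Q^{\pi}_{\mu}-Q^{\pi'}_{\mu}|+|Q^{\pi'}_{\mu}-Q^{\pi'}_{\mu'}|.
\end{align*}
The first term is at most $C\,d_\Pi(\pi,\pi')$ with a constant $C$ that is uniform in $\mu$. For the second term I need a bound uniform in $\pi'$. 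I will obtain it by rerunning the backward induction with the Lipschitz constants $L_r$ and $L_f$ and the bound $B_Q$; since $\pi'$ enters only as convex weights, it does not affect the constants. Finally, $d_\S(\S^1,\S^2)\ge d_\M(\mu^{1,k},\mu^{2,k})$ for every $k$, so continuity in $\S_\mathbb{M}$ under $d_\S$ follows.

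\textbf{Step 2.} The inner product $\pi'^{\mathsf T}_t(\cdot|x)Q(x,\cdot)$ is bilinear in finitely many coordinates, hence jointly continuous. Moreover $|Q|\le B_Q$ and $\pi'_t(\cdot|x)$ lies in the simplex, where the total-variation metric is equivalent to the Euclidean one. The log-sum-exp map is smooth, and its argument stays in a compact set bounded away from zero: each summand is at least $\Gamma^*_\mathbb{M}(\mu_0^k)e^{-\tau B_Q}$, and the weights sum to one. So the first term of $G$ is jointly continuous in $(\pi'_t,\pi,\S_\mathbb{M})$.

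\textbf{Step 3.} The regularizer term $\alpha\nu(\pi'_t(\cdot|x))$ is continuous in $\pi'_t$ because $\nu$ is convex and finite on the relative interior of the simplex. I expect this to be the main obstacle. A convex function is continuous on the relative interior of its domain, but the log-barrier regularizer blows up at the boundary, and entropy is continuous up to the boundary only because of the convention $0\log 0=0$. I will therefore state continuity on the relative interior, extended to the closed simplex in the extended-real sense. For entropy, which gives $m=1$ in Assumption~\ref{assmpt: strongly-convex-nu}, I will verify continuity on the closed simplex directly, since $p\log p$ extends continuously to $p=0$.

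Summing the two terms from Steps~2 and~3 gives the continuity claimed in Corollary~\ref{cor:continuity-cost}.
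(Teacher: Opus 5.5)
Your proposal is correct and follows the paper's argument: continuity of $Q^{\pi}_{\mu^k,t}$ from Lemma~\ref{lmm:continuity-Q}, linearity of the exponent in $\pi'_t$, continuity of log-sum-exp and of $\nu$, and closure of continuity under composition and sums. You also supply the joint-continuity and uniform-constant bookkeeping the paper leaves implicit, and your Step~3 caveat is a fair sharpening rather than a gap: the paper simply asserts that $\nu$ is continuous, which does not follow from (even strong) convexity on the closed simplex and fails in the real-valued sense for the log-barrier, so it is cleanest to take continuity of $\nu$ (possibly extended-real-valued) as a standing hypothesis rather than try to derive it.
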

\begin{proof}
The log-sum-exp function is continuous, and the inputs to the exponents consist of terms linear (therefore continuous) in $\pi'_t$ and continuous in $Q^{\pi}_{\mu^k, t}(x, \cdot)$ for all states $x\in\X$ and mean-field flows $\mu\in\S_\mathbb{M}$ (from Lemma~\ref{lmm:continuity-Q}).
Furthermore, the convex regularizer is also continuous.
Thus, continuity follows directly from the fact that compositions and sums of continuous functions are continuous. 
\end{proof}

\begin{proof}
    We define the operator $\Phi = \B^{\textrm{RQE}}_\opt\circ\B^{\textrm{RQE}}_\prop: \Pi\to\Pi$ and show that the requirements of Kakutani's fixed point theorem hold for $\Phi$.

    The space of policies $\Pi$ can be identified as the 
    Cartesian product of \textit{finite} simplexes $ \prod_{t=0}^{T-1}\prod_{x\in \X}\P({\U)}$ where $\pi_t(\cdot | x)  \subseteq \P({\U})$ for all $x\in \X$ and $t\in\{0, \ldots, T-1\}$.
    The finite-dimensional simplices are non-empty, convex and compact subsets of the Euclidean space, and hence their finite Cartesian product $\Pi$ have the same properties.

    Note that $\Phi$ maps to a non-empty set as the set of induced mean-field flows are uniquely determined and the entire domain $\Pi$ of the operator $\B^{\textrm{RQE}}_\opt$ forms the feasible set.
    For any $\pi, \Phi(\pi)$ is trivially convex, since the optimal policy is unique as shown in the following. 
    Consider $\pi_a, \pi_b \in \Phi(\pi)$. For $t=T-1$, from~\eqref{standard-Q}, we have $Q^{\pi^a}_{\mu^k, T-1}(x, u) = Q^{\pi^b}_{\mu^k, T-1}(x, u) \triangleq  r_{T-1}(x, u, \mu^k_{T-1})$ for all states $x\in\X$, $u\in\U$ and mean-field flows $\mu^k\in\B^{\textrm{RQE}}_\prop(\pi)$.
    From Lemma~\ref{lmm:convexity}, it recursively follows that $\pi^a_t(u|x) = \pi^b_t(u|x)$ for all $x\in\X$, $u\in\U$ and $t=\{0,\ldots, T-1\}$.
    Consequently, $\pi^a = \pi^b$ and the optimal policy is unique.

    Lastly, we show that $\Phi$ has a closed graph.
    Let $(\pi_n, \pi'_n)\to (\pi, \pi')$ be arbitrary sequences such that $\pi'_n \in \Phi(\pi_n)$.
    It is sufficient to show that $\pi'\in\Phi(\pi)$~\cite{cui2021approximately}.
Let us assume that that $\pi'\notin\Phi(\pi)$.
This implies that for some $x\in\X$ and $t\in\{0, \ldots, T-1\}$, there exists $\bar\pi_t(\cdot|x)\in\Pi_t$ such that
\begin{align*}
    c_t^{(\bar\pi_t, \pi'_{-t}), ~\alpha}(x; \B^{\textrm{RQE}}_\prop(\pi)) < c_t^{(\pi'_t, \pi'_{-t}), ~\alpha}(x; \B^{\textrm{RQE}}_\prop(\pi)).
\end{align*}
    Define
    \begin{align}\label{eq:delta-smol}
        \delta \triangleq   c_t^{(\pi'_t, \pi'_{-t}), ~\alpha}(x; \B^{\textrm{RQE}}_\prop(\pi)) - c_t^{(\bar\pi_t, \pi'_{-t}), ~\alpha}(x; \B^{\textrm{RQE}}_\prop(\pi)).
    \end{align}
  From Corollary~\ref{cor:continuity-cost}, there exists $N_1, N_2\in \mathbb{N}$ such that,
    \begin{align}\label{eq:delta-N1-N2}
    \vert c_t^{(\bar\pi_t, \pi'_{n, -t}), ~\alpha}(x; \B^{\textrm{RQE}}_\prop(\pi_n)) -  c_t^{(\bar\pi_t, \pi'_{-t}), ~\alpha}(x; \B^{\textrm{RQE}}_\prop(\pi))
\vert < \frac{\delta}{4}\quad \forall ~n > N_1, \nonumber\\
\vert c_t^{(\pi'_t, \pi'_{n, -t}), ~\alpha}(x; \B^{\textrm{RQE}}_\prop(\pi_n)) -  c_t^{(\pi'_t, \pi'_{-t}), ~\alpha}(x; \B^{\textrm{RQE}}_\prop(\pi))
\vert < \frac{\delta}{4}\quad \forall ~n > N_2.
    \end{align}
    
    Similarly, there exists $M\in \mathbb{N}$ such that for all $n>M$,
    \begin{align}\label{eq:delta-M}
    \vert c_t^{(\pi'_{n, t}, \pi'_{n, -t}), ~\alpha}(x; \B^{\textrm{RQE}}_\prop(\pi_n)) -  c_t^{(\pi'_{t}, \pi'_{n, -t}), ~\alpha}(x; \B^{\textrm{RQE}}_\prop(\pi_n))
\vert < \frac{\delta}{4}\quad \forall ~n > M.
    \end{align}

    It follows from~\eqref{eq:delta-smol}-~\eqref{eq:delta-M} that for all $n>\max(N_1, N_2, M)$,
    \begin{align*}
        c_t^{(\bar\pi_t, \pi'_{n, -t}), ~\alpha}(x; \B^{\textrm{RQE}}_\prop(\pi_n)) < c_t^{(\pi'_{n, t}, \pi'_{n, -t}), ~\alpha}(x; \B^{\textrm{RQE}}_\prop(\pi_n)),
    \end{align*}
    thereby implying that $\pi_n' \notin\Phi(\pi_n)$, which is a contradiction.
    It follows that $\Phi$ has a closed graph.
    By Kakutani's fixed point theorem, there exists a fixed point $\pi^*_{\textrm{RQE}} $ that generates the set of mean-field flows $\B^{\textrm{RQE}}_\prop(\pi^*_{\textrm{RQE}})$.
    The pair $(\pi^*_{\textrm{RQE}} ,\B^{\textrm{RQE}}_\prop(\pi^*_{\textrm{RQE}})) $ is an MF-RQE by definition.
 
\end{proof}

\section{Proof of Theorem~\ref{thm:MF-RQE-finite-population}}

\begin{lemma}\label{lmm:pre-finite-dev}
For all $t\geq t'$, the empirical distributions $\M^{N_1}_{t}, \hat\M^{N}_{t}$ under the identical policy $\pi$ obey
    \begin{align*}
    \mathbb{E}_{{\pi}}
    \Big[ \dtv{\M^{N_1}_{t}, \hat\M^{N}_{t}}  \Big \vert \bfx_{t'}^{N_1}=\hat\bfx_{t'}^{N_1}\! \Big] \leq \Residue.
\end{align*}
\end{lemma}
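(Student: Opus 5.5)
The plan is to prove the bound by induction on $t \ge t'$. I read the statement as comparing two realizations of the same $N$-agent system (the superscript $N_1$ in $\M^{N_1}_t$ I take to be the same population size as in the definition of $\empMu{\cdot}$). Both realizations run the identical policy $\pi$ and are started from the same joint state at time $t'$. Write $e_t \triangleq \mathbb{E}_\pi[\dtv{\M^{N_1}_t,\hat\M^{N}_t} \mid \bfx^{N_1}_{t'}=\hat\bfx^{N_1}_{t'}]$. At $t=t'$ the two empirical distributions coincide, so $e_{t'}=0$. The goal is the recursion
\begin{equation*}
e_{t+1} \le L\, e_t + \frac{c}{\sqrt N},
\end{equation*}
with constants $L$ and $c$ independent of $N$. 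Unrolling it over at most $T$ steps gives $e_t \le c\sum_{k=0}^{T-1}L^k/\sqrt N = \Residue$.

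For the inductive step, define the one-step mean-field map as in \eqref{eqn:mf-dynamics-local}:
\begin{equation*}
\Phi_t(\mu)(x') = \sum_{x\in\X}\sum_{u\in\U} f_t(x'|x,u,\mu)\,\pi_t(u|x)\,\mu(x).
\end{equation*}
By the triangle inequality,
\begin{equation*}
\dtv{\M_{t+1},\hat\M_{t+1}} \le \dtv{\M_{t+1},\Phi_t(\M_t)} + \dtv{\Phi_t(\M_t),\Phi_t(\hat\M_t)} + \dtv{\Phi_t(\hat\M_t),\hat\M_{t+1}}.
\end{equation*}
I bound the three terms as follows.

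\textbf{Sampling terms (first and third).} Conditional on the joint state $\bfX_t$, the agents' actions and next states are independent draws, because the policy is identical and the kernel depends on the others only through $\M_t$. Moreover, $\mathbb{E}[\M_{t+1}(x')\mid\bfX_t]=\Phi_t(\M_t)(x')$. Hence each coordinate $\M_{t+1}(x')$ is an average of $N$ independent indicators with conditional variance at most $1/(4N)$. Applying Jensen coordinatewise gives
\begin{equation*}
\mathbb{E}\big[\dtv{\M_{t+1},\Phi_t(\M_t)}\big] \le \frac{|\X|}{4\sqrt N},
\end{equation*}
and the same bound holds for the hatted system. This yields $c=|\X|/2$.

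\textbf{Propagation term (middle).} This is a deterministic Lipschitz estimate for $\Phi_t$. Splitting $\Phi_t(\mu)-\Phi_t(\mu')$ into the part where the weight $\mu(x)$ changes and the part where the kernel argument changes, I expect
\begin{equation*}
\norm{\Phi_t(\mu)-\Phi_t(\mu')}_1 \le \norm{\mu-\mu'}_1 + \sum_x \mu'(x)\sum_u \pi_t(u|x)\sum_{x'}\abs{f_t(x'|x,u,\mu)-f_t(x'|x,u,\mu')}.
\end{equation*}
By Assumption~\ref{assmpt:lipschitiz-model}, the second term is at most $L_f\,\dtv{\mu,\mu'}$. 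Hence $\dtv{\Phi_t(\mu),\Phi_t(\mu')}\le (1+L_f/2)\,\dtv{\mu,\mu'}$, so $L=1+L_f/2$.

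Taking conditional expectations, using the tower property to pass from the conditioning on $\bfX_t$ to the conditioning on the time-$t'$ state, closes the induction. The step I expect to be most delicate is the sampling bound. The conditional-independence claim must be justified carefully: the kernel depends on $\M_t$, which is itself random. The fix is to note that, given the whole joint state $\bfX_t$, $\M_t$ is fixed, so the transitions are genuinely independent across agents. I would also make sure that the two systems may be coupled arbitrarily, for example run independently, since only the marginal concentration of each system around $\Phi_t$ is used.
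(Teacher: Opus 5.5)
Your proposal is correct and follows essentially the same route as the paper: a three-term triangle inequality around the deterministic one-step map $\Phi_t$, $\mathcal{O}(1/\sqrt{N})$ sampling terms, the same propagation constant $1+L_f/2$, and unrolling $e_{t+1}\le L e_t + c/\sqrt{N}$ from $e_{t'}=0$. The only difference is that you derive the sampling bound explicitly (variance plus Jensen, giving $|\X|/(4\sqrt{N})$ per term) where the paper simply cites \citet{guan2024zero}.
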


\begin{proof}
The proof follows a similar structure to Proposition 3 in~\cite{jeloka2025learninglargescalecompetitiveteam}. We begin by computing:
\begin{align}\label{eq:blue-splits-estimator-eval}
    \mathbb{E}_{{\pi}}
    \Big[ \dtv{\M^{N}_{t+1}, \hat\M^{N}_{t+1}}   \Big  \vert \bfx_t, \hat\bfx_t\Big]  
    &\leq\mathbb{E}_{{\pi}}
    \Big[ \dtv{\M^{N}_{t+1}, \M_{t+1}} \Big \vert \bfx_t, \hat\bfx_t \Big]\nonumber\\
    &+ \mathbb{E}_{{\pi}}
    \Big[ \dtv{\M_{t+1}, \hat\M_{t+1}} \Big \vert \bfx_t, \hat\bfx_t \Big]\nonumber\\
    &+ \mathbb{E}_{{\pi}}
    \Big[ \dtv{\hat\M^{N}_{t+1}, \hat\M_{t+1}} \Big \vert \bfx_t, \hat\bfx_t \Big],
\end{align}
where $\M_{t+1} = \M^{N}_{t}F(\M^{N}_{t}, \pi)$ and $\hat\M_{t+1} = \hat\M^{N}_{t}F(\hat\M^{N}_{t}, \pi)$ are shorthand notations for the next induced mean-field from the \textit{infinite}-population \textit{deterministic} dynamics~\citep{guan2024zero}. 
From~\cite{guan2024zero}, we have 
\begin{align*}
    \mathbb{E}_{{\pi}}
     \Big[ \dtv{\M^{N}_{t+1}, \hat\M^{N}_{t+1}}  \Big \vert \bfx_t, \hat\bfx_t\Big]  \leq \Residue,~\textrm{and}~~
     \mathbb{E}_{{\pi}}
    \Big[ \dtv{\hat\M^{N}_{t+1}, \hat\M_{t+1}} \Big \vert \bfx_t, \hat\bfx_t \Big]\leq \Residue.
\end{align*}
Thus, we are left to simplify 
$\mathbb{E}_{{\pi}}\Big[ \dtv{\M_{t+1}, \hat\M_{t+1}} \Big \vert \bfx_t, \hat\bfx_t\Big] = \dtv{\M^{N}_{t}F(\M^{N}_{t}, \pi), \hat\M^{N}_{t}F(\hat\M^{N}_{t}, \pi)}$, 
where the expectation vanishes due to \textit{deterministic} transitions under identical policies $\pi$.
It follows, 
\begin{align}\label{eq:I1-I2}
    2\dtv{\M^{N}_{t}F(\M^{N}_{t}, \pi), &\hat\M^{N}_{t}F(\hat\M^{N}_{t}, \pi)} \\
    &=\sum_{x'\in\X}\Big\vert \sum_{x\in \X}\sum_{u \in \U} f_t(x'|x, u, \M^{N}_{t})\pi_t(u|x) \M^{N}_{t}(x) \nonumber \\
    &-\sum_{x \in \X} \sum_{u \in \U} f_t(x'|x, u,\hat\M^{N}_{t})\pi_t(u|x) \hat\M^{N}_{t}(x)\Big\vert \nonumber \\
    &\leq \sum_{x'\in\X}\sum_{x\in \X}\sum_{u \in \U} \Big\vert  f_t(x'|x, u, \M^{N}_{t})\pi_t(u|x) \M^{N}_{t}(x) \nonumber 
    - f_t(x'|x, u,\hat\M^{N}_{t})\pi_t(u|x) \hat\M^{N}_{t}(x)\Big\vert\nonumber .
\end{align}

We now individually bound the terms in using the definition of total variation distance, the identity
\begin{align*}
|b(ac -a'c')| \leq b(|a-a'|c  +a'|c-c'|), \quad a,b,c\geq 0,
\end{align*}
and Assumption~\ref{assmpt:lipschitiz-model} as follows:
\begin{align}\label{eq:blue-step-1-pre-final}
2\dtv{\M^{N}_{t}F(\M^{N}_{t}, \pi), \hat\M^{N}_{t}F(\hat\M^{N}_{t}, \pi)} \leq (2+ L_f)\dtv{\M^{N}_{t}, \hat \M^{N}_{t}}.
\end{align}

Thus, we can rewrite~\eqref{eq:blue-splits-estimator-eval} using the above computations to obtain
\begin{align}\label{eq:blue-step-1-final}
     &\mathbb{E}_{{\pi}}
    \Big[ \dtv{\M^{N}_{t+1}, \hat\M^{N}_{t+1}}  \Big \vert \bfx_t, \hat\bfx_t\Big]  \leq\frac{1}{2}(2+L_f)\dtv{\M^{N}_{t}, \hat \M^{N}_{t}} + \Residue.
\end{align}

Define $a_t \triangleq \mathbb{E}_{{\pi}}
    \Big[ \dtv{\M^{N_1}_{t}, \hat\M^{N}_{t}}  \Big \vert \bfx_{t'}^{N_1}=\hat\bfx_{t'}^{N_1}\! \Big]$.
At $t=t'$, $\M^{N_1}_{t'} = \hat\M^{N_1}_{t'} $.
Thus, $a_{t'} = 0$.
We proceed to show that $a_{t+1} \leq \kappa_1 + \kappa_2a_t$ where $\kappa_1=\Residue$ and $\kappa_2=\frac{1}{2}(2+L_f)$.
From the law of iterated expectations, 
\begin{align*}
    \mathbb{E}_{\pi}
    \Big[ \dtv{\M^{N_1}_{t+1}, &\hat\M^{N}_{t+1}}  \Big \vert \bfx_{t'}^{N_1}=\hat\bfx_{t'}^{N_1}\! \Big]
    \\&= \mathbb{E}_{{\pi}}
    \left[ \mathbb{E}_{{\pi}}\Big[\dtv{\M^{N}_{t+1}, \hat\M^{N}_{t+1}}  \Big \vert \bfx_t, \hat\bfx_t\Big]\Big \vert \bfx_{t'}^{N}=\hat\bfx_{t'}^{N} \right]\\
    &\leq \mathbb{E}_{{\pi}}
    \left[\kappa_1 + \kappa_2\dtv{\M^{N}_{t}, \hat\M^{N}_{t}} \vert \bfx_{t'}^{N}=\hat\bfx_{t'}^{N} \! \right]\\
    &= \kappa_1 + \kappa_2 a_t.
\end{align*}
We have $a_{t'} = 0$, $a_{t'+1} < \kappa_1$, $a_{t'+2} < \kappa_1 + \kappa_2a_{t'+1}$ or $a_{t'+2} < \kappa_1(\kappa_2 + 1)$, $a_{t'+3} < \kappa_2a_{t'+2} + \kappa_1$ or $a_{t'+3} < \kappa_1(\kappa_2^2 + \kappa_2 + 1) $ and so on.
This can be written compactly for all $t>t'$ as ($\kappa_2  \neq 1$)
\begin{align*}
    a_{t} \leq \kappa_1\sum_{s=1}^{t-t'}\kappa_2^s = \kappa_1 \frac{\kappa_2^{t-t'}-1}{\kappa_2-1} = \Residue.
\end{align*}
\end{proof}

\begin{lemma}\label{lmm:finite-deviation}
    Consider the sequence of empirical distributions $\mathcal{M}^N = \{\M^N_0, \ldots, \M^N_{t-1}, \bar\M^N_t, \bar{\bar\M}^N_{t+1}, \ldots \bar{\bar\M}^N_T\}$ generated by all agents following an identical policy $\pi$ from time steps $0$ to $t-1$, allowing one agent to deviate at time step $t$ and $t+1$, and then following $\pi$ again, where $\mathbf{X}_0\sim \mu_0$ and $\M_0 = \empMu{\mathbf{X}_0}$.
    Furthermore, define the deterministic mean-field flow induced by policy $\pi$ as $\mu = \{\mu_0, \mu_1, \ldots, \mu_T\}$. 
    Then, $\expct{\dtv{\M^N, \mu} \big\vert \mathbf{X}_0\sim \mu_0}{\pi} \leq \Residue$.
\end{lemma}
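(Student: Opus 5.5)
The plan is to bound the expected total variation distance between the (perturbed) empirical flow $\mathcal{M}^N$ and the deterministic flow $\mu$ one time step at a time, using the metric $\dtv{\M^N,\mu}\triangleq\max_{s\le T}\dtv{\M^N_s,\mu_s}$. Since $T$ is finite, it suffices to show $\mathbb{E}[\dtv{\M^N_s,\mu_s}]\le\Residue$ for each $s$ and then bound the maximum by the sum over $s$, losing only a factor $T+1$. We handle the three phases of the sequence (before the deviation, at the deviation, after the deviation) separately. The basic tool is the one-step estimate already derived in the proof of Lemma~\ref{lmm:pre-finite-dev}. Splitting through the deterministic image $\M^N_sF(\M^N_s,\pi)$ gives
\begin{align*}
\mathbb{E}\big[\dtv{\M^N_{s+1},\mu_{s+1}}\,\big\vert\,\bfx_s\big]\le \kappa_2\,\dtv{\M^N_s,\mu_s}+\Residue,\qquad \kappa_2=\tfrac12(2+L_f).
\end{align*}
This combines two facts. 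First, the sampling error $\mathbb{E}[\dtv{\M^N_{s+1},\M^N_sF(\M^N_s,\pi)}\mid\bfx_s]\le\Residue$ from \cite{guan2024zero}, which is a concentration bound for $N$ conditionally independent transitions. Second, the Lipschitz estimate \eqref{eq:blue-step-1-pre-final} for the deterministic map, applied with $\mu_{s+1}=\mu_sF(\mu_s,\pi)$.

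\emph{Initial step and phase $0\le s\le t-1$.} Since $\mathbf{X}_0$ consists of $N$ i.i.d.\ draws from $\mu_0$ on the finite set $\X$, we have $\mathbb{E}[\dtv{\M^N_0,\mu_0}]\le\frac12\sum_x\sqrt{\mu_0(x)(1-\mu_0(x))/N}\le\frac{\sqrt{|\X|}}{2\sqrt N}$. Iterating the recursion $a_{s+1}\le\kappa_2a_s+\kappa_1$ as in Lemma~\ref{lmm:pre-finite-dev} then gives $a_s\le\Residue$ for all $s\le t$, with a constant depending on $T$, $L_f$ and $|\X|$.

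\emph{Deviation steps $t$ and $t+1$.} At these steps the agents follow a joint policy in which $N-1$ agents use $\pi$ and one agent uses some other policy $\pi'$. We compare with the all-$\pi$ deterministic update. The deviating agent contributes at most $1/N$ mass to the empirical distribution, so its effect on the next empirical distribution is at most $1/N$ in total variation. The effect on the transition kernels of the other agents is handled by Assumption~\ref{assmpt:lipschitiz-model}, since they depend on $\M^N_s$ only through the Lipschitz map. Concretely, we write $\bar\M^N_{s+1}$ as the empirical law of the $N-1$ non-deviators plus the $1/N$ point mass of the deviator, and apply the concentration bound to the $N-1$ non-deviators. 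This yields
\begin{align*}
\mathbb{E}\big[\dtv{\bar\M^N_{s+1},\mu_{s+1}}\big]\le\kappa_2\,\mathbb{E}\big[\dtv{\M^N_s,\mu_s}\big]+\Residue+\tfrac{2}{N},
\end{align*}
and the term $2/N$ is itself $\Residue$. The same recursion therefore continues through steps $t$ and $t+1$.

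\emph{Phase $s\ge t+2$.} All agents again follow $\pi$, so the basic recursion applies unchanged. We may alternatively invoke Lemma~\ref{lmm:pre-finite-dev} started at $t'=t+2$ together with the triangle inequality. Summing over $s\le T$ gives the claim.

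The main obstacle is the deviation step. There the agents are no longer exchangeable, so the concentration result of \cite{guan2024zero} for identical policies does not apply verbatim. We would isolate the single deviating agent, apply the concentration bound to the remaining $N-1$ i.i.d.-conditional transitions (using $1/\sqrt{N-1}=\Residue$), and absorb the renormalization from $N-1$ to $N$ into an $\mathcal{O}(1/N)$ term.
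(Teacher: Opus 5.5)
Your proposal is correct. It reaches the bound through a different organization than the paper. You run one forward recursion $a_{s+1}\le\kappa_2 a_s+\Residue$ directly against the deterministic flow $\mu$ at every step, and a deviation step costs only an extra $\mathcal{O}(1/N)$ for the deviator's point mass. The paper argues piecewise instead:
\begin{itemize}
\item it cites \cite{guan2024zero} for the steps before $t$;
\item it uses the deterministic estimate $\dtv{\bar{\M}^N_t,\M^N_t}\le 1/N$ when agent~1's state is replaced at time $t$;
\item for $s\ge t+1$ it chains triangle inequalities through $\bar{\bar{\M}}^N_s$, $\bar{\M}^N_s$, $\M^N_s$ and the one-step image, controlling the post-deviation drift with Lemma~\ref{lmm:pre-finite-dev}.
\end{itemize}
The paper's route is modular and reuses that lemma. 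Yours is self-contained and never needs Lemma~\ref{lmm:pre-finite-dev}. It also does not depend on how or when the deviation occurs (state replacement or action deviation), because a single agent moves the empirical measure by at most $1/N$ in total variation.

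Your version is also more explicit at two points where the paper is loose:
\begin{itemize}
\item You carry the Lipschitz factor $\kappa_2$ forward from $\dtv{\M^N_s,\mu_s}$. The paper's pre-deviation step writes $\mathbb{E}[\dtv{\M^N_{t'+1},\mu_{t'+1}}]$ as the expectation of $\dtv{\M^N_{t'+1},\M_{t'+1}}$, where $\M_{t'+1}$ is the one-step image of the random $\M^N_{t'}$ rather than $\mu_{t'+1}$.
\item You bound the expectation of the maximum over time by a sum. The paper effectively takes the maximum of the per-step expectations.
\end{itemize}

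Finally, the obstacle you flag at the deviation step is milder than you suggest. Let $p_i$ be agent $i$'s conditional next-state law given $\bfx_s$. Then $\mathbb{E}\big[\,\big|\frac1N\sum_i(\mathds{1}_{x'}(X_{i,s+1})-p_i(x'))\big|\,\big\vert\,\bfx_s\big]\le\frac1N\sqrt{\sum_i p_i(x')}$, and summing over $x'$ gives $\sqrt{|\X|/N}$. This uses only conditional independence, not identical laws. So you can apply the concentration bound to all $N$ agents, with the deviator's law in place of $p_1$. The deviation then costs at most $1/N$, and no renormalization from $N-1$ to $N$ is needed.
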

\begin{proof}
At time step $0$, it follows from the law of large numbers that $\expct{\dtv{\M^N_0, \mu_0}}{\pi} \leq \Residue$~\citep{guan2024zero}.
    For all time steps $1$ to $t-1$, starting from the identical initial distribution $\mu_0$ and following an identical policy $\pi$, from~\cite{guan2024zero}, we have 
\begin{align*}
    \mathbb{E}_{{\pi}}
    \Big[ \dtv{\M^{N}_{t'+1}, \M_{t'+1}} \Big \vert \bfx_{t'} \Big] &\leq \Residue.
\end{align*}
Then, by the law of iterated expectations, it follows that
\begin{align*}
    \expct{\dtv{\M^N_{t'+1}, \mu_{t'+1}} \big\vert \mathbf{X}_0\sim \mu_0}{\pi} 
    &= \mathbb{E}_{\pi}\Big[
    \expct{ \dtv{\M^{N}_{t'+1}, \M_{t'+1}} \Big \vert \mathbf{X}_{t'}}{\pi} \big\vert \mathbf{X}_0\sim \mu_0\Big]\leq \Residue.
\end{align*}

Without loss of generality, we can assume that the deviating agent $i=1$.
The deviated empirical distribution at time $t$ is given by 
\begin{align*}
    \bar\M^N_t = \frac{1}{N} \sum_{i=2}^{N_1} \mathds{1}_x(X_{i,t}) + \frac{1}{N} \mathds{1}_x(\bar{X}_{1,t}).
\end{align*}
Consequently, 
\begin{align}\label{eq:deviation}
    \dtv{\bar\M^{N}_{t}, \M^{N}_{t}} &= \big\vert\frac{1}{N} \sum_{i=2}^{N_1} \mathds{1}_x(X_{i,t}) + \frac{1}{N} \mathds{1}_x(\bar{X}_{1,t}) - \frac{1}{N} \sum_{i=1}^{N_1} \mathds{1}_x(X_{i,t})\big\vert\nonumber\\
    &= \big\vert \frac{1}{N} \mathds{1}_x(\bar{X}_{1,t})\big\vert \leq \frac{1}{N}.
\end{align}

\begin{align*}
    \expct{\dtv{\bar\M^N_{t}, \mu_{t}} \big\vert \mathbf{X}_0\sim \mu_0}{\pi} 
    &= \mathbb{E}_{\pi}\Big[
    \expct{ \dtv{\bar\M^{N}_{t}, \M_{t}} \Big \vert \mathbf{X}_{t-1}^{N}}{\pi} \big\vert \mathbf{X}_0\sim \mu_0\Big]\\
    &\leq \mathbb{E}_{\pi}\Big[
    \expct{ \dtv{\bar\M^{N}_{t}, \M^{N}_{t}}  + \dtv{\M^{N}_{t}, \M_{t}}\Big \vert \mathbf{X}_{t-1}^{N}}{\pi} \big\vert \mathbf{X}_0\sim \mu_0\Big]\\
    &\leq \Residue,
\end{align*}
where the last inequality comes by combining the above two results.
At each time step $s\geq t+1$,
\begin{align*}
    &\expct{\dtv{\bar{\bar\M}^N_{s+1},  \mu_{s+1}} \big\vert \mathbf{X}_0\sim \mu_0}{\pi} \\
  &\qquad\leq \mathbb{E}_{\pi}\Big[
    \expct{ \dtv{\bar{\bar\M}^{N}_{s}, \bar\M^{N}_{s}}  + \dtv{{\bar\M}^{N}_{s}, \M^{N}_{s}}  + \dtv{\M^{N}_{s}, \M_{s}}}{\pi}  \big\vert \mathbf{X}_0 \sim \mu_0\Big]\\
    & \qquad = \mathbb{E}_{\pi}\Big[
    \expct{ \expct{\dtv{\bar{\bar\M}^{N}_{s}, \bar\M^{N}_{s}} \Big \vert \mathbf{\bar{\bar{X}}}_{t}^{N}= \bar{\mathbf{X}}_{t}}{\pi}  + \dtv{{\bar\M}^{N}_{s}, \M^{N}_{s}}  \Big \vert \mathbf{X}_{t-1}^{N}=\mathbf{\bar{X}}_{t-1}^{N}}{\pi}  + \dtv{\M^{N}_{s}, \M_{s}} \big\vert \mathbf{X}_0 \sim \mu_0\Big]\\
    & \qquad\leq \Residue,
\end{align*}
where the last inequality follows from Lemma~\ref{lmm:pre-finite-dev}.
We have shown that $\dtv{\M^N_t, \mu_t}\leq \Residue$ for all $t\in\{0, \ldots, T-1\}$.
Thus, $\expct{\dtv{\M^N, \mu} \big\vert \mathbf{X}_0\sim \mu_0}{\pi} \triangleq \max_t\dtv{\M^N_t, \mu_t}\leq \Residue$.
\end{proof}

\begin{proof}[Proof of Theorem~~\ref{thm:MF-RQE-finite-population}]
We can now proceed to proving Theorem~\ref{thm:MF-RQE-finite-population}
by computing a relation between the finite-population objective
\begin{align}\label{eq:BR-RA-V-finite}
c_t^{\alpha, i, N}(\pi_1,\!\ldots\!,\pi_N; x)
\!=\!\rho_{\Gamma^*_\mathbb{M}}\left( V^\pi_{i,t}(x) \right)
\!+\! \alpha \nu\big(\pi_{i,t}(\cdot|x)\big),
\end{align}
and the infinite-population objective~\eqref{eq: BR-RA-V} under an MF-RQE $(\pi^{*}_{\textrm{RQE}},  \S^*_\mathbb{M})$.
In particular, we show that for all agents $i\in[N]$, for all states $x\in\X$ and times $t\in\{0,\ldots, T-1\}$,
\begin{align*}
      \big\vert c_t^{\pi^{*}_{\textrm{RQE}}, \alpha}(x; \S^*_\mathbb{M}) - c_t^{\alpha, i, N}\big(\pi^*_{\textrm{RQE} , i}, \pi^*_{\textrm{RQE}, -i}; ~x\big) \big\vert \leq \Residue,
\end{align*} 
and for all $\pi_t\in\Pi_t$,
\begin{align*}
      \big\vert c_t^{(\pi_t, \pi^{*}_{\textrm{RQE}, -t}), \alpha}(x; \S^*_\mathbb{M}) - c_t^{\alpha, i, N}\big((\pi_{t}, \pi^*_{\textrm{RQE}, i, -t}), \pi^*_{\textrm{RQE}, -i}; ~x\big) \big\vert \leq \Residue.
\end{align*} 
Using these two relations along with the fact that $c_t^{\pi^{*}_{\textrm{RQE}}, \alpha}(x; \S^*_\mathbb{M}) \leq c_t^{(\pi_t, \pi^{*}_{\textrm{RQE}, -t}), \alpha}(x; \S^*_\mathbb{M}) $ for all $\pi_t\in\Pi_t$, we can directly obtain that the identical MF-RQE policy $\pi^{*}_{\textrm{RQE}}$ is an $\epsilon$-RQE for the finite population game where $\epsilon = \Residue$.

We define the finite-population RQ-MFG value function for agent $i\in[N]$ at any given state $x\in\X$ and at time $t$ as follows:
\begin{align}\label{eq:finite-V}
V^\pi_{i,t}(x_{i}, \x_{-i, t}) =  \expct{{\sum_{t'=t}^Tr_t(x_{i,t'}, u_{i, t'}, \mu^N_{t'})} \big\vert x_{i}, \x_{-i, t}}{\pi},
\end{align}
where $\x_{-i, t} = \{x_{j, t}\}_{j\neq i}$ is the joint state of all the agents except $i$.
Note that the expectation is taken with respect to the joint policy $\pi$ and the stochastic transition dynamics.
Using~\eqref{eq:finite-V} and Theorem~\ref{thm:dual-representation} under the KL penalty function, we can expand~\eqref{eq:BR-RA-V-finite} for all states $x\in\X$ as follows:
\begin{align}\label{eq:dual-rep=finite}
     &\rho_{\Gamma^*_\mathbb{M}}\left( V^\pi_{i,t}(x) \right)\nonumber\\
     &\equiv \sup_{\hat{\Gamma}_\mathbb{M}\in\P(\mathbb{M})} -\expct{V^\pi_{i,t}(x, \x_{-i, t})}{\mu_0\sim\hat{\Gamma}_\mathbb{M}} -  \frac{1}{\tau}D(\hat{\Gamma}_\mathbb{M}, \Gamma^*_\mathbb{M})\nonumber\\
     &= \frac{1}{\tau}\log \Big(\sum_{k=1}^{|\mathbb{M}|}\Gamma^*_\mathbb{M}(\mu_0^k)\exp^{-\tau V^\pi_{i,t}(x, \x^k_{-i, t})}\Big)\nonumber\\
     &=\frac{1}{\tau}\textrm{LSE} \Big(\log w_k{-\tau \sum_{u\in\U}\pi_{i, t}(u|x)\Big(r(x, u, 
     \bar\mu^{N, k}_t) + \expct{{\sum_{t'=t+1}^Tr_t(x_{i,t'}, u_{i, t'}, \bar\mu^{N, k}_{t'})} \big\vert x_{i, t} = x, u_{i, t} =u,  \x^k_{-i, t}}{\pi}}\Big)\Big),
\end{align}
where $\x^k_{-i, t}$ is obtained by following joint policy $\pi$ starting at time $t=0$ and $\x^k_{-i, 0} \sim \mu^k_0 \in\mathbb{M}$, $\bar\mu^{N, k}_t =\empMu{\bar\x_t}$ where $\bar\x^k_t =(x, \x^k_{-i, t})$.
For agent $i$ at time $t$, we consider the arbitrary state $x$.
We further define $\mu^{N, k}_t = \empMu{\x^k_t}$, where all agents follow policy $\pi$ starting at time $t=0$ and $\x^k_{0} \sim \mu^k_0 \in\mathbb{M}$.

For brevity, we have defined $w_k = \Gamma^*_\mathbb{M}(\mu_0^k)$ and denote the log-sum-exp operator by $\textrm{LSE}$.
Intuitively, \eqref{eq:dual-rep=finite} formulates the risk-aversion for agent $i$ at each state $x$ for different trajectories of the joint state, arising due to the different realizations of the initial distribution.

Using similar notation, we can re-write~\eqref{eq:dual-rep} as
\begin{align}\label{eq:dual-rep-rewrite}
     \rho_{\Gamma^*_\mathbb{M}}\left( V^\pi_{\mu, t}(x) \right)
     &=\frac{1}{\tau}\textrm{LSE} \Big(\log w_k{-\tau \sum_{u\in\U}\pi_{t}(u|x)\Big(r(x, u, 
     \mu^{k}_t) + \expct{{\sum_{t'=t+1}^Tr_t(x_{t'}, u_{t'}, \mu^{k}_{t'})} \big\vert x_{t} = x, u_{t} =u}{\pi}}\Big)\Big),
\end{align}
where $\mu^k\in\S^*_\mathbb{M}$ constitute the MF-RQE set of flows.
Notice that the term $c_t^{\alpha, i, N}\big(\pi^*_{\textrm{RQE} , i}, \pi^*_{\textrm{RQE}, -i}; ~x\big)$ computes the mean-field flows under identical policy $\pi^*_{\textrm{RQE}}$, with agent $i$ is specifically considered to be at state $x$ at time $t$, following which, $\pi^*_{\textrm{RQE}}$ is again followed.
The term $c_t^{\alpha, i, N}\big((\pi_{t}, \pi^*_{\textrm{RQE}, i, -t}), \pi^*_{\textrm{RQE}, -i}; ~x\big)$ follows the same suit, with an additional deviation by agent $i$ from $x_t=x$ to $x_{t+1}$ since it follows the deviated policy $\pi_t$.
Thus, we can use Lemma~\ref{lmm:finite-deviation} for both these cases and show that the mean-field flows  $\expct{\dtv{\M^{N, k}, \mu^k} \big\vert \mathbf{X}^k_0\sim \mu^k_0}{\pi^*_{\textrm{RQE}}} \leq \Residue$ for all $k=1,\ldots, |\mathbb{M}|$.
Thus, $ \expct{\dtv{\S^N_\mathbb{M}, \S^*_\mathbb{M}}}{\pi^*_{\textrm{RQE}}} =\max_{k=1,\ldots, |\mathbb{M}|}\expct{\dtv{\M^{N, k}, \mu^k} \big\vert \mathbf{X}^k_0\sim \mu^k_0}{\pi^*_{\textrm{RQE}}} \leq \Residue$.
Further, by Corollary~\ref{cor:continuity-cost}, we have continuity of the cost function with respect to the mean-field flows.
Consequently, 
\begin{align*}
      \big\vert c_t^{\pi^{*}_{\textrm{RQE}}, \alpha}(x; \S^*_\mathbb{M}) - c_t^{\alpha, i, N}\big(\pi^*_{\textrm{RQE} , i}, \pi^*_{\textrm{RQE}, -i}; ~x\big) \big\vert \leq  \Residue,
\end{align*} 
and for all $\pi_t\in\Pi_t$
\begin{align*}
      \big\vert c_t^{(\pi_t, \pi^{*}_{\textrm{RQE}, -t}), \alpha}(x; \S^*_\mathbb{M}) - c_t^{\alpha, i, N}\big((\pi_{t}, \pi^*_{\textrm{RQE}, i, -t}), \pi^*_{\textrm{RQE}, -i}; ~x\big) \big\vert \leq \Residue,
\end{align*} 
and the result follows.

\end{proof}

\section{Proof of Theorem~\ref{thm:FPI-convergence}}
\begin{lemma}\label{lmm:prop-lipschitz}
    The operator $\B^{\textrm{RQE}}_\prop(\pi)$ is Lipschitz with some constant $K_\prop$.
\end{lemma}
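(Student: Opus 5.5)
We prove the Lipschitz property flow by flow and time step by time step, then take the maximum. Since $\B^{\textrm{RQE}}_\prop(\pi)$ is the collection $\{\mu^k\}_{k=1}^{|\mathbb{M}|}$ of flows obtained by propagating~\eqref{eqn:mf-dynamics-local} from each $\mu^k_0\in\mathbb{M}$, and since $d_\S$ is the maximum over $k$ and $t$ of $\dtv{\mu^{1,k}_t,\mu^{2,k}_t}$, it suffices to fix one initial condition $\mu_0\in\mathbb{M}$. For that fixed $\mu_0$ we show that the single-flow map $\pi\mapsto\mu$ satisfies $d_\M(\mu^a,\mu^b)\le K\, d_\Pi(\pi^a,\pi^b)$, with $K$ independent of $\mu_0$. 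Taking the maximum over $k$ then gives the claim with $K_\prop=K$.

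For the single-flow bound, let $\mu^a,\mu^b$ be the flows generated by $\pi^a,\pi^b$ from the same $\mu_0$, and set $e_t\triangleq\dtv{\mu^a_t,\mu^b_t}$, so that $e_0=0$. From~\eqref{eqn:mf-dynamics-local},
\begin{align*}
2e_{t+1}\le \sum_{x'}\sum_{x}\sum_{u}\big| f_t(x'|x,u,\mu^a_t)\pi^a_t(u|x)\mu^a_t(x)-f_t(x'|x,u,\mu^b_t)\pi^b_t(u|x)\mu^b_t(x)\big| .
\end{align*}
We split the summand by a telescoping decomposition that changes one factor at a time, as in~\eqref{eq:I1-I2}, giving three terms.
\begin{enumerate}[label=\roman*)]
\item Changing the kernel contributes $\sum_{x,u}\pi^a_t(u|x)\mu^a_t(x)\sum_{x'}|f_t(\cdot|\mu^a_t)-f_t(\cdot|\mu^b_t)|\le L_f e_t$, by Assumption~\ref{assmpt:lipschitiz-model}.
\item Changing the policy contributes $\sum_x\mu^b_t(x)\sum_u|\pi^a_t(u|x)-\pi^b_t(u|x)|\le 2d_\Pi(\pi^a,\pi^b)$, since $\sum_{x'}f_t=1$.
\item Changing the state distribution contributes $\sum_x|\mu^a_t(x)-\mu^b_t(x)|=2e_t$.
\end{enumerate}
Together these yield the recursion
\begin{align*}
e_{t+1}\le \big(1+\tfrac{L_f}{2}\big)e_t + d_\Pi(\pi^a,\pi^b).
\end{align*}

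Unrolling the recursion from $e_0=0$ over the finite horizon gives
\begin{align*}
e_t\le \sum_{s=0}^{t-1}\big(1+\tfrac{L_f}{2}\big)^s d_\Pi(\pi^a,\pi^b).
\end{align*}
Hence $K_\prop=\sum_{s=0}^{T-1}(1+L_f/2)^s$ works uniformly over $t\le T$ and over all $\mu_0\in\mathbb{M}$. Note that the constant depends only on $L_f$ and $T$, not on $\mathbb{M}$.

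The main obstacle is bookkeeping rather than substance. The total-variation factors of $\tfrac12$ must be tracked consistently when we pass between $\norm{\cdot}_1$ and $\mathrm{d}_{\mathrm{TV}}$, and the policy term must be bounded uniformly over states by $d_\Pi$, which the paper defines as a maximum over $x$ and $t$. We also need to check that the metric on $2^\M$ used in the statement is $d_\S$ from Corollary~\ref{cor:continuity-cost}; the max-over-$k$ reduction relies on this.
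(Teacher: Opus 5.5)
Your proof is correct and follows the paper's approach. Like the paper, you reduce to a single initial condition $\mu_0^k\in\mathbb{M}$ and take the maximum over $k$ in $d_\S$. The only difference is in the single-flow bound $d_\M(\B_\prop(\pi_a),\B_\prop(\pi_b))\le K_\prop\, d_\Pi(\pi_a,\pi_b)$: the paper cites it from Lemma B.7.4 of Cui and Koeppl, whereas you derive it with the one-factor-at-a-time telescoping of~\eqref{eq:I1-I2}, which also gives the explicit constant $K_\prop=\sum_{s=0}^{T-1}(1+L_f/2)^s$.
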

\begin{proof}
    From Lemma B.7.4 in~\citep{cui2021approximately}, for any single initial distribution, i.e., $\B_\prop(\pi)$ is Lipschitz with constant $K_\prop$.
    This implies,
    \begin{align}\label{eq:single-flow-Lipschitz}
        d_\M(\B_\prop(\pi_a), \B_\prop(\pi_b)) \leq K_\prop d_\Pi(\pi_a, \pi_b).
    \end{align}
    Consequently, given a set of initial distributions $\mathbb{M}$, 
    \begin{align*}
        d_{\S}(\B^{\textrm{RQE}}_\prop(\pi_a),\B^{\textrm{RQE}}_\prop(\pi_b)) &=\max_{k \in\{1, \ldots, |\mathbb{M}|\}}\dtv{\mu^{a,k}, \mu^{b,k}} \\
        &= \max_{k \in\{1, \ldots, |\mathbb{M}|\}}d_\S(\B_\prop(\pi_a), \B_\prop(\pi_b))_{\mu_0 = \mu^k_0}\\
        &\leq K_\prop d_\Pi(\pi_a, \pi_b),
    \end{align*}
    as both $\mu^{a,k}, \mu^{b,k}$ originate from the same initial distribution $\mu^k_0$ and we can apply the result from~\eqref{eq:single-flow-Lipschitz}.
\end{proof}

\begin{lemma}\label{lmm:strong-convexity}
    Consider a function $f(\pi; \theta)$ that is $(\alpha\mu)$-strongly convex in $\pi$, where $\pi$ and $\theta$ are equipped with metrics $d_\Pi$ and $d_\theta$ respectively, such that it satisfies
    \begin{align}\label{eq:grad-lipschitz}
        \|\nabla_\pi f(\pi; \theta_a) - \nabla_\pi f(\pi; \theta_b)\| \leq \bar{C} d_\theta (\theta_a, \theta_b), \quad \bar C>0.
    \end{align}
    Further, define the (unique) optimizers $\pi^*_a = \arg\min f(\pi; \theta_a)$ and $\pi^*_b = \arg\min f(\pi; \theta_b)$.
    It follows that,
    \begin{align*}
        \dtv{\pi^*_a, \pi^*_b}\leq \frac{C}{\alpha\mu}d_\theta (\theta_a, \theta_b), \quad C>0.
    \end{align*} 
\end{lemma}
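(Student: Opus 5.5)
The plan is the standard strong-monotonicity argument for parametric minimizers. Over the simplex $\P(\U)$, the minimizer $\pi^*_a$ of $f(\cdot;\theta_a)$ satisfies the first-order variational inequality
\begin{align*}
\langle \nabla_\pi f(\pi^*_a;\theta_a), \pi - \pi^*_a\rangle \geq 0 \quad \text{for all feasible } \pi,
\end{align*}
and similarly for $\pi^*_b$ with $\theta_b$. I will take $\pi=\pi^*_b$ in the first inequality and $\pi=\pi^*_a$ in the second, then add them. This gives
\begin{align*}
\langle \nabla_\pi f(\pi^*_a;\theta_a) - \nabla_\pi f(\pi^*_b;\theta_b), \pi^*_a - \pi^*_b\rangle \leq 0 .
\end{align*}

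Next I will insert and subtract $\nabla_\pi f(\pi^*_b;\theta_a)$. The first difference, $\nabla_\pi f(\pi^*_a;\theta_a)-\nabla_\pi f(\pi^*_b;\theta_a)$, is controlled by $(\alpha\mu)$-strong convexity of $f(\cdot;\theta_a)$, i.e.\ strong monotonicity of its gradient:
\begin{align*}
\langle \nabla_\pi f(\pi^*_a;\theta_a)-\nabla_\pi f(\pi^*_b;\theta_a), \pi^*_a-\pi^*_b\rangle \geq \alpha\mu\|\pi^*_a-\pi^*_b\|^2 .
\end{align*}
The remaining term is bounded by Cauchy--Schwarz together with \eqref{eq:grad-lipschitz}, giving at most $\bar C\, d_\theta(\theta_a,\theta_b)\,\|\pi^*_a-\pi^*_b\|$. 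Combining the two yields
\begin{align*}
\alpha\mu \|\pi^*_a-\pi^*_b\|^2 \leq \bar C\, d_\theta(\theta_a,\theta_b)\,\|\pi^*_a-\pi^*_b\| .
\end{align*}
Dividing through (the case $\pi^*_a=\pi^*_b$ is trivial) gives $\|\pi^*_a-\pi^*_b\|\leq \frac{\bar C}{\alpha\mu} d_\theta(\theta_a,\theta_b)$.

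The last step converts this to total variation. I take the norm to be Euclidean on the finite simplex and use $\dtv{\pi^*_a,\pi^*_b}=\frac12\|\pi^*_a-\pi^*_b\|_1\leq \frac{\sqrt{|\U|}}{2}\|\pi^*_a-\pi^*_b\|_2$. This produces the claim with $C=\bar C\sqrt{|\U|}/2$, which explains why the statement uses a generic constant $C$ rather than $\bar C$. If the paper measures strong convexity in the $\ell_1$ norm instead, the dual-norm pairing $\ell_\infty/\ell_1$ in the Cauchy--Schwarz step gives the same conclusion with $C=\bar C/2$.

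I expect the main obstacle to be technical and to sit in the first step: regularizers such as entropy or log-barrier are not differentiable on the simplex boundary. I would handle this by noting that for these regularizers the minimizers lie in the relative interior, or else by replacing gradients with subgradients and invoking the optimality condition $0\in\partial f + N_{\P(\U)}$. Strong monotonicity of the subdifferential then gives the same inequality. The Lipschitz condition \eqref{eq:grad-lipschitz} must then be read for the smooth, $\theta$-dependent part of $f$, which is the log-sum-exp term in \eqref{eq:complete-optimization}. The $\theta$-independent regularizer cancels in the difference $\nabla_\pi f(\pi^*_b;\theta_a)-\nabla_\pi f(\pi^*_b;\theta_b)$, so its nonsmoothness never enters that bound.
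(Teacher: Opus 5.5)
Your proof is correct and is essentially the paper's argument. Both use strong monotonicity of $\nabla_\pi f(\cdot;\theta_a)$, insert and subtract $\nabla_\pi f(\pi^*_b;\theta_a)$, apply Cauchy--Schwarz with the $\theta$-Lipschitz gradient bound, and finish with $\dtv{\pi^*_a,\pi^*_b}\leq \frac{\sqrt{|\U|}}{2}\|\pi^*_a-\pi^*_b\|_2$, which gives the same constant $C=\bar C\sqrt{|\U|}/2$.

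The one difference is in the optimality condition. The paper sets $\nabla_\pi f(\pi^*_a;\theta_a)=\nabla_\pi f(\pi^*_b;\theta_b)=0$, but on the simplex the gradient at an interior minimizer is in general only a multiple of the all-ones vector. Your variational inequality is the correct first-order condition there, and that multiple is annihilated by the inner product with $\pi^*_a-\pi^*_b$. So your version is the rigorous form of the same proof.
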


\begin{proof}
    Strong convexity implies that
    \begin{align}\label{eq:strong-grad-convexity}
        (\pi_1 - \pi_2)\t\big(\nabla_\pi f(\pi_1; \theta) - \nabla_\pi f(\pi_2; \theta)\big) \geq \alpha\mu \|\pi_1-\pi_2\|_2^2.
    \end{align}
    From the optimality {of $\pi^*_a$ and $\pi^*_b$} 
    we have $\nabla_\pi f(\pi^*_a; \theta_a)=0$ and $\nabla_\pi f(\pi^*_b; \theta_b)=0$, therefore $\nabla_\pi f(\pi^*_a; \theta_a) - \nabla_\pi f(\pi^*_b; \theta_b) = 0$.
    Adding and subtracting $\nabla_\pi f(\pi^*_b; \theta_a)$ gives
    \begin{align*}
        \underbrace{\nabla_\pi f(\pi^*_a; \theta_a) - \nabla_\pi f(\pi^*_b; \theta_a)}_{I_1}  + \underbrace{\nabla_\pi f(\pi^*_b; \theta_a) - \nabla_\pi f(\pi^*_b; \theta_b)}_{I_2} = 0.
    \end{align*}
    It follows that,
    \begin{align*}
        I_1 &= -I_2\\
        (\pi^*_a-\pi^*_b)\t I_1 &= -(\pi^*_a-\pi^*_b)\t I_2.
    \end{align*}
    Applying~\eqref{eq:strong-grad-convexity} on the LHS term gives $(\pi^*_a-\pi^*_b)\t I_1 \geq \alpha\mu d_\Pi(\pi^*_a, \pi^*_b)^2$.
    This implies $(\pi^*_a-\pi^*_b)\t I_1 = |(\pi^*_a-\pi^*_b)\t I_1| = |(\pi^*_a-\pi^*_b)\t I_2|$.
    Using Cauchy-Schwartz and~\eqref{eq:grad-lipschitz} for $|(\pi^*_a-\pi^*_b)\t I_2|$, it follows:
    \begin{align*}
       \alpha\mu \|\pi^*_a-\pi^*_b\|_2^2 \leq|(\pi^*_a-\pi^*_b)\t I_2| &\leq  \|\nabla_\pi f(\pi^*_b; \theta_a) - \nabla_\pi f(\pi^*_b; \theta_b)\|\|\pi_1-\pi_2\|_2\\
        &\leq  C d_\theta (\theta_a, \theta_b)\|\pi^*_a-\pi^*_b\|_2,
    \end{align*}
    and, consequently, from the two extreme inequalities, we have, $\alpha\mu \|\pi^*_a-\pi^*_b\|_2^2 \leq  C d_\theta (\theta_a, \theta_b).$
    Using the relation $\dtv{\pi^*_a, \pi^*_b}\leq \frac{\sqrt{n}}{2} \|\pi^*_a-\pi^*_b\|_2 $, where $n$ is the dimension of $\pi$ (in order case, $|\U|$) and defining $C={\bar{C}\sqrt{n}}/{2}$, we have the result.

\end{proof}

\begin{lemma}\label{lmm:opt-lipschitz}
    The operator $\B^{\textrm{RQE}}_\opt(\S_\mathbb{M})$ is Lipschitz with constant $\propto {1}/{\alpha}$.
\end{lemma}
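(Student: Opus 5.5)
We prove Lemma~\ref{lmm:opt-lipschitz} by backward induction over $t=T-1,\ldots,0$, applying Lemma~\ref{lmm:strong-convexity} at each stage and state. Fix two sets of flows $\S^a_\mathbb{M},\S^b_\mathbb{M}$ and let $\pi^a=\B^{\textrm{RQE}}_\opt(\S^a_\mathbb{M})$ and $\pi^b=\B^{\textrm{RQE}}_\opt(\S^b_\mathbb{M})$. These are well defined and unique by the argument in the proof of Proposition~\ref{prop:existence-MF-RQE}. For each $(t,x)$, $\pi^a_t(\cdot|x)$ minimizes
\[
f(p;\theta)=\tfrac{1}{\tau}\log\Big(\sum_k w_k e^{-\tau p\t Q_k}\Big)+\alpha\nu(p),
\]
where $\theta=(Q_1,\ldots,Q_{|\mathbb{M}|})$ collects the vectors $Q^{\pi^a}_{\mu^{a,k},t}(x,\cdot)$. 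The objective for $\pi^b_t(\cdot|x)$ has the same form with $\theta^b$ built from $\pi^b$ and $\S^b_\mathbb{M}$. Since log-sum-exp composed with a linear map is convex, Assumption~\ref{assmpt: strongly-convex-nu} makes $f$ $(\alpha m)$-strongly convex in $p$.

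The first step is to verify the gradient condition~\eqref{eq:grad-lipschitz}. We have
\[
\nabla_p f=-\sum_k \sigma_k(p,\theta)\,Q_k+\alpha\nabla\nu(p),
\]
where $\sigma_k$ are softmax weights of $\log w_k-\tau p\t Q_k$. The regularizer term does not depend on $\theta$, so it cancels in the difference. The $Q_k$ are uniformly bounded by $B_Q$, and the softmax is Lipschitz (with constant $\propto\tau$) in its arguments, which change by at most $\|p\|\,\|Q_k^a-Q_k^b\|$. Hence $\|\nabla_p f(p;\theta^a)-\nabla_p f(p;\theta^b)\|\le \bar C\max_k\|Q^a_k-Q^b_k\|_\infty$ with $\bar C$ depending only on $B_Q,\tau,|\U|$, and in particular not on $\alpha$.

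\textbf{Main obstacle.} The optimizer lies in the simplex, so $\nabla f=0$ need not hold; Lemma~\ref{lmm:strong-convexity} is stated for interior optima. For entropy or log-barrier regularizers the minimizer is interior, and on the affine hull of the simplex the projected gradient vanishes, which is enough. Otherwise, replace the stationarity conditions with the variational inequalities $\langle\nabla f(\pi^*_a;\theta_a),\pi^*_b-\pi^*_a\rangle\ge0$ (and symmetrically), then add them; the strong-monotonicity argument goes through unchanged. This gives
\[
\dtv{\pi^a_t(\cdot|x),\pi^b_t(\cdot|x)}\le \frac{C}{\alpha m}\max_k\|Q^{\pi^a}_{\mu^{a,k},t}-Q^{\pi^b}_{\mu^{b,k},t}\|_\infty .
\]

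\textbf{Induction.} We next bound the $Q$-difference in terms of $\epsilon:=d_\S(\S^a_\mathbb{M},\S^b_\mathbb{M})$. At $t=T-1$, $Q$ equals $r_{T-1}$, so by Assumption~\ref{assmpt:lipschitiz-model} the difference is at most $L_r\epsilon$. For earlier $t$, split
\[
Q^{\pi^a}_{\mu^{a},t}-Q^{\pi^b}_{\mu^{b},t}
\]
into a flow part and a continuation part. The flow part, from the reward and transition at time $t$ and controlled by Lemma~\ref{lmm:continuity-Q} and Assumption~\ref{assmpt:lipschitiz-model}, contributes $(L_r+L_fB_Q)\epsilon$. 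The continuation part is the value difference at $t+1$, bounded as in Lemma~\ref{lmm:continuity-Q} by $B_Q\,\dtv{\pi^a_{t+1},\pi^b_{t+1}}$ plus the $Q$-difference at $t+1$. Define $\Delta_t=\max_{x,k}\|Q^{\pi^a}_{\mu^{a,k},t}(x,\cdot)-Q^{\pi^b}_{\mu^{b,k},t}(x,\cdot)\|_\infty$ and $\delta_t=\max_x\dtv{\pi^a_t(\cdot|x),\pi^b_t(\cdot|x)}$. Then
\[
\delta_t\le \frac{C}{\alpha m}\Delta_t,\qquad \Delta_t\le c_1\epsilon+(2B_Q+1)\,\delta_{t+1}+\Delta_{t+1}.
\]
Unrolling over the finite horizon gives $\Delta_t\le K(\alpha)\epsilon$, where $K(\alpha)$ is bounded uniformly for $\alpha\ge1$ (all factors $C/(\alpha m)$ are at most $C/m$). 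Therefore $d_\Pi(\pi^a,\pi^b)=\max_t\delta_t\le \frac{C\,K}{\alpha m}\,\epsilon$, i.e., $\B^{\textrm{RQE}}_\opt$ is Lipschitz with constant $K_\opt\propto 1/\alpha$. Combined with Lemma~\ref{lmm:prop-lipschitz}, this is what Theorem~\ref{thm:FPI-convergence} needs: for large $\alpha$, $K_\opt K_\prop<1$.
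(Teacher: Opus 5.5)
Your proposal is correct and follows essentially the same route as the paper: backward induction over $t$, applying Lemma~\ref{lmm:strong-convexity} at each $(t,x)$, with the $Q$-differences propagated through the bounds of Lemma~\ref{lmm:continuity-Q} and Assumption~\ref{assmpt:lipschitiz-model}. Your version is more careful than the paper's on three points the paper glosses over: the dependence of the softmax weights $\beta_k$ on the flows, minimizers lying on the boundary of the simplex (handled via the variational inequalities), and the $\alpha$-dependence hidden in the accumulated constants $C_t$.
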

\begin{proof}
    Consider two sets of mean-field flows $\S^a_\mathbb{M}$ and $\S^b_\mathbb{M}$, with corresponding optimal policies $\pi^*_a$ and $\pi^*_b$.
    At $t=T-1$, for any state $x$, we have 
    \begin{align*}
     c_{T-1}^{(\pi_{T-1}, \pi^*_{a,-{T-1}}), ~\alpha}(x; \S^a_\mathbb{M})  = \frac{1}{\tau}\textrm{LSE} \Big(\log w_k-\tau \sum_{u\in\U}\pi_{T-1}(u|x)r_{T-1}(x, u, \mu^{a, k}_{T-1}) \Big) + \alpha\nu(\pi_{T-1}(\cdot|x)),
    \end{align*}
    where the notation follows from the proof of Theorem~\ref{thm:MF-RQE-finite-population}.
    Since the LSE term is convex in $\pi$ and $\nu(\pi_{T-1}(\cdot|x))$ is $m$-strongly convex, we have that $c_{T-1}^{(\pi_{T-1}, \pi^*_{a,-{T-1}}), ~\alpha}(x; \S^a_\mathbb{M})$ is $(\alpha m)$-strongly convex.
    This implies, $\pi^*_{a, T-1}(\cdot | x)$ is the unique optimizer of the cost function.
    Define $\beta_k =\frac{w_k \exp(-\tau\pi_t\t(\cdot | x)Q^\pi_{\mu^k, t}(x, \cdot))}{\sum_j w_j \exp(-\tau\pi_t\t(\cdot | x)Q^\pi_{\mu^j, t}(x, \cdot))}$.
    Then,
    \begin{align*}
        \nabla_{\pi_{T-1}}c_{T-1}^{(\pi_{T-1}, \pi^*_{a,-{T-1}}), ~\alpha}(x; \S^a_\mathbb{M}) = \sum_k\beta_k r_{T-1}(x, \cdot, \mu^{a, k}_{T-1}) + \alpha\nabla_\pi \nu(\pi_{T-1}).
    \end{align*}
    From Assumption~\ref{assmpt:lipschitiz-model}, 
    \begin{align*}
        &\|\nabla_{\pi_{T-1}}c_{T-1}^{(\pi_{T-1}, \pi^*_{a,-{T-1}}), ~\alpha}(x; \S^a_\mathbb{M}) -\nabla_{\pi_{T-1}}c_{T-1}^{(\pi_{T-1}, \pi^*_{b,-{T-1}}), ~\alpha}(x; \S^b_\mathbb{M}) \|\\ &\leq \sum_k\beta_k \|r_{T-1}(x, \cdot, \mu^{a, k}_{T-1}) -r_{T-1}(x, \cdot, \mu^{b, k}_{T-1})\|\\
        &\leq \sum_k \beta_k L_r |\U|\dtv{\mu^{a, k}_{T-1}, \mu^{b, k}_{T-1}}\\
        &\leq C_{T-1}d_\S(\S^a_\mathbb{M}, \S^b_\mathbb{M}),
    \end{align*}
    where $\sum_k\beta_k = 1$.
    Now using the above equation with Lemma~\ref{lmm:strong-convexity}, where $\theta_a = (\S^a_{\mathbb{M}}, \pi^*_{a,-{T-1}})$ and $\theta_b = (\S^b_{\mathbb{M}}, \pi^*_{b,-{T-1}})$,
    \begin{align*}
         \dtv{\pi^*_{a, T-1}(\cdot|x), \pi^*_{b, T-1}(\cdot|x)}\leq \frac{C_{T-1}}{\alpha m}d_\S(\S^a_\mathbb{M}, \S^b_\mathbb{M}),
    \end{align*}
    for all states $x\in\X$.
    Consequently, 
    \begin{align*}
        d_\Pi(\pi^*_{a, T-1}, \pi^*_{b, T-1}) &\triangleq \max_{x\in\X} \dtv{\pi^*_{a, T-1}(\cdot|x), \pi^*_{b, T-1}(\cdot|x)}\\&\leq \frac{C_{T-1}}{\alpha\mu}d_\S(\S^a_\mathbb{M}, \S^b_\mathbb{M}).
    \end{align*}
    
    For $t=T-2$, we similarly have,
    \begin{align*}
        &\|\nabla_{\pi_{T-2}}c_{T-1}^{(\pi_{T-2}, \pi^*_{a,-{T-2}}), ~\alpha}(x; \S^a_\mathbb{M}) -\nabla_{\pi_{T-2}}c_{T-2}^{(\pi_{T-2}, \pi^*_{b,-{T-2}}), ~\alpha}(x; \S^b_\mathbb{M}) \|\\ &\leq \sum_k\beta_k \|Q^{\pi^*_a}_{\mu^{a,k}, T-2}(x, \cdot))- Q^{\pi^*_b}_{\mu^{b,k}, T-2}(x, \cdot))\|\\
        &\leq \bar{C}_{T-1}d_\S(\S^a_\mathbb{M}, \S^b_\mathbb{M}) + |B_r|d_\Pi(\pi^*_{a, T-1}, \pi^*_{b, T-1})\\
        &\leq \bar{C}_{T-1}d_\S(\S^a_\mathbb{M}, \S^b_\mathbb{M}) + \frac{C_{T-1}}{\alpha m}d_\S(\S^a_\mathbb{M}, \S^b_\mathbb{M})\\
        &\triangleq C_{T-2}d_\S(\S^a_\mathbb{M}, \S^b_\mathbb{M}),
    \end{align*}
    where we have used continuity of the Q-function derived in Lemma~\ref{lmm:continuity-Q}.
    Once again, following Lemma 6, we can repeat the previous steps and show that
    \begin{align*}
        d_\Pi(\pi^*_{a, T-2}, \pi^*_{b, T-2}) &\triangleq \max_{x\in\X} \dtv{\pi^*_{a, T-2}(\cdot|x), \pi^*_{b, T-2}(\cdot|x)}\\&\leq \frac{C_{T-2}}{\alpha m}d_\S(\S^a_\mathbb{M}, \S^b_\mathbb{M}).
    \end{align*}

    One can iteratively follow these steps to show that the result holds for all $t\in\{0, \ldots, T-1\}$ and by defining $C = \max_{t
< T} C_t$,
we have,
\begin{align*}
        d_\Pi(\pi^*_{a, t}, \pi^*_{b, t}) &\leq \frac{C}{\alpha m}d_\S(\S^a_\mathbb{M}, \S^b_\mathbb{M}).
    \end{align*}
    This implies $d_\Pi(\pi^*_a, \pi^*_b) = d_\Pi(\B^{\textrm{RQE}}_\opt(\S^a_\mathbb{M}), \B^{\textrm{RQE}}_\opt(\S^b_\mathbb{M})) = \max_{t<T} d_\Pi(\pi^*_{a, t}, \pi^*_{b, t})\leq \frac{C}{\alpha m}d_\S(\S^a_\mathbb{M}, \S^b_\mathbb{M})$.
\end{proof}

\begin{proof}
We prove the statement by showing that $\Phi$ is a contraction mapping and then apply Banach's fixed point theorem.
Let $\pi, \pi'\in\Pi$ be arbitrary, then
\begin{align*}
    d_\Pi(\Phi(\pi), \Phi(\pi')) &= d_\Pi\Big(\B^{\textrm{RQE}}_\opt\circ\B^{\textrm{RQE}}_\prop(\pi), \B^{\textrm{RQE}}_\opt\circ\B^{\textrm{RQE}}_\prop(\pi')\Big)\\
    &\leq \frac{C}{\alpha m}d_\S(\B^{\textrm{RQE}}_\prop(\pi), \B^{\textrm{RQE}}_\prop(\pi'))\\
    &\leq \frac{C K_\prop}{\alpha m}d_\Pi(\pi, \pi')\\
    &= \kappa d_\Pi(\pi, \pi'),
\end{align*}
where we use, in order, Lemmas~\ref{lmm:opt-lipschitz} and~\ref{lmm:prop-lipschitz}.
Thus for sufficiently large $\alpha$, $\frac{C K_\prop}{\alpha m} =\kappa < 1$, and $\Phi$ is a contraction mapping.
Moreover, the existence of a fixed point follows from Proposition~\ref{prop:existence-MF-RQE}.
Then, by Banach's fixed point theorem, FPI converges.
\end{proof}

\section{Proof of Theorem~\ref{thm:Fictitious-Play-convergence}}
\begin{proof}
   We view the problem as the evolution of a discrete time dynamical system.
   Specifically, define states $Y_1$ and $Y_2$ that follow,
   \begin{align*}
       Y_1(k+1) &= \Phi(Y_2(k)), \\
       Y_2(k+1) &= \beta Y_2(k) + (1-\beta) Y_1(k).
   \end{align*}
   Solving for the equilibrium conditions, we get $Y^*_1 = Y^*_2$, which is exactly the definition of the MF-RQE.
   Thus, all that is left to show is that the equilibrium is attained as $k\to\infty$.
   Define $Y(k) = [Y_1(k), Y_2(k)]$ and the associated norm $\|Y\| \triangleq \max\{\|Y_1\|_\Pi, \|Y_2\|_\Pi\}$.
   Now consider $Y(k) = [Y_1(k), Y_2(k)]$ and $\bar Y(k) = [\bar Y_1(k), \bar Y_2(k)]$.
   Then $\|Y(k+1) - \bar{Y}(k+1)\| = \max\{\|Y_1(k+1) -\bar Y_1(k+1)\|_\Pi, \|Y_2(k+1) -\bar Y_2(k+1)\|_\Pi\} $.
   We have\footnote{%
   Recall that $ \|Y_1(k+1) -\bar Y_1(k+1)\|_\Pi = d_\Pi(Y_1(k+1),\bar Y_1(k+1))$.} that
   \begin{align*}
       \|Y_1(k+1) -\bar Y_1(k+1)\|_\Pi &= \|\Phi(Y_2(k))-\Phi(\bar Y_2(k))\|_\Pi\\
       &<\kappa\|Y_2(k)-\bar Y_2(k)\|_\Pi\\
       &\leq \kappa\|Y(k) -\bar Y(k)\|,
   \end{align*}
   where the second inequality holds for large enough $\alpha$, as shown in Theorem~\ref{thm:FPI-convergence}.
   We also have, 
   \begin{align*}
       \|Y_2(k+1) -\bar Y_2(k+1)\|_\Pi &= \|\beta Y_2(k) + (1-\beta)Y_1(k)-\beta\bar Y_2(k) - (1-\beta)\bar Y_1(k)\|_\Pi\\
       &\leq \beta\|Y_2(k)-\bar Y_2(k)\|_\Pi + (1-\beta)\|Y_1(k)-\bar Y_1(k)\|_\Pi\\
       &\leq \max\{ \|Y_2(k)-\bar Y_2(k)\|_\Pi , \|Y_1(k)-\bar Y_1(k)\|_\Pi\}\\
       &=\|Y(k) -\bar Y(k)\|.
   \end{align*}
   Thus, $\|Y(k+1) - \bar{Y}(k+1)\| = \max\{\|Y_1(k+1) -\bar Y_1(k+1)\|_\Pi, \|Y_2(k+1) -\bar Y_2(k+1)\|_\Pi\} \leq \|Y(k) -\bar Y(k)\|$.

   Following the similar procedure for $k+2$, we can show
   \begin{align*}
       \|Y_1(k+2) -\bar Y_1(k+2)\|_\Pi &= \|\Phi(Y_2(k+1))-\Phi(\bar Y_2(k+1))\|_\Pi\\
       &<\kappa\|Y_2(k+1)-\bar Y_2(k+1)\|_\Pi\\
       &< \kappa^2\|Y(k) -\bar Y(k)\|,
   \end{align*}
   and 
   \begin{align*}
       \|Y_2(k+2) -\bar Y_2(k+2)\|_\Pi &= \|\beta Y_2(k+1) + (1-\beta)Y_1(k+1)-\beta\bar Y_2(k+1) - (1-\beta)\bar Y_1(k+1)\|_\Pi\\
       &\leq \beta\|Y_2(k+1)-\bar Y_2(k+1)\|_\Pi + (1-\beta)\|Y_1(k+1)-\bar Y_1(k+1)\|_\Pi\\
       &\leq \beta \|Y(k) -\bar Y(k)\| + (1-\beta) \kappa\|Y(k) -\bar Y(k)\| \\
       &<\|Y(k) -\bar Y(k)\|,
   \end{align*}
   since $\kappa<1$.
   It follows that $\|Y(k+2) - \bar Y(k+2)\| < \|Y(k) - \bar Y(k)\|$. 
   For $\bar{Y} = [Y^*_1, Y_2^*]$, we have established convergence the equilibrium as $k\to\infty$.
   Hence, for $Y_2 = \bar\pi$ and $Y_1=\pi$, and the result follows.
\end{proof}

\end{document}